\providecommand{\tabularnewline}{\\}
\def\RSthmtxt{theorem~}\newref{thm}{name = \RSthmtxt}}
\def\RSlemtxt{lemma~}\newref{lem}{name = \RSlemtxt}}
  \theoremstyle{definition}
  \newtheorem{defn}{\protect\definitionname}
  \theoremstyle{plain}
  \newtheorem{lem}{\protect\lemmaname}
 \theoremstyle{definition}
  \newtheorem{example}{\protect\examplename}
  \theoremstyle{remark}
  \newtheorem{rem}{\protect\remarkname}
  \theoremstyle{definition}
  \newtheorem{problem}{\protect\problemname}
  \theoremstyle{plain}
  \newtheorem{prop}{\protect\propositionname}
\theoremstyle{plain}
\newtheorem{thm}{\protect\theoremname}
\newcommand{\person}[2]{#1\,{\footnotesize<#2>\rm}}
  \newcommand{\mythanks}{
  \protect\thanks{\person{Andrea\,Censi}{censi@mit.edu} is with the
  Laboratory for Information and Decision Systems (LIDS)
  at the Massachusetts Institute of Technology. 
}}
\newcommand{\tableColors}{\rowcolors{2}{green!4}{blue!4}}
\newtheoremstyle{plain}
  {}
  {}
  {}
  {}
  {\bfseries}
  {.}
  { }
  {}
\newtheoremstyle{remark}
  {}
  {}
  {}
  {}
  {\bfseries}
  {.}
  { }
  {}
\definecolor{symbol-highlight}{rgb}{0,0.6,0}
\newcommand{\idFunc}{\aword{Id}}
\newcommand{\reals}{\mathbb{R}}
\newcommand{\nonNegReals}{\reals^{+}_{\bullet}}
\newcommand{\etal}{{et\,al.}\xspace}
\newcommand{\eg}{{e.g.},\xspace}
\newcommand{\aword}[1]{\mathsf{#1}}
\newcommand{\vmath}[1]{\aword{#1}}
\newcommand{\subto}{\text{s.t.}} 
\newcommand{\with}{\text{using}}
\newcommand{\pset}{\mathscr{P}} 
\DeclareMathOperator*{\Min}{Min}
\DeclareMathOperator*{\Max}{Max}
\newcommand{\posleq}{\preceq}
\newcommand{\posgeq}{\succeq}
\newcommand{\posA}{\mathcal{P}} 
\newcommand{\posAleq}{\mathrel{{\posleq_\posA}}}
\newcommand{\posB}{\mathcal{Q}} 
\newcommand{\posBleq}{\mathrel{{\posleq_\posB}}}
\newcommand{\lfp}{\vmath{lfp}}
\newcommand{\CPO}{\textsc{CPO}\xspace}
\newcommand{\DCPO}{\textsc{DCPO}\xspace}
\newcommand{\antichains}{\vmath{A}}
\newcommand{\upsets}{\vmath{U}}
\newcommand{\upit}{{\uparrow\,}}
\newcommand{\posetwidth}{\vmath{width}}
\newcommand{\posetheight}{\vmath{height}}
\newcommand{\ftor}{{h}}
\newcommand{\Rcomp}{\overline{\mathbb{R}}_+}
\newcommand{\funsp}{\mathscr{F}} 
\newcommand{\funleq}{\posleq_{\funsp}} 
\newcommand{\fun}{\vmath{f}}
\newcommand{\imp}{\vmath{i}} 
\newcommand{\impsp}{\mathscr{I}} 
\newcommand{\exc}{\vmath{exec}} 
\newcommand{\eval}{\vmath{eval}}
\newcommand{\res}{\vmath{r}} 
\newcommand{\resleq}{\posleq_{\ressp}}
\newcommand{\ressp}{\mathscr{R}}
\newcommand{\dprob}{\vmath{dp}}
\newcommand{\dpseries}{\vmath{series}}
\newcommand{\dppar}{\vmath{par}}
\newcommand{\dploop}{\vmath{loop}}
\newcommand{\dploopb}{\vmath{loopb}}
\newcommand{\resMin}{{\Min_{\resleq}}}
\newcommand{\unconnectedfun}{\mathsf{UF}}
\newcommand{\unconnectedres}{\mathsf{UR}}
\newcommand{\Aressp}{{\antichains\ressp}}
\newcommand{\acprod}{\mathbin{\boldsymbol{\times}}} 
\newcommand{\oploop}{\dagger}
\newcommand{\opseries}{\mathbin{\varocircle}}
\newcommand{\oppar}{\mathbin{\varotimes}}
\newcommand{\opcoprod}{\mathbin{\varovee}}
\newcommand{\colR}{\color[rgb]{0.555789,0.000000,0.000000}}
\newcommand{\colF}{\color[rgb]{0.094869,0.500000,0.000000}}
\newcommand{\colH}{\color[rgb]{0.000000,0.400000,1.000000}}
\newcommand{\colI}{\color[RGB]{214,120,5}}
\newcommand{\R}[1]{{\colR #1}}
\newcommand{\F}[1]{{\colF #1}}
\newcommand{\I}[1]{{\colI #1}}
\newcommand{\One}{\mathbbm{1}} 
\newcommand{\cdpiN}{\mathcal{V}} 
\newcommand{\cdpin}{v} 
\newcommand{\cdpinA}{v_1}
\newcommand{\cdpinB}{v_2}
\newcommand{\cdpiresind}{i}
\newcommand{\cdpifunind}{j}
\newcommand{\cdpiresindA}{i_1}
\newcommand{\cdpifunindB}{j_2}
\newcommand{\dpinumf}{\vmath{nf}}
\newcommand{\dpinumr}{\vmath{nr}}
\let\l@ENGLISH\l@english
\newcommand{\notpresent}[1]{{\color{red}No #1} }
\renewcommand{\notpresent}[1]{}
\renewcommand{\resleq}{\posleq_{\ressp}}
\newcommand*{\vcenteredhbox}[1]{\begingroup
\setbox0=\hbox{#1}\parbox{\wd0}{\box0}\endgroup}
\newcommand{\captionsideleft}[2]{
    \medskip
    \begin{minipage}{1.8cm}{
        \hfill
        \protect\captionof{figure}{#1}}\end{minipage}
    \begin{minipage}{6.6cm}
    
    \vcenteredhbox{{#2}}
    \hfill
    \end{minipage}
    \medskip
}
\newcommand*{\minwidthbox}[2]{
  \makebox[{\ifdim#2<\width\width\else#2\fi}]{#1}
}
\newcommand{\firstpage}{
\AddToShipoutPictureBG*{
  \AtPageUpperLeft{
    \setlength\unitlength{1in}
    \hspace*{\dimexpr0.5\paperwidth\relax}
    \makebox(0,-0.75)[r]{This version: Sep 2016 - Last version at: \url{http://tiny.cc/co-design}}
}}
}
\renewcommand{\firstpage}{}
\let\oldfun\fun     \renewcommand{\fun}{{\colF\oldfun}}
\let\oldres\res     \renewcommand{\res}{{\colR\oldres}}
\let\oldimp\imp     \renewcommand{\imp}{{\colI\oldimp}}
\let\oldexc\exc     \renewcommand{\exc}{{\colF\oldexc}}
\let\oldeval\eval     \renewcommand{\eval}{{\colR\oldeval}}
\let\oldfunsp\funsp \renewcommand{\funsp}{{\colF\oldfunsp}}
\let\oldressp\ressp \renewcommand{\ressp}{{\colR\oldressp}}
\let\oldimpsp\impsp \renewcommand{\impsp}{{\colI\oldimpsp}}
\let\oldftor\ftor \renewcommand{\ftor}{{\colH\oldftor}}
\renewcommand{\Aressp}{{\colR\antichains\ressp}}
\newcommand{\scottcontinuous}{Scott continuous\xspace}
\newcommand{\scottcontinuity}{Scott continuity\xspace}
\renewcommand{\nonNegReals}{\mathbb{R}_+}
\newcommand{\nonNegRealsComp}{\overline{\mathbb{R}}_+}
\newcommand{\triv}{\mathsf{Triv}}
\newtheoremstyle{plain}
  {}
  {}
  {}
  {}
  {\bfseries}
  {.}
  { }
  {}
\newtheoremstyle{remark}
  {}
  {}
  {}
  {}
  {\bfseries}
  {.}
  { }
  {}
  \providecommand{\definitionname}{Definition}
  \providecommand{\examplename}{Example}
  \providecommand{\lemmaname}{Lemma}
  \providecommand{\problemname}{Problem}
  \providecommand{\propositionname}{Proposition}
  \providecommand{\remarkname}{Remark}
\providecommand{\theoremname}{Theorem}
\begin{document}
\firstpage

\title{\vspace{-5mm}\huge A Mathematical Theory of Co-Design}

\author{Andrea Censi\mythanks}
\maketitle
\begin{abstract}
One of the challenges of modern engineering, and robotics in particular,
is designing complex systems, composed of many subsystems, rigorously
and with optimality guarantees. This paper introduces a theory of
co-design that describes ``design problems'', defined as tuples
of ``functionality space'', ``implementation space'', and ``resources
space'', together with a feasibility relation that relates the three
spaces. Design problems can be interconnected together to create ``co-design
problems'', which describe possibly recursive co-design constraints
among subsystems. A co-design problem induces a family of optimization
problems of the type ``find the minimal resources needed to implement
a given functionality''; the solution is an antichain (Pareto front)
of resources. A special class of co-design problems are Monotone Co-Design
Problems (MCDPs), for which functionality and resources are complete
partial orders and the feasibility relation is monotone and Scott
continuous. The induced optimization problems are multi-objective,
nonconvex, nondifferentiable, noncontinuous, and not even defined
on continuous spaces; yet, there exists a complete solution. The antichain
of minimal resources can be characterized as a least fixed point,
and it can be computed using Kleene's algorithm. The computation needed
to solve a co-design problem can be bounded by a function of a graph
property that quantifies the interdependence of the subproblems. These
results make us much more optimistic about the problem of designing
complex systems in a rigorous way.  
\end{abstract}

\section{Introduction}

\IEEEPARstart{O}{ne} of the great engineering challenge of this
century is dealing with the design of ``complex'' systems. A complex
system is complex because its components cannot be decoupled; otherwise,
it would be just a (simple) product of simple systems. The \emph{design}
of a complex system is complicated because of the ``co-design constraints'',
which are the constraints that one subsystem induces on another. This
paper is an attempt towards formalizing and systematically solving
the problem of ``co-design'' of complex systems with recursive design
constraints.

\subsubsection*{Robotic systems as the prototype of complex systems}

Robotics is the prototypical example of a field that includes heterogeneous
multi-domain co-design constraints. The design of a robotic system
involves the choice of physical components, such as the actuators,
the sensors, the power supply, the computing units, the network links,
etc. Not less important is the choice of the software components,
including perception, planning, and control modules. All these components
induce co-design constraints on each other. Each physical component
has SWAP characteristics such as its shape (which must contained somewhere),
weight (which adds to the payload), power (which needs to be provided
by something else), excess heat (which must be dissipated somehow),
etc. Analogously, the software components have similar co-design constraints.
For example, a planner needs a state estimate. An estimator provides
a state estimate, and requires the data from a sensor, which requires
the presence of a sensor, which requires power. Everything costs money
to buy or develop or license. 

What makes system design problems non trivial is that the constraints
might be recursive. This is a form of \emph{feedback} in the problem
of design~(\figref{intro}). For example, a battery provides power,
which is used by actuators to carry the payload. A larger battery
provides more power, but it also increases the payload, so more power
is needed. Extremely interesting trade-offs arise when considering
constraints between the mechanical system and the embodied intelligence.
For control, typically a better state estimate saves energy in the
execution, but requires better sensors (which increase the cost and
the payload) or better computation (which increases the power consumption). 

\begin{figure}[t]
\begin{centering}
\includegraphics[width=8.6cm]{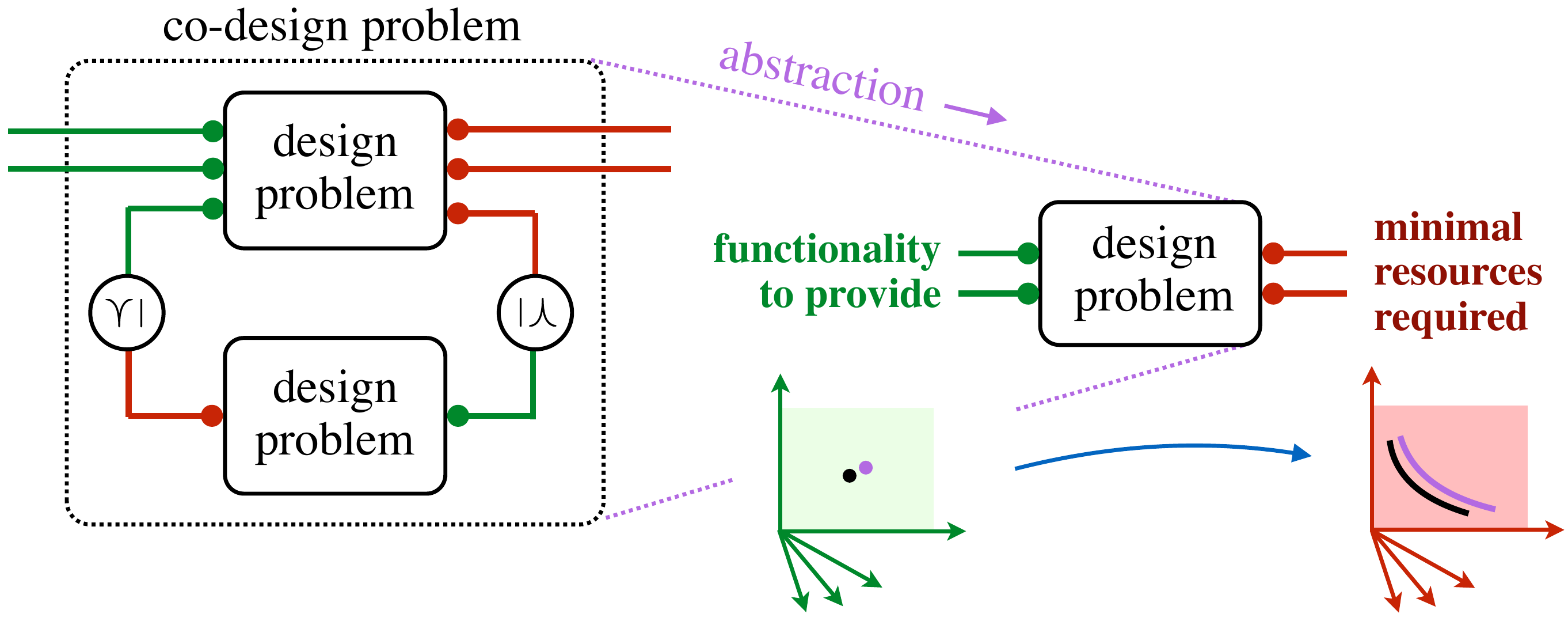}
\par\end{centering}
\caption{\label{fig:intro}A \emph{design problem }is a relation that relates
the implementations available to the \F{functionality provided}
 and the \R{resources required}, both represented as partially ordered
sets. A\emph{ co-design problem }is the interconnection of two or
more design problems. An edge in a co-design diagram like in the figure
represent a \emph{co-design constraint}: the resources required by
the first design problem are a lower bound for the functionality to
be provided by the second. The optimization problem to be solved is:
find the solutions that are minimal in resources usage, given a lower
bound on the functionality to be provided. }
\end{figure}

\subsubsection*{Contribution: A Principled Theory of Co-Design}

This paper describes a theory to deal with arbitrarily complex co-design
problems in a principled way. A~\emph{design problem} is defined
as a tuple of functionality space, implementation space, and a resources
space, plus the two maps that relate an implementation to functionality
provided and resources required. A design problem defines a family
of optimization problems of the type ``find the minimal resources
needed to implement a given functionality''. A \emph{co-design problem}
is an interconnection of design problems according to an arbitrary
graph structure, including feedback connections. Monotone Co-Design
Problems (MCDPs) are the composition of design problems for which
both functionality and resources are complete partial orders, and
the relation between functionality implemented and resources needed
is monotone (order-preserving) and \scottcontinuous. The first main
result in this paper (Theorem~\vref{thm:CDP-monotone}) is that the
class of MCDPs is closed with respect to interconnection. The second
main result (Theorem~\vref{thm:CDP-solvig}) is that there exists
a systematic procedure to solve an MCDP, assuming there is a procedure
to solve the primitive design problems. The solution of an MCDP\textemdash a
Pareto front, or ``antichain'' of minimal resources\textemdash can
be found by solving a least fixed point iteration in the space of
antichains. The complexity of this iteration depends on the structure
of the co-design diagram.

This paper is a generalization of previous work~\cite{censi15monotone},
where the interconnection was limited to one cycle. A conference
version of this work appeared as~\cite{censi15same}.

\subsubsection*{Outline}

\secref{Background} recalls necessary background about partial orders.
\secref{Design-Problems} defines co-design problems. \secref{Optimization}
contains a brief statement of results. \secref{threeoperators} describes
composition operators for design problems. \secref{Decomposition}
shows how any interconnection of design problems can be described
using three composition operators (series, parallel, feedback). \secref{Monotone-Co-Design-Problems}
describes the invariance of a monotonicity property that is preserved
by the composition operators. \secref{Solution-of-Monotone} describes
solution algorithms for MCDPs. \secref{Numerical-examples} shows
numerical examples. \secref{Discussion-of-related} discusses related
work.

\section{Background\label{sec:Background}}

We will use basic facts about order theory. Davey and Priestley~\cite{davey02}
and Roman~\cite{roman08} are possible reference texts.

Let $\left\langle \posA,\posAleq\right\rangle $ be a partially ordered
set (poset), which is a set~$\posA$ together with a partial order~$\posAleq$
(a reflexive, antisymmetric, and transitive relation). The partial
order~``$\posAleq$'' is written as~``$\posleq$'' if the context
is clear. If a poset has a least element, it is called ``bottom''
and it is denoted by~$\bot_{\posA}$. If the poset has a maximum
element, it is called ``top'' and denoted as~$\top_{\posA}$.

\subsubsection*{Chains and antichains}

A \emph{chain} $x\posleq y\posleq z\posleq\dots$ is a subset of a
poset in which all elements are comparable. An \emph{antichain} is
a subset of a poset in which \emph{no} elements are comparable. This
is the mathematical concept that formalizes the idea of ``Pareto
front''.

\begin{defn}[Antichains]
A subset $S\subseteq\posA$ is an antichain iff no elements are comparable:
for~$x,y\in S$, $x\posleq y$ implies~$x=y$. 
\end{defn}
Call~$\antichains\posA$ the set of all antichains in~$\posA$.
By this definition, the empty set is an antichain: $\emptyset\in\antichains\posA$.

\begin{defn}[Width and height of a poset]
\label{def:poset-width-height} $\mathsf{width}(\posA)$ is the maximum
cardinality of an antichain in~$\posA$ and $\mathsf{height}(\posA)$
is the maximum cardinality of a chain in~$\posA$.
\end{defn}

\subsubsection*{Minimal elements}

Uppercase ``$\Min$'' will denote the \emph{minimal} elements of
a set. The minimal elements are the elements that are not dominated
by any other in the set. Lowercase ``$\min$'' denotes\emph{ the
least} element, an element that dominates all others, if it exists.
(If~$\min S$ exists, then~$\Min S=\{\min S\}$.)   

The set of minimal elements of a set are an antichain, so~$\Min$
is a map from the power set $\pset(\posA)$ to the antichains~$\antichains\posA$:
\begin{align*}
\Min\colon\pset(\posA) & \rightarrow\antichains\posA,\\
S & \mapsto\{x\in S:\ (y\in S)\wedge(y\posleq x)\Rightarrow(x=y)\ \}.
\end{align*}

$\Max$ and $\max$ are similarly defined.

\subsubsection*{Upper sets}

An ``upper set'' is a subset of a poset that is closed upward.

\begin{defn}[Upper sets]
A subset $S\subseteq\posA$ is an upper set iff~$x\in S$ and~$x\posleq y$
implies~$y\in S$. 
\end{defn}
Call~$\upsets\posA$ the set of upper sets of~$\posA$. By this
definition, the empty set is an upper set: $\emptyset\in\upsets\posA$.
\begin{lem}
$\upsets\posA$ is a poset itself, with the order given by 
\begin{equation}
A\posleq_{\upsets\posA}B\qquad\equiv\qquad A\supseteq B.\label{eq:up_order}
\end{equation}
\end{lem}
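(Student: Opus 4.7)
The plan is to observe that the statement is essentially a restriction of a general fact: set inclusion (hence its reverse) is a partial order on \emph{any} family of sets, so no special properties of upper sets are actually needed. The only thing to check is that the three defining axioms of a partial order hold for the relation $\posleq_{\upsets\posA}$ defined by reverse inclusion on $\upsets\posA$.

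Concretely, I would verify the three axioms in order. First, for reflexivity, given $A\in\upsets\posA$, the identity $A\supseteq A$ holds by definition of set inclusion, so $A\posleq_{\upsets\posA} A$. Second, for antisymmetry, if $A,B\in\upsets\posA$ satisfy $A\posleq_{\upsets\posA} B$ and $B\posleq_{\upsets\posA} A$, then by~\eqref{eq:up_order} we have $A\supseteq B$ and $B\supseteq A$, which by standard antisymmetry of $\subseteq$ gives $A=B$. Third, for transitivity, if $A\posleq_{\upsets\posA} B$ and $B\posleq_{\upsets\posA} C$, then $A\supseteq B\supseteq C$ and so $A\supseteq C$, i.e.\ $A\posleq_{\upsets\posA} C$.

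I would not expect any genuine obstacle here. The only mildly delicate point worth mentioning explicitly is that the relation is well-defined on $\upsets\posA$: the axioms need only be verified when $A,B,C$ are themselves upper sets, but since the relation $\supseteq$ on the ambient power set $\pset(\posA)$ already satisfies the three axioms, its restriction to the subset $\upsets\posA\subseteq\pset(\posA)$ automatically does too. So the proof is a two-line appeal to the fact that $\langle\pset(\posA),\supseteq\rangle$ is a poset and that being a poset is inherited by arbitrary subsets.
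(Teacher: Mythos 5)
Your proof is correct and is exactly the routine verification the paper omits: the lemma is stated without proof precisely because reverse inclusion is a partial order on any family of sets, and restricting to $\upsets\posA\subseteq\pset(\posA)$ preserves the three axioms. Nothing further is needed.
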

Note in (\ref{eq:up_order}) the use of~``$\supseteq$'' instead
of~``$\subseteq$'', which might seem more natural. This choice
will make things easier later. 

In the poset~$\left\langle \upsets\posA,\posleq_{\upsets\posA}\right\rangle $,
the top is the empty set, and the bottom is the entire poset~$\posA$.

\subsubsection*{Order on antichains}

The upper closure operator ``$\upit$'' maps a subset of a poset
to an upper set.
\begin{defn}[Upper closure]
The operator~$\upit$ maps a subset to the smallest upper set that
includes it: 
\begin{eqnarray*}
\upit\colon\pset(\posA) & \rightarrow & \upsets\posA,\\
S & \mapsto & \{y\in\posA:\exists\,x\in S:x\posleq y\}.
\end{eqnarray*}
\end{defn}

\captionsideleft{\label{fig:antichains_upsets}}{\includegraphics[scale=0.4]{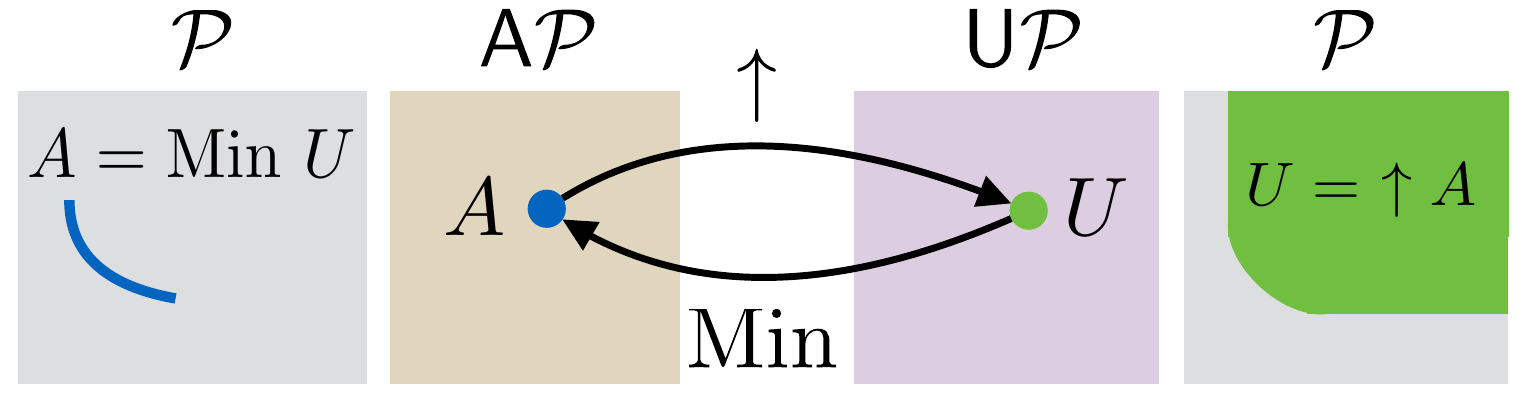}}

By using the upper closure operator, we can define an order on antichains
using the order on the upper sets~(\figref{antichains_upsets}).
\begin{lem}
\label{lem:antichains-are-poset}$\antichains\posA$ is a poset with
the relation~$\posleq_{\antichains\posA}$ defined by
\[
A\posleq_{\antichains\posA}B\qquad\equiv\qquad\upit A\supseteq\upit B.
\]
\end{lem}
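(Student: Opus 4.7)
The plan is to verify the three axioms of a partial order on $\antichains\posA$: reflexivity, transitivity, and antisymmetry. The first two are immediate from the fact that $\supseteq$ is itself a partial order on $\upsets\posA$ (this is already encoded in \eqref{eq:up_order}): the relation $A \posleq_{\antichains\posA} B \equiv \upit A \supseteq \upit B$ is obtained by pulling back $\supseteq$ through the map $\upit\colon \antichains\posA \to \upsets\posA$, so reflexivity and transitivity transfer for free. Thus the only substantive content is antisymmetry, and antisymmetry of $\supseteq$ reduces it to showing that the map $A \mapsto \upit A$ is \emph{injective} when restricted to antichains.

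The key step is therefore the recovery identity
\[
\Min(\upit A) \ = \ A \qquad \text{for every } A \in \antichains\posA,
\]
which shows that an antichain can be reconstructed from its upper closure. I would prove the two inclusions separately. For $A \subseteq \Min(\upit A)$, take $a \in A$ and suppose $y \in \upit A$ with $y \posleq a$; by definition of $\upit A$ there exists $x \in A$ with $x \posleq y \posleq a$, and since $A$ is an antichain this forces $x = a$, hence $y = a$, showing $a$ is minimal in $\upit A$. For the reverse inclusion $\Min(\upit A) \subseteq A$, take $y \in \Min(\upit A)$; then there is some $x \in A \subseteq \upit A$ with $x \posleq y$, and by minimality of $y$ we must have $y = x$, so $y \in A$.

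Given the recovery identity, antisymmetry is immediate: if $A \posleq_{\antichains\posA} B$ and $B \posleq_{\antichains\posA} A$, then $\upit A \supseteq \upit B$ and $\upit B \supseteq \upit A$, hence $\upit A = \upit B$; applying $\Min$ to both sides yields $A = \Min(\upit A) = \Min(\upit B) = B$.

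The main obstacle is the minimality direction of the recovery identity, i.e.\ making sure that no strictly smaller element of $\upit A$ can sit below an element of $A$; this is exactly where the antichain hypothesis on $A$ is used, and it fails in general for arbitrary subsets of $\posA$, which is why the lemma is stated for $\antichains\posA$ rather than $\pset(\posA)$. Everything else is a routine transfer of the poset structure from $\upsets\posA$ along the upper-closure map.
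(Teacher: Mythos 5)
Your proof is correct: reflexivity and transitivity do transfer directly from $\supseteq$ on $\upsets\posA$, and your recovery identity $\Min(\upit A)=A$ correctly isolates the one place the antichain hypothesis is needed, namely injectivity of $\upit$ restricted to $\antichains\posA$, which gives antisymmetry. The paper states this lemma without proof, and your argument is exactly the standard one it leaves implicit, so there is nothing to add.
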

In the poset $\left\langle \antichains\posA,\posleq_{\antichains\posA}\right\rangle $,
the top is the empty set:$\top_{\antichains\posA}=\emptyset.$ If
a bottom for $\posA$ exists, then the bottom for~$\antichains\posA$
is the singleton containing only the bottom for~$\posA$: $\bot_{\antichains\posA}=\{\bot_{\posA}\}.$

\subsubsection*{Monotonicity and fixed points\label{sec:Monotonicity-and-fixed}}

We will use Kleene's theorem, a celebrated result that is used in
disparate fields. It is used in computer science for defining denotational
semantics~(see, e.g.,~\cite{manes86}). It is used in embedded systems
for defining the semantics of models of computation~(see, e.g.,~\cite{lee10}).

\begin{defn}[Directed set]
A set~$S\subseteq\posA$ is \emph{directed} if each pair of elements
in~$S$ has an upper bound: for all~$a,b\in S$, there exists~$c\in S$
such that~$a\posleq c$ and~$b\posleq c$. 
\end{defn}

\begin{defn}[Completeness]
\label{def:cpo}A poset is a \emph{directed complete partial order}
(\DCPO) if each of its directed subsets has a supremum (least of
upper bounds). It is a \emph{complete partial order} (\CPO) if it
also has a bottom.

\end{defn}
\begin{example}[Completion of $\nonNegReals$ to~$\nonNegRealsComp$]
\label{exa:Rcomp}The set of real numbers~$\mathbb{R}$ is not
a \CPO, because it lacks a bottom. The nonnegative reals~$\nonNegReals=\{x\in\reals\mid x\geq0\}$
have a bottom~$\bot=0$, however, they are not a \DCPO because some
of their directed subsets do not have an upper bound. For example,
take~$\nonNegReals$, which is a subset of~$\nonNegReals$. Then~$\nonNegReals$
is directed, because for each~$a,b\in\nonNegReals$, there exists~$c=\max\{a,b\}\in\nonNegReals$
for which~$a\leq c$ and~$b\leq c$. One way to make~$\left\langle \nonNegReals,\leq\right\rangle $
a \CPO is by adding an artificial top element~$\top$, by defining~$\nonNegRealsComp\triangleq\nonNegReals\cup\{\top\},$
and extending the partial order~$\leq$ so that~$a\leq\top$ for
all~$a\in\reals^{+}$. 
\end{example}

Two properties of maps that will be important are monotonicity and
the stronger property of \scottcontinuity.
\begin{defn}[Monotonicity]
\label{def:monotone}A map~$f\colon\posA\rightarrow\posB$ between
two posets is \emph{monotone} iff~$x\posAleq y$ implies~$f(x)\posBleq f(y)$. 
\end{defn}

\begin{defn}[\scottcontinuity]
\label{def:scott}A map~$f:\posA\rightarrow\posB$ between DCPOs
is\textbf{ }\emph{\scottcontinuous{}}\textbf{ }iff for each directed
subset~$D\subseteq\posA$, the image~$f(D)$ is directed, and $f(\sup D)=\sup f(D).$
\end{defn}
\begin{rem}
\scottcontinuity implies monotonicity.
\end{rem}

\begin{rem}
\scottcontinuity does not imply topological continuity. A map from
the CPO $\langle\Rcomp,\leq\rangle$ to itself is \scottcontinuous
iff it is nondecreasing and left-continuous. For example, the ceiling
function $x\mapsto\left\lceil x\right\rceil $~ is \scottcontinuous
(\figref{ceil}).
\end{rem}
\captionsideleft{\label{fig:ceil}}{\includegraphics[scale=0.33]{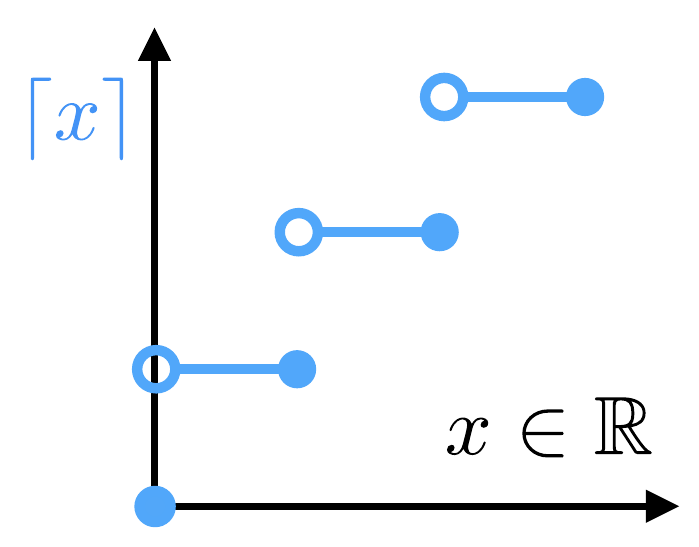}}

\emph{}

A \emph{fixed} \emph{point} of $f:\posA\rightarrow\posA$ is a point~$x$
such that $f(x)=x$. 
\begin{defn}
A \emph{least fixed point} of~$f:\posA\rightarrow\posA$ is the minimum
(if it exists) of the set of fixed points of~$f$:
\begin{equation}
\lfp(f)\,\,\doteq\,\,\min_{\posleq}\,\{x\in\posA\colon f(x)=x\}.\label{eq:lfp-one}
\end{equation}
The equality in \eqref{lfp-one} can be relaxed to ``$\posleq$''.
\end{defn}
The least fixed point need not exist. Monotonicity of the map~$f$
plus completeness is sufficient to ensure existence.
\begin{lem}[{\cite[CPO Fixpoint Theorem II, 8.22]{davey02}}]
\label{lem:CPO-fix-point-2}If~$\posA$ is a \CPO and~$f:\posA\rightarrow\posA$
is monotone, then $\lfp(f)$ exists.
\end{lem}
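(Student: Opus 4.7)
The plan is to construct the least fixed point by transfinite iteration starting from $\bot_\posA$, which exists since $\posA$ is a \CPO. Define an ordinal-indexed family $(x_\alpha)$ in $\posA$ by setting $x_0 \definedas \bot_\posA$, $x_{\alpha+1} \definedas f(x_\alpha)$ at successor ordinals, and $x_\lambda \definedas \sup\{x_\beta : \beta < \lambda\}$ at limit ordinals $\lambda$. For this recursion to be well-defined I must check two things: that the partial sequence $\{x_\beta : \beta < \lambda\}$ is directed (in fact, a chain) so that its supremum exists by the \DCPO property of $\posA$; and that the resulting element extends the monotone pattern.

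First I would prove, by a straightforward transfinite induction, that $\alpha \posleq \beta$ implies $x_\alpha \posleq x_\beta$. The base case is trivial; at successor stages, $x_\alpha \posleq x_\beta$ gives $x_{\alpha+1} = f(x_\alpha) \posleq f(x_\beta) = x_{\beta+1}$ by monotonicity of $f$; and at limit stages the supremum is an upper bound of all earlier terms by construction. This yields the chain property required to apply the \DCPO axiom at every limit stage, making the recursion legitimate and producing a \emph{monotone} transfinite chain in $\posA$.

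Second, I would argue that the chain must eventually stabilize. If $x_{\alpha+1} \neq x_\alpha$ for every ordinal $\alpha$, then $\alpha \mapsto x_\alpha$ would be strictly increasing, hence injective, from the proper class of ordinals into the set $\posA$ --- a contradiction. Therefore there exists an ordinal $\alpha^\star$ with $f(x_{\alpha^\star}) = x_{\alpha^\star}$, giving a fixed point $x^\star \definedas x_{\alpha^\star}$.

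Finally I would verify minimality. Let $y \in \posA$ be any \emph{prefixed} point, i.e. $f(y) \posleq y$ (every fixed point is such). A second transfinite induction shows $x_\alpha \posleq y$ for all $\alpha$: at the base, $\bot_\posA \posleq y$; at successors, $x_{\alpha+1} = f(x_\alpha) \posleq f(y) \posleq y$ by monotonicity; at limits, $y$ is already an upper bound of the earlier $x_\beta$, hence dominates their supremum. Taking $\alpha = \alpha^\star$ gives $x^\star \posleq y$, so $x^\star$ is the minimum of the set of (pre)fixed points of $f$, establishing $\lfp(f) = x^\star$. The one delicate point in the proof is the cardinality/termination argument; everything else is bookkeeping once the monotonicity of the transfinite chain and the \DCPO completeness are lined up.
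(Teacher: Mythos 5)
The paper does not actually prove this lemma: it imports it verbatim from Davey and Priestley (CPO Fixpoint Theorem II, 8.22), so there is no in-paper argument to compare yours against. Judged on its own, your transfinite-iteration proof is correct and is one of the two standard ways to establish the result (the other being the ordinal-free argument of Pataraia, or a Zorn/Bourbaki--Witt style argument). Two remarks. First, the ``straightforward transfinite induction'' that $\alpha\leq\beta$ implies $x_{\alpha}\posleq x_{\beta}$ is slightly under-specified: the successor step you give shows only that $x_{\alpha}\posleq x_{\beta}$ implies $x_{\alpha+1}\posleq x_{\beta+1}$, which does not by itself yield the crucial comparison $x_{\alpha}\posleq x_{\alpha+1}$. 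You need the auxiliary inflationary claim $x_{\alpha}\posleq f(x_{\alpha})$ for all $\alpha$, proved by its own transfinite induction (trivially at $0$ since $x_{0}=\bot_{\posA}$; at successors by monotonicity; at limits because $f(x_{\lambda})$ dominates every $f(x_{\beta})=x_{\beta+1}\posgeq x_{\beta}$ for $\beta<\lambda$, hence dominates the supremum $x_{\lambda}$). This is where monotonicity really enters, and without it the \DCPO axiom cannot be invoked at limit stages, so it is more than bookkeeping, though it is entirely standard. Second, your termination argument via an injection of the proper class of ordinals into the set $\posA$ is fine (it is the Hartogs/Burali--Forti argument and uses replacement); it buys a fully explicit construction of $\lfp(f)$ as the stabilization point of the iteration, at the cost of ordinal machinery that the continuous case (\lemref{kleene-1}) avoids by stopping at $\omega$. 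Your minimality argument via prefixed points is correct and matches the relaxation of the fixed-point equation to ``$\posleq$'' that the paper allows in \eqref{lfp-one}.
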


With the additional assumption of \scottcontinuity, Kleene's algorithm
is a systematic procedure to find the least fixed point.
\begin{lem}[{Kleene's fixed-point theorem \cite[CPO fixpoint theorem I, 8.15]{davey02}}]
\label{lem:kleene-1}Assume $\posA$ is a \CPO, and~$f:\posA\rightarrow\posA$
is \scottcontinuous. Then the least fixed point of~$f$ is the supremum
of the Kleene ascent chain 
\[
\bot\posleq f(\bot)\posleq f(f(\bot))\posleq\cdots\posleq f^{(n)}(\bot)\leq\cdots.
\]
\end{lem}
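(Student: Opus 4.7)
The plan is to verify the four standard ingredients: (i) the Kleene ascent chain is well-defined and directed, (ii) it has a supremum, (iii) that supremum is a fixed point, (iv) it is the least fixed point. Let $K = \{f^{(n)}(\bot) : n \in \natnumbers\}$ denote the ascent chain, where $f^{(0)}(\bot) \definedas \bot$.

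First I would show $K$ is a chain by induction on $n$: the base case $\bot \posleq f(\bot)$ is immediate since $\bot$ is the least element, and the inductive step $f^{(n)}(\bot) \posleq f^{(n+1)}(\bot) \Rightarrow f^{(n+1)}(\bot) \posleq f^{(n+2)}(\bot)$ follows by applying $f$ and invoking monotonicity (which holds because \scottcontinuity{} implies monotonicity, as noted in the remark after \defref{scott}). Every chain is directed, so by completeness of $\posA$ the supremum $s \definedas \sup K$ exists in $\posA$.

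Next I would show $f(s) = s$. By \scottcontinuity, $f(s) = f(\sup K) = \sup f(K)$, where $f(K) = \{f^{(n)}(\bot) : n \geq 1\}$. Since $f(K) = K \setminus \{\bot\}$ and $\bot$ is below every element of $K$, the two sets have the same upper bounds and hence the same supremum: $\sup f(K) = \sup K = s$. Therefore $s$ is a fixed point of $f$.

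Finally I would show $s$ is the least fixed point. Let $y$ be any fixed point of $f$ (indeed, any prefixed point with $f(y) \posleq y$ suffices). By induction, $f^{(n)}(\bot) \posleq y$ for every $n$: the base case uses $\bot \posleq y$, and the inductive step uses monotonicity together with $f(y) \posleq y$, giving $f^{(n+1)}(\bot) \posleq f(y) \posleq y$. Thus $y$ is an upper bound for $K$, so $s = \sup K \posleq y$, which shows $s = \lfp(f)$. No step looks genuinely difficult; the only place demanding care is the supremum identity $\sup f(K) = \sup K$, which must exploit the presence of $\bot$ in $K$, and the confirmation that \defref{scott} applies (i.e., that $K$ is directed, which is why the chain property in step~(i) matters).
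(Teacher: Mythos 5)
Your proof is correct and is the standard argument; the paper itself gives no proof of this lemma, citing it directly as \cite[CPO fixpoint theorem I, 8.15]{davey02}, and your four steps (chain by induction via monotonicity, supremum by completeness, fixed point via \scottcontinuity{}, leastness via induction against any prefixed point) reproduce exactly the textbook proof in that reference. One negligible imprecision: the identity $f(K)=K\setminus\{\bot\}$ fails in the degenerate case $f(\bot)=\bot$, but your real justification --- that $f(K)\subseteq K$ and $\bot$ lies below everything, so the two sets share the same upper bounds --- is what carries the step, so nothing breaks.
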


\section{Co-Design Problems\label{sec:Design-Problems}}

The basic objects considered in this paper are ``design problems'',
of which several classes will be investigated. We start by defining
a ``design problem with implementation'', which is a tuple of ``\F{functionality}
space'', ``\I{implementation} space'', and ``\R{resources}
space'', together with two maps that describe the feasibility relations
between these three spaces~(\figref{setup}).
\begin{defn}
\label{def:design_problem}A \emph{design problem with implementation}
(DPI) is a tuple $\left\langle \funsp,\ressp,\impsp,\exc,\eval\right\rangle $
where:

\begin{itemize}
\item $\funsp$ is a poset, called \emph{\F{functionality} space};
\item $\ressp$ is a poset, called \emph{\R{resources} space};
\item $\impsp$ is a set, called \emph{\I{implementation} space};
\item the map~$\exc\colon\impsp\rightarrow\funsp$, mnemonics for ``execution'',
maps an implementation to the functionality it provides;
\item the map~$\eval\colon\impsp\rightarrow\ressp$, mnemonics for ``evaluation'',
maps an implementation to the resources it requires.
\end{itemize}
\captionsideleft{\label{fig:setup}}{\includegraphics[scale=0.33]{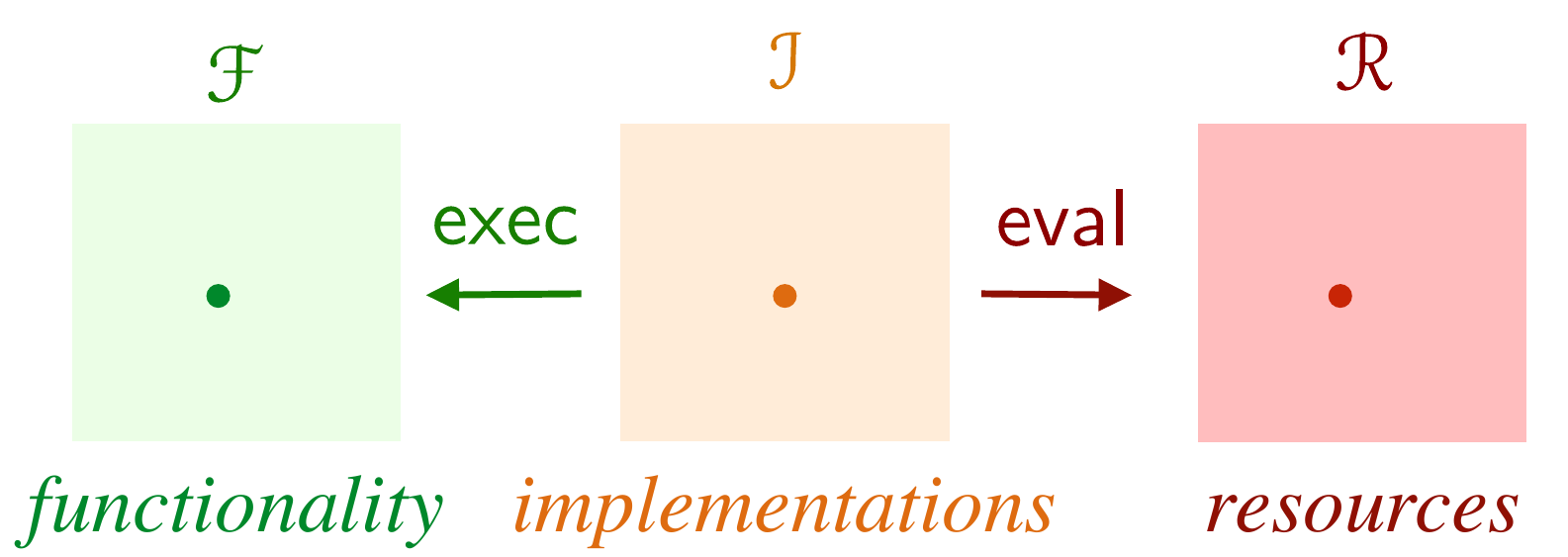}}
\end{defn}

\begin{example}[Motor design]
\label{exa:motor}Suppose we need to choose a motor for a robot from
a given set. The \emph{functionality} of a motor could be parametrized
by \F{torque} and \F{speed}. The \emph{resources} to consider
could include the \R{cost {[}\${]}}, the \R{mass {[}g{]}}, the
input \R{voltage {[}V{]}}, and the input \R{current {[}A{]}}.
The map~$\exc:\impsp\rightarrow\funsp$ assigns to each motor its
functionality, and the map~$\eval:\impsp\rightarrow\ressp$ assigns
to each motor the resources it needs~(\figref{motor}).
\end{example}
\captionsideleft{\label{fig:motor_evalexec}}{\includegraphics[scale=0.33]{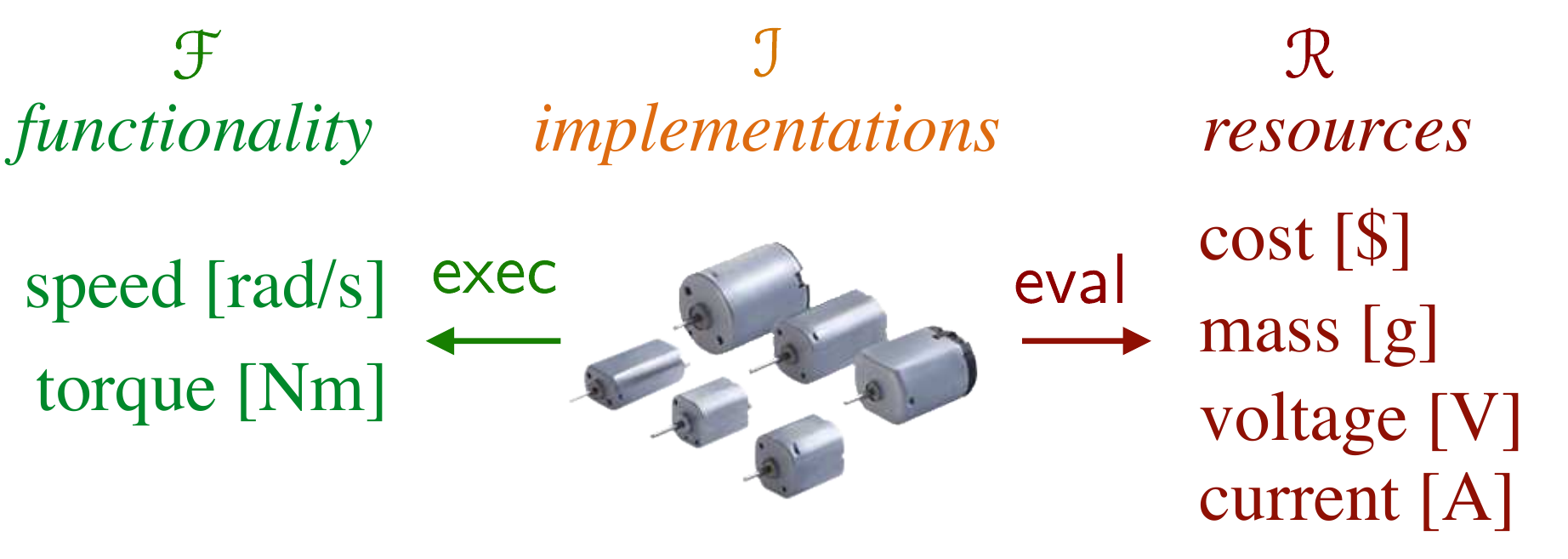}}
\begin{example}[Chassis design]
\label{exa:chassis}Suppose we need to choose a chassis for a robot~(\figref{gmcdp_chassis_eval}).
The implementation space~$\impsp$ could be the set of all chassis
that could ever be designed (in case of a theoretical analysis), or
just the set of chassis available in the catalogue at hand (in case
of a practical design decision). The functionality of a chassis could
be formalized as ``the ability to transport a certain \F{payload
{[}g{]}}'' and ``at a given \F{speed {[}m/s{]}}''. More refined
functional requirements would include maneuverability, the cargo volume,
etc. The resources to consider could be the \R{cost {[}\${]}} of
the chassis; the total mass; and, for each motor to be placed in the
chassis, the required \R{speed {[}rad/s{]}} and \R{torque {[}Nm{]}}.
\end{example}
\captionsideleft{\label{fig:gmcdp_chassis_eval}}{\includegraphics[scale=0.33]{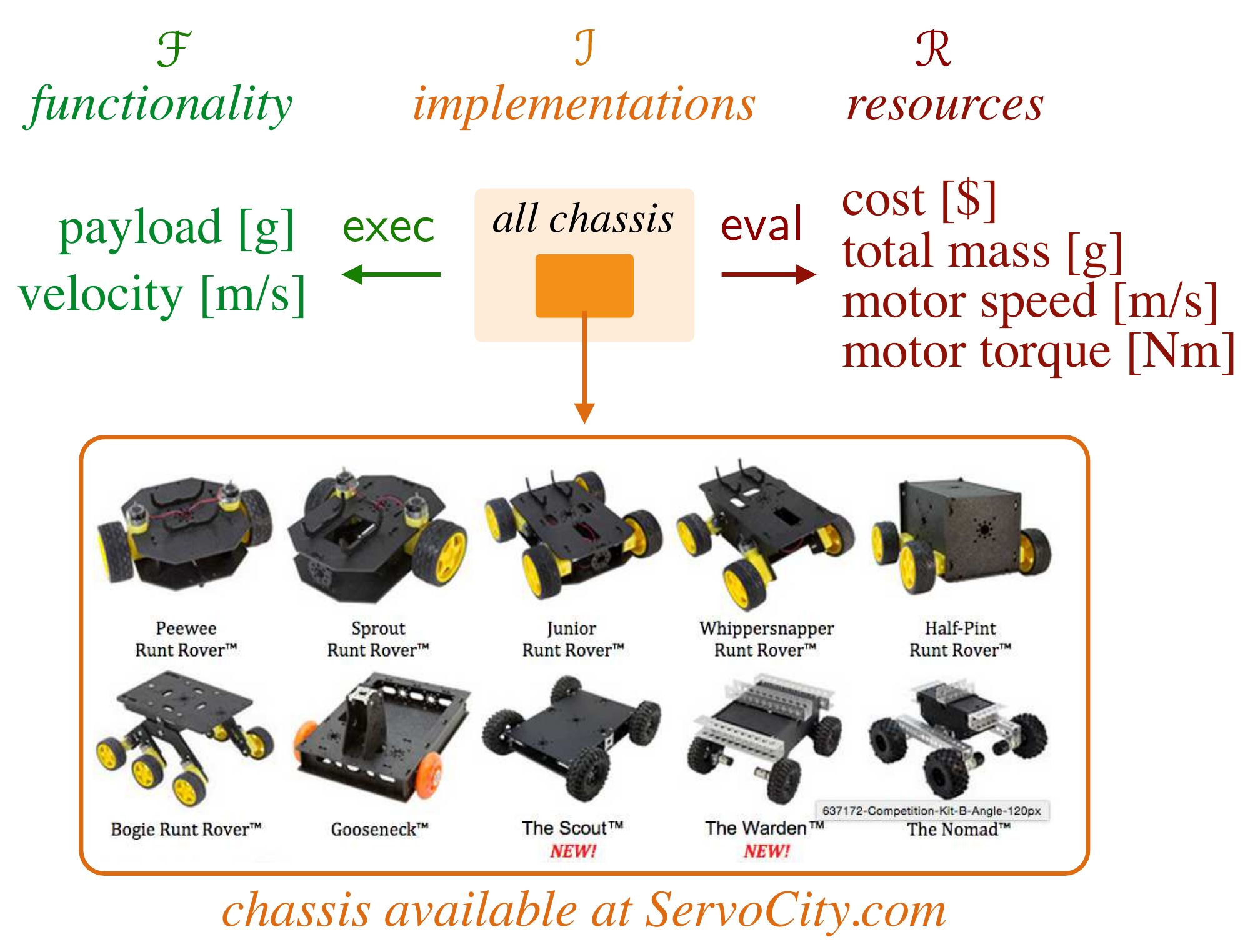}}

\subsubsection{Querying a DPI}

A DPI is a model that induces a family of optimization problems, of
the type ``Given a lower bound on the functionality~$\fun$, what
are the implementations that have minimal resources usage?''~(\figref{setup-1}).
\begin{problem}
\label{prob:problem1}Given~$\fun\in\funsp$, find the implementations
in~$\impsp$ that realize the functionality~$\fun$ (or higher)
with minimal resources, or provide a proof that there are none:
\begin{equation}
\begin{cases}
\with & \imp\in\impsp,\\
\Min_{\resleq} & \res,\\
\subto & \res=\eval(\imp),\\
 & \fun\funleq\exc(\imp).
\end{cases}\label{eq:objective}
\end{equation}
\end{problem}
\captionsideleft{\label{fig:setup-1}}{\includegraphics[scale=0.33]{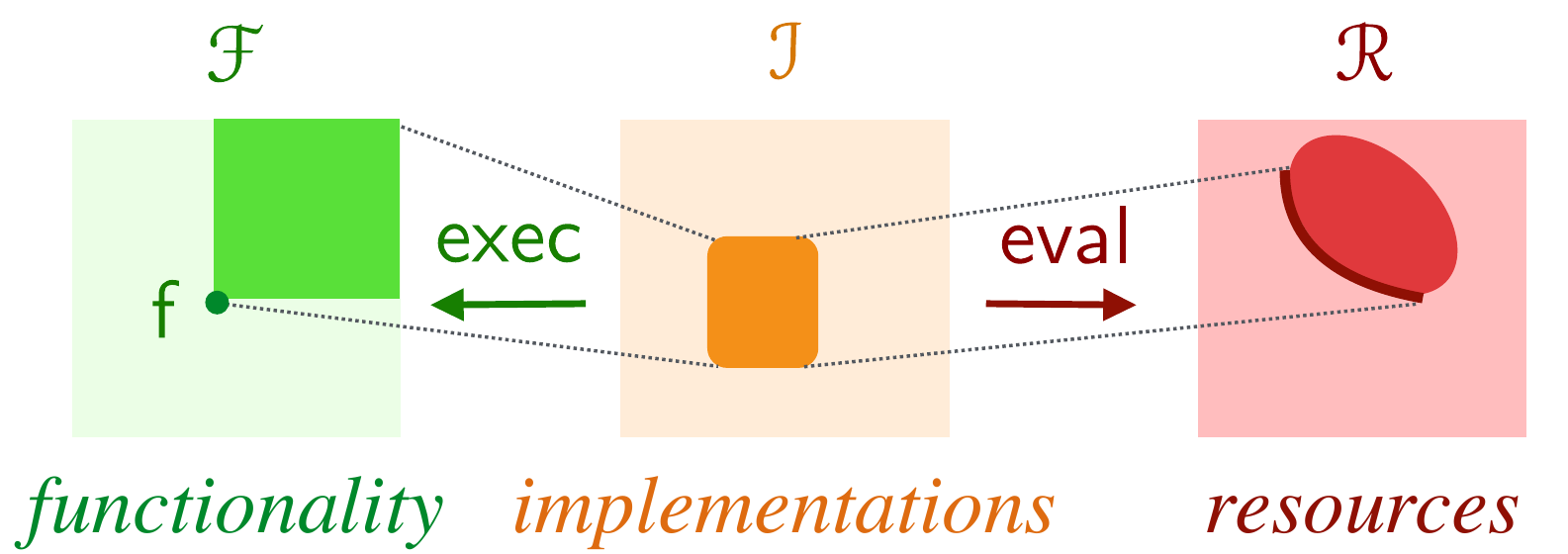}}

\begin{rem}[Minimal \emph{vs} least solutions]
Note the use of~``$\Min_{\resleq}$'' in~(\ref{eq:objective}),
which indicates the set of minimal (non-dominated) elements according
to~$\resleq$, rather than~``$\min_{\resleq}$'', which would
presume the existence of a least element. In all problems in this
paper, the goal is to find the optimal trade-off of resources (``Pareto
front''). So, for each~$\fun$, we expect to find an antichain~${\colR R}\in\Aressp$.
We will see that this formalization allows an elegant way to treat
multi-objective optimization. The algorithm to be developed will directly
solve for the set~${\colR R}$, without resorting to techniques such
as \emph{scalarization}, and therefore is able to work with arbitrary
posets, possibly discrete.
\end{rem}

\begin{rem}[Dual formulation]
In an entirely symmetric fashion, we could fix an upper bound on
the resources usage, and then maximize the functionality provided~(\figref{setup_max_f}).
The formulation is entirely dual, in the sense that it is obtained
from \eqref{objective} by swapping~$\Min$ with~$\Max$, $\funsp$~with~$\ressp$,
and $\exc$~with~$\eval$.
\begin{equation}
\begin{cases}
\with & \imp\in\impsp,\\
\Max_{\funleq} & \fun,\\
\subto & \fun=\exc(\imp),\\
 & \res\posgeq_{\ressp}\eval(\imp).
\end{cases}\label{eq:objective-1}
\end{equation}
\end{rem}
\captionsideleft{\label{fig:setup_max_f}}{\includegraphics[scale=0.33]{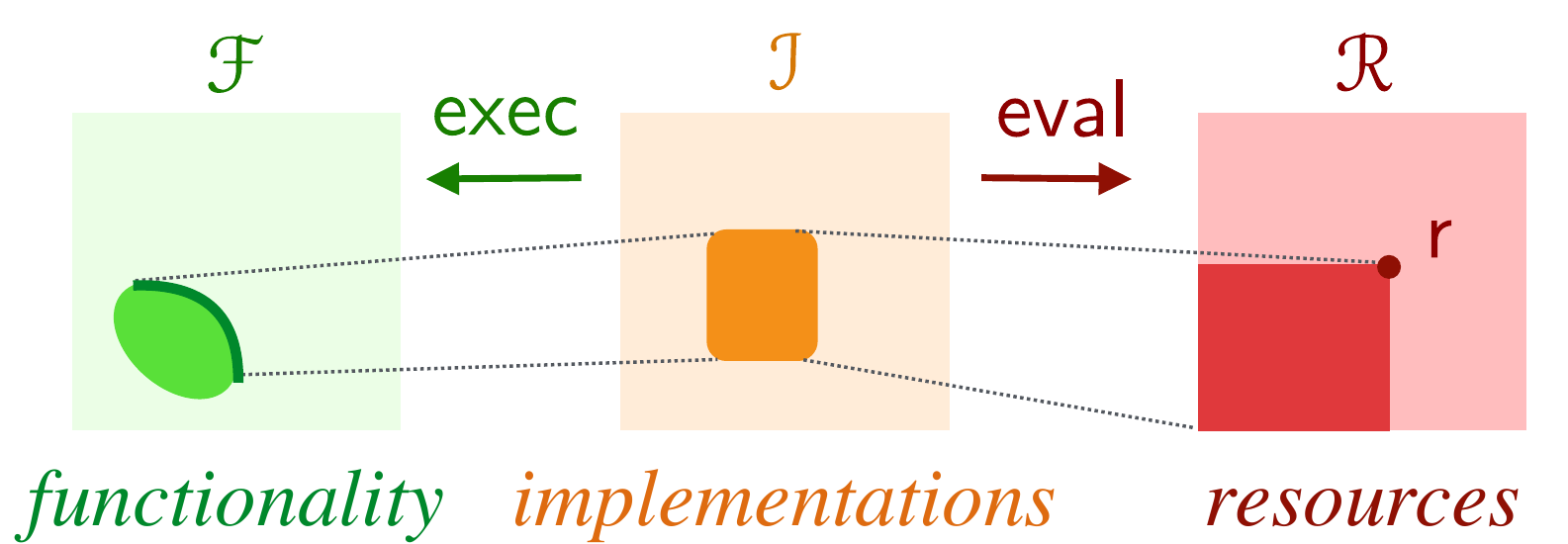}}

\subsubsection{The functionality-to-minimal resources map $\ftor$}

It is useful to also describe a design problem as a map from functionality
to sets of resources that abstracts over implementations. 

(A useful analogy is the state space representation \emph{vs} the
transfer function representation of a linear time-invariant system:
the state space representation is richer, but we only need the transfer
function to characterize the input-output response.)
\begin{defn}
\label{def:ftor}Given a DPI $\left\langle \funsp,\ressp,\impsp,\exc,\eval\right\rangle $,
define the map~$\ftor:\funsp\rightarrow\Aressp$ that associates
to each functionality~$\fun$ the objective function of~\probref{problem1},
which is the set of minimal resources necessary to realize~$\fun$:
\begin{eqnarray*}
\ftor:\funsp & \rightarrow & \Aressp,\\
\fun & \mapsto & \resMin\{\eval(\imp)\mid\left(\imp\in\impsp\right)\,\wedge\,\left(\fun\posleq\exc(\imp)\right)\}.
\end{eqnarray*}
If a certain functionality~$\fun$ is infeasible, then $\ftor(\fun)=\emptyset$.
\end{defn}
\captionsideleft{\label{fig:setup_h-1}}{\includegraphics[scale=0.33]{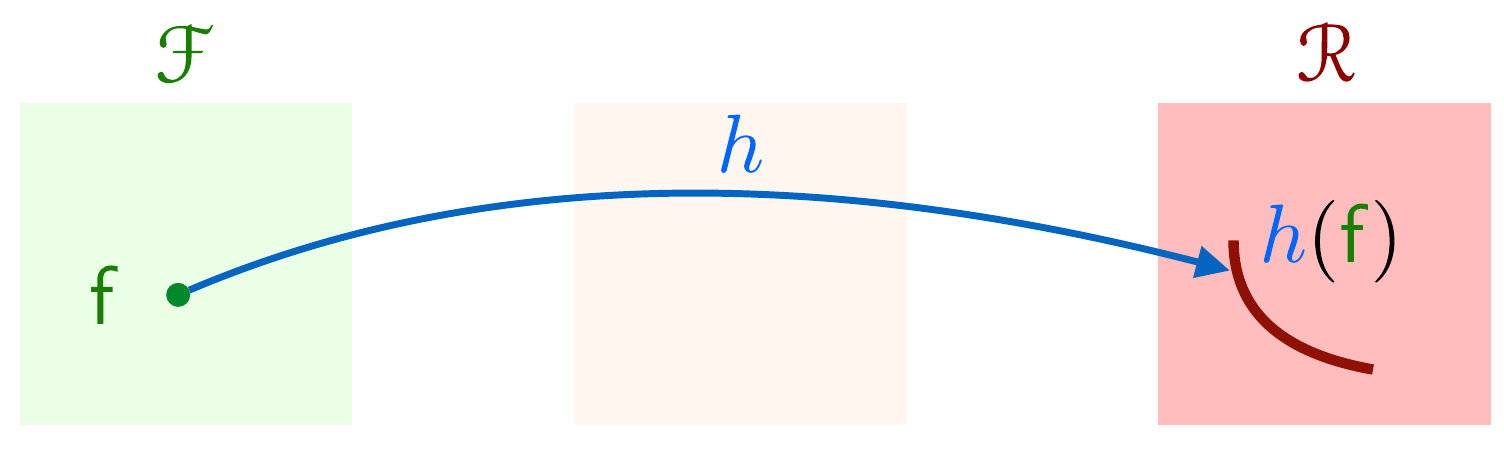}}

\begin{example}
In the case of the motor design problem, the map~$\ftor$ assigns
to each pair of $\left\langle \F{\text{speed}},\F{\text{torque}}\right\rangle $
the achievable trade-off of \R{cost}, \R{mass}, and other resources~(\figref{motor-trade-offs}).
The antichains are depicted as continuous curves, but they could also
be composed by a finite set of points.

\captionsideleft{\label{fig:motor-trade-offs}}{\includegraphics[scale=0.33]{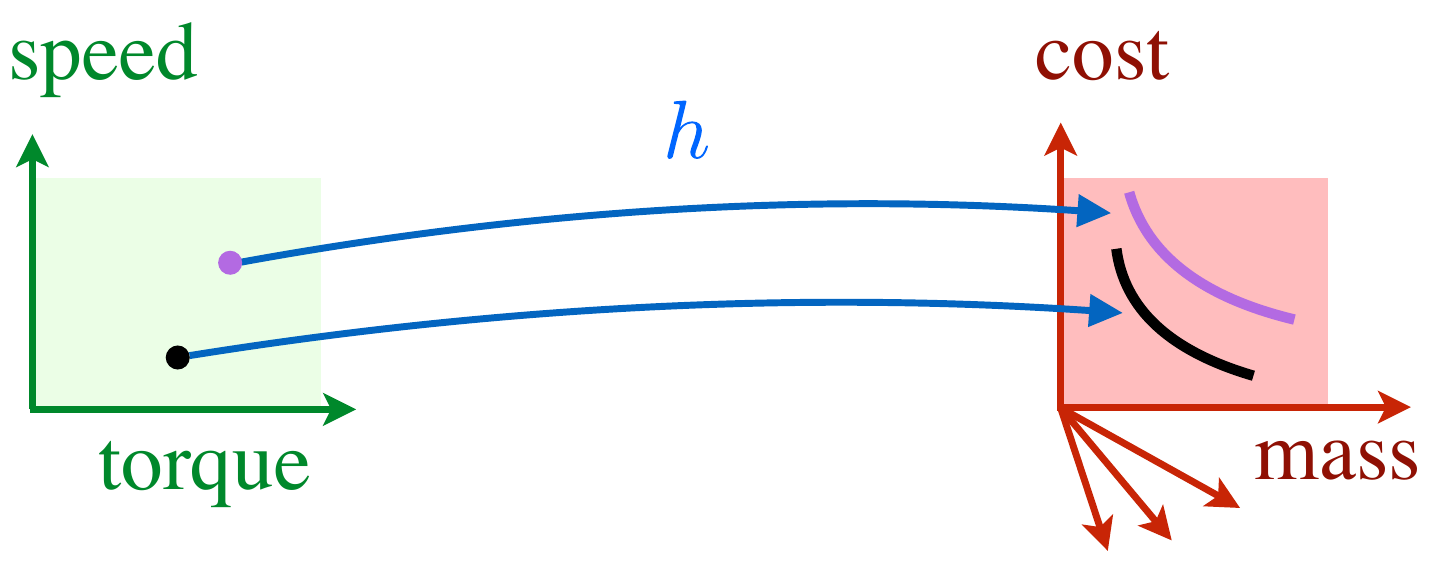}}
\end{example}

By construction, $\ftor$ is monotone (\defref{monotone}), which
means that
\[
\fun_{1}\funleq\fun_{2}\quad\Rightarrow\quad\ftor(\fun_{1})\posleq_{\Aressp}\ftor(\fun_{2}),
\]
where~$\posleq_{\Aressp}$ is the order on antichains defined in
\lemref{antichains-are-poset}. Monotonicity of~$\ftor$ means that
if the functionality~$\fun$ is increased the antichain of resources
will go ``up'' in the poset of antichains~$\Aressp$, and at some
point it might reach the top of~$\Aressp$, which is the empty set,
meaning that the problem is not feasible.

\subsubsection{Co-design problems\label{sec:Co-design-problems}}

A graphical notation will help reasoning about composition. A DPI
is represented as a box with~$\dpinumf$ green edges and~$\dpinumr$
red edges~(\figref{dp_graphical}).

\captionsideleft{\label{fig:dp_graphical}}{\includegraphics[scale=0.33]{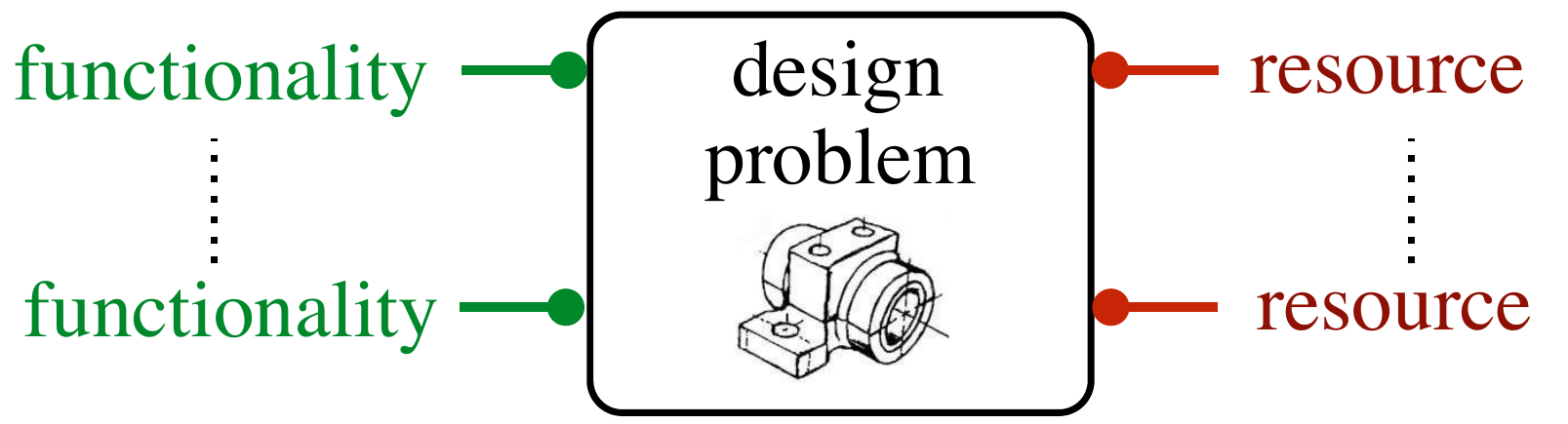}}

\noindent This means that the functionality and resources spaces
can be factorized in~$\dpinumf$ and~$\dpinumr$ components: $\funsp=\prod_{i=1}^{\dpinumf}\pi_{i}\funsp_{i},$
$\ressp=\prod_{j=1}^{\dpinumr}\pi_{j}\ressp$, where ``$\pi_{i}$''
represents the projection to the $i$-th component. If there are no
green (respectively, red) edges, then $\dpinumf$ (respectively, $\dpinumr$)
is zero, and $\funsp$ (respectively, $\ressp$) is equal to~$\One=\{\left\langle \right\rangle \}$,
the set containing one element, the empty tuple~$\left\langle \right\rangle $.

These \emph{co-design diagrams} are not to be confused with signal
flow diagrams, in which the boxes represent oriented systems and the
edges represent signals.

A ``co-design problem'' will be defined as a multigraph of design
problems. Graphically, one is allowed to connect only edges of different
color. This interconnection is indicated with the symbol~``$\posleq$''
in a rounded box~(\figref{connection}). 

\captionsideleft{\label{fig:connection}}{\includegraphics[scale=0.33]{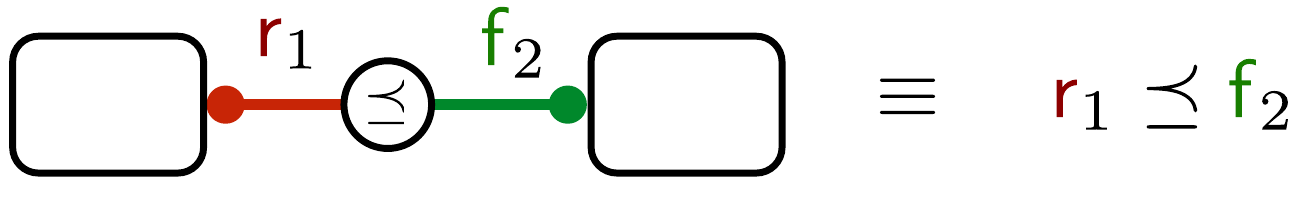}}

\noindent The semantics of the interconnection is that the resources
required by the first DPI are provided by the second DPI. This is
a partial order inequality constraint of the type~$\res_{1}\posleq\fun_{2}$.

\begin{defn}
\label{def:cdpi}A \emph{Co-Design Problem with Implementation} (CDPI)
is a tuple $\left\langle \funsp,\ressp,\left\langle \cdpiN,\mathcal{E}\right\rangle \right\rangle ,$
where~$\funsp$ and~$\ressp$ are two posets, and~$\left\langle \cdpiN,\mathcal{E}\right\rangle $
is a\emph{ }multigraph of DPIs. Each node~$\cdpin\in\cdpiN$ is a
DPI $\cdpin=\left\langle \funsp_{\cdpin},\ressp_{\cdpin},\impsp_{\cdpin},\exc_{\cdpin},\eval_{\cdpin}\right\rangle .$
An edge~$e\in\mathcal{E}$ is a tuple $e=\left\langle \left\langle \cdpinA,\cdpiresindA\right\rangle ,\left\langle \cdpinB,\cdpifunindB\right\rangle \right\rangle $,
where~$\cdpinA,\cdpinB\in\cdpiN$ are two nodes and~$\cdpiresindA$
and~$\cdpifunindB$ are the indices of the components of the functionality
and resources to be connected, and it holds that~$\pi_{\cdpiresindA}\ressp_{\cdpinA}=\pi_{\cdpifunindB}\funsp_{\cdpinB}$~(\figref{mcdps}). 

\captionsideleft{\label{fig:mcdps}}{\includegraphics[scale=0.33]{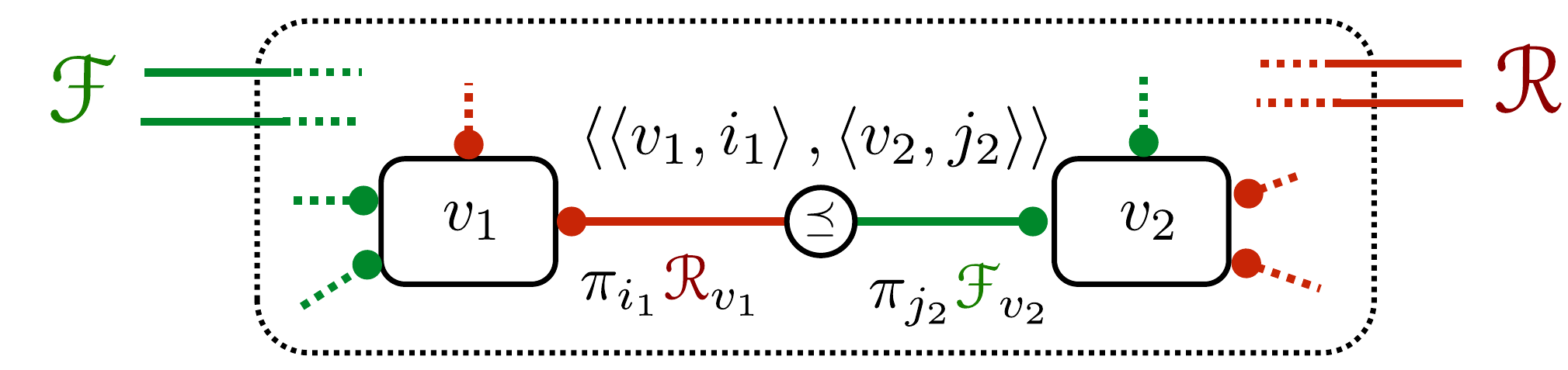}}
\end{defn}

A CDPI is equivalent to a DPI with an implementation space~$\impsp$
that is a subset of the product $\prod_{\cdpin\in\cdpiN}\impsp_{\cdpin}$,
and contains only the tuples that satisfy the co-design constraints.
An implementation tuple~$\imp\in\prod_{\cdpin\in\cdpiN}\impsp_{\cdpin}$
belongs to~$\impsp$ iff it respects all functionality\textendash resources
constraints on the edges, in the sense that, for all edges~$\left\langle \left\langle \cdpinA,\cdpiresindA\right\rangle ,\left\langle \cdpinB,\cdpifunindB\right\rangle \right\rangle $
in~$\mathcal{E}$, it holds that 
\[
\pi_{\cdpiresindA}\eval_{\cdpinA}(\pi_{\cdpinA}\imp)\posleq\pi_{\cdpifunindB}\exc_{\cdpinB}(\pi_{\cdpinB}\imp).
\]
The posets~$\funsp,\ressp$ for the entire CDPI are the products
of the functionality and resources of the nodes that remain unconnected.
For a node~$\cdpin$, let~$\unconnectedfun_{\cdpin}$ and~$\unconnectedres_{\cdpin}$
be the set of unconnected functionalities and resources. Then~$\funsp$
and~$\ressp$ for the CDPI are defined as the product of the unconnected
functionality and resources of all DPIs: $\funsp=\prod_{\cdpin\in\cdpiN}\prod_{\cdpifunind\in\unconnectedfun_{\cdpin}}\pi_{\cdpifunind}\funsp_{\cdpin}$
and $\ressp=\prod_{\cdpin\in\cdpiN}\prod_{\cdpiresind\in\unconnectedres_{\cdpin}}\pi_{\cdpiresind}\ressp_{\cdpin}.$
The maps $\exc,\eval$ return the values of the unconnected functionality
and resources:
\begin{align*}
\exc:\imp & \mapsto{\scriptstyle {\displaystyle \prod_{\cdpin\in\cdpiN}\prod_{\cdpifunind\in\unconnectedfun_{\cdpin}}}}\pi_{\cdpifunind}\exc_{\cdpin}(\pi_{\cdpin}\imp),\\
\eval:\imp & \mapsto{\displaystyle \prod_{\cdpin\in\cdpiN}\prod_{\cdpiresind\in\unconnectedres_{\cdpin}}}\pi_{\cdpiresind}\eval_{\cdpin}(\pi_{\cdpin}\imp).
\end{align*}

\begin{example}
\label{exa:chassis_plus_motor}Consider the co-design of chassis (\exaref{chassis})
plus motor (\exaref{motor}). The design problem for a motor has \F{speed}
and \F{torque} as the provided functionality (what the motor must
provide), and \R{cost}, \R{mass}, \R{voltage}, and \R{current}
as the required resources~(\figref{motor}).

\captionsideleft{\label{fig:motor}}{\includegraphics[scale=0.33]{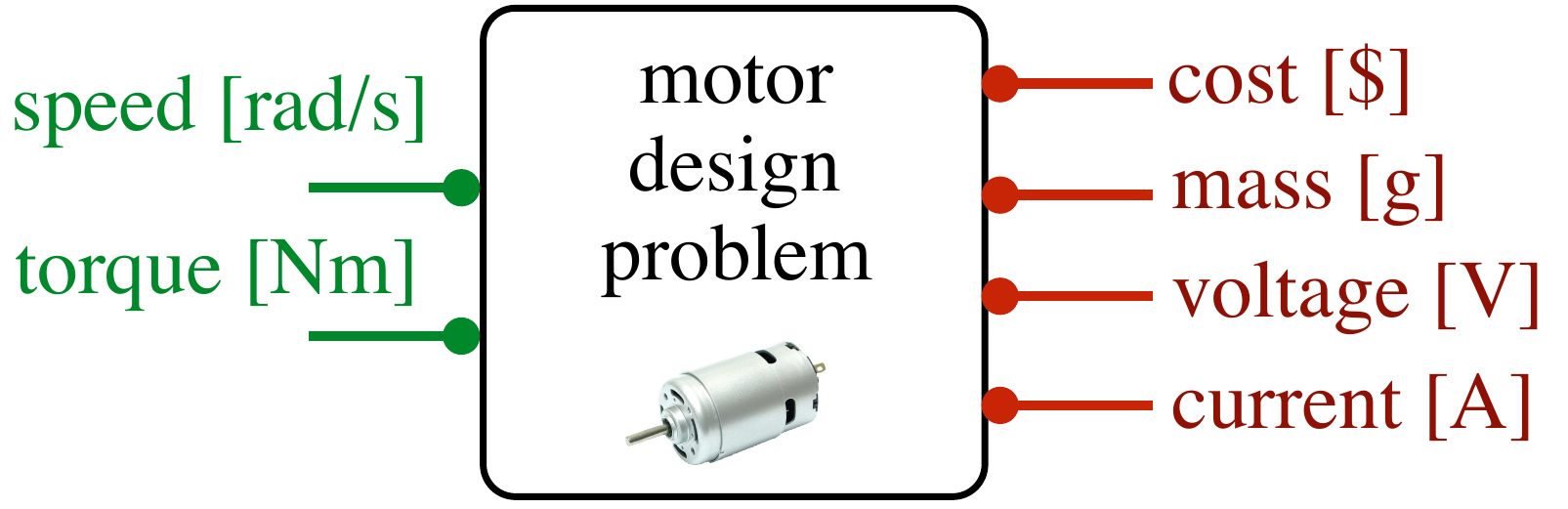}}

\noindent For the chassis (\figref{gmcdp_chassis}), the provided
functionality is parameterized by the \F{mass} of the payload and
the platform \F{velocity}. The required resources include the \R{cost},
\R{total mass}, and what the chassis needs from its motor(s), such
as \R{speed} and \R{torque}.

\captionsideleft{\label{fig:gmcdp_chassis}}{\includegraphics[scale=0.33]{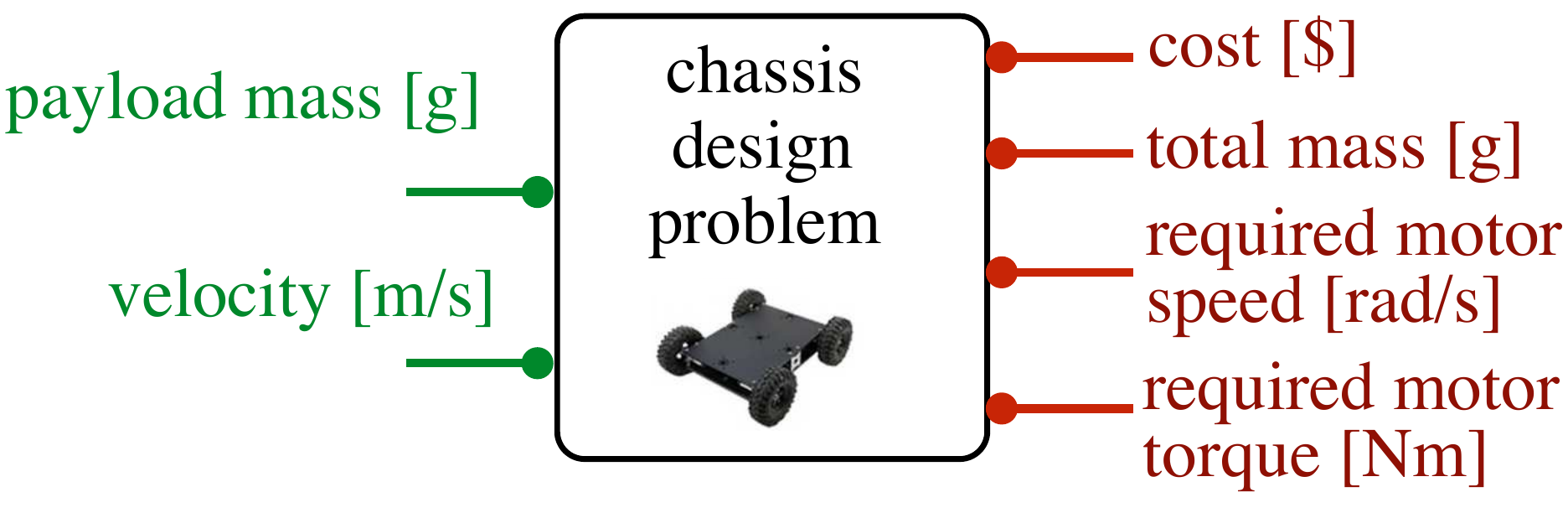}}

\noindent The two design problem can be connected at the edges for
torque and speed~(\figref{gmcdp_chassis_plus_motor_series}). The
semantics is that the motor needs to have\emph{ at least }the given
torque and speed. 

\captionsideleft{\label{fig:gmcdp_chassis_plus_motor_series}}{\includegraphics[scale=0.33]{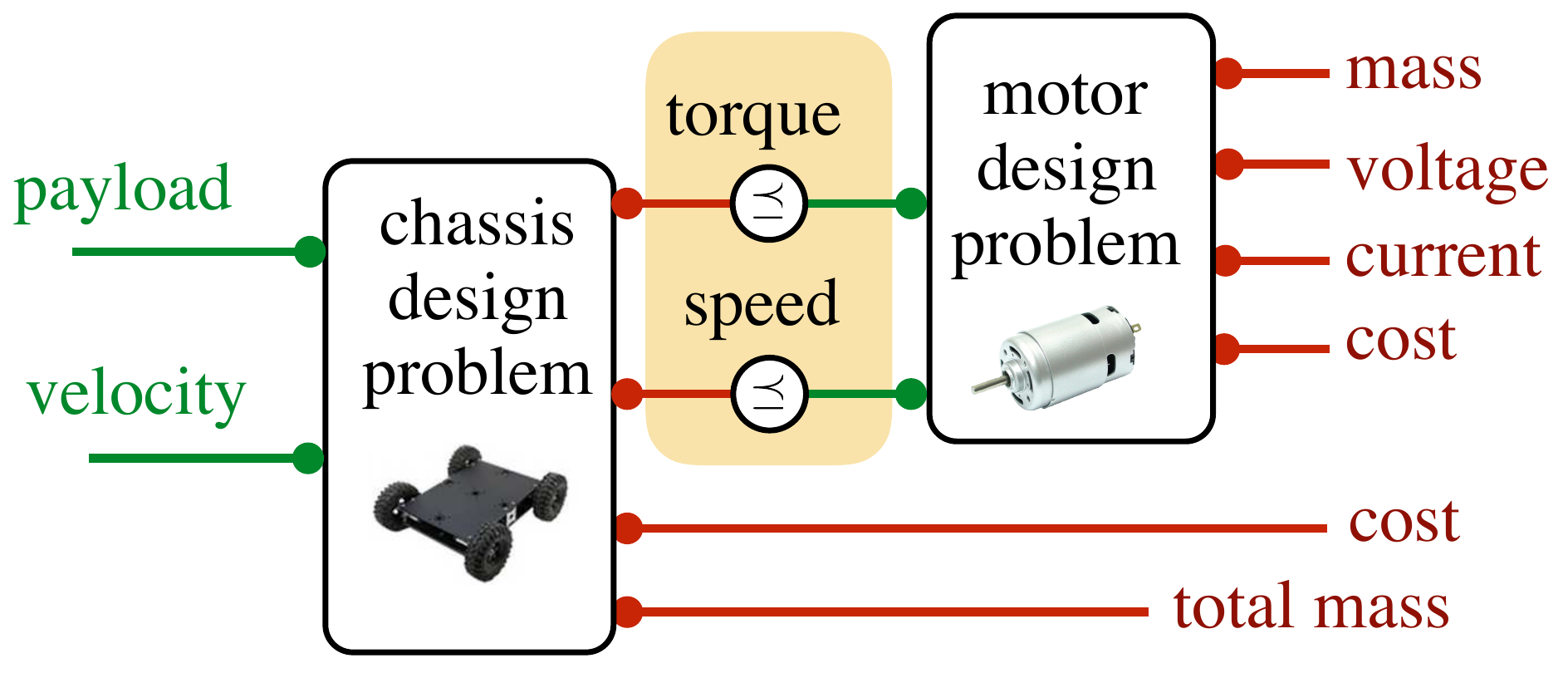}}

\noindent Resources can be summed together using a trivial DP corresponding
to the map $\ftor:\left\langle \fun_{1},\fun_{2}\right\rangle \mapsto\{\fun_{1}+\fun_{2}\}$
(\figref{total_cost}).

\captionsideleft{\label{fig:total_cost}}{\includegraphics[scale=0.33]{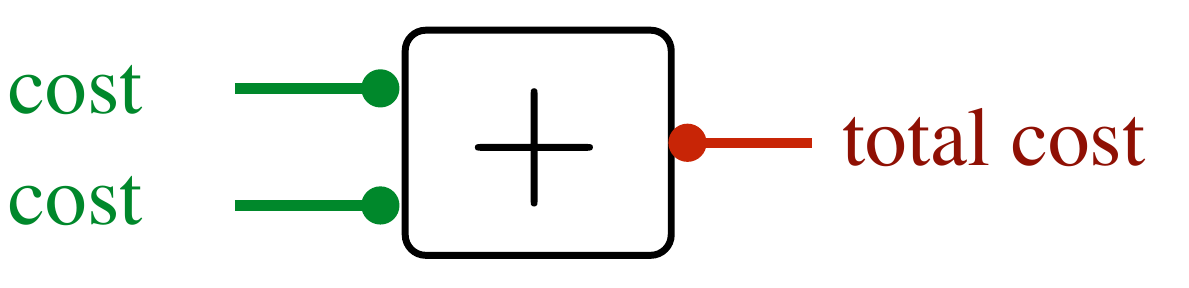}}

\noindent A co-design problem might contain recursive co-design constraints.
For example, if we set the payload to be transported to be the sum
of the motor mass plus some extra payload, a cycle appears in the
graph~(\figref{gmcdp_chassis_plus_motor}). 

\noindent 
\begin{figure}[H]
\centering{}\includegraphics[scale=0.33]{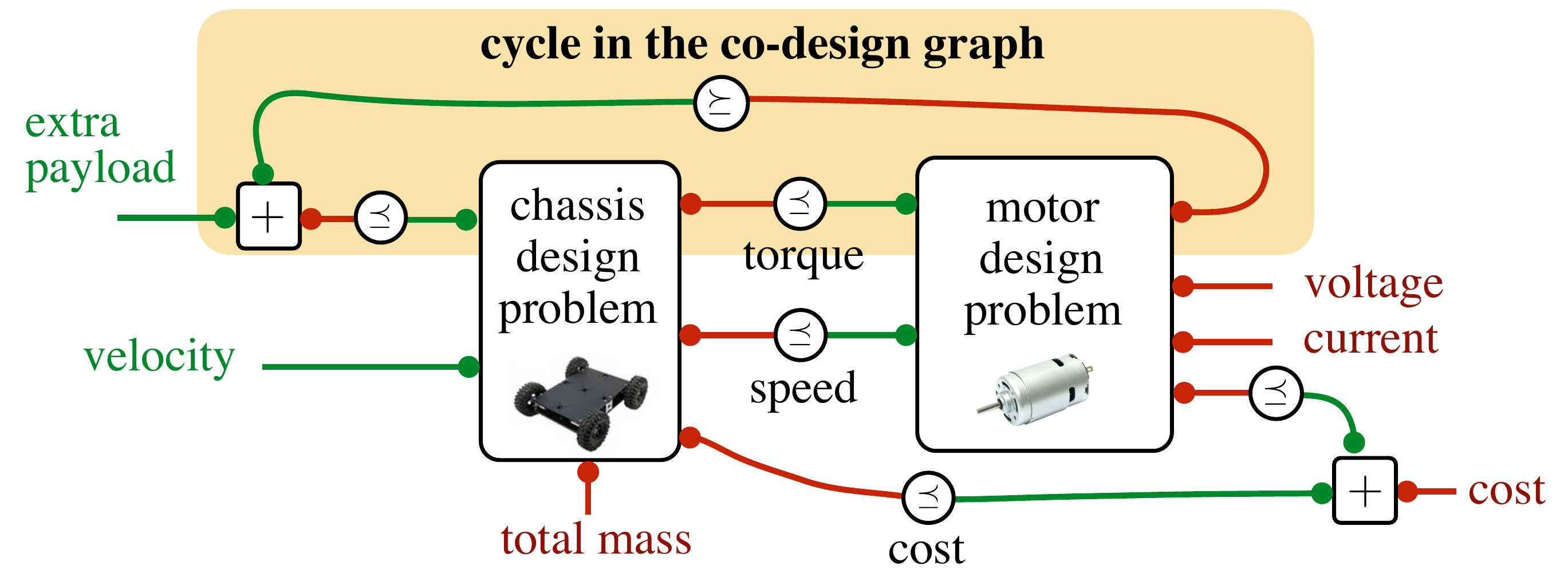}\caption{\label{fig:gmcdp_chassis_plus_motor}}
\end{figure}

\noindent This formalism makes it easy to abstract away the details
in which we are not interested. Once a diagram like~\figref{gmcdp_chassis_plus_motor}
is obtained, we can draw a box around it and consider the abstracted
problem~(\figref{gmcdp_chassis_plus_motor-1}). 

\captionsideleft{\label{fig:gmcdp_chassis_plus_motor-1}}{\includegraphics[scale=0.33]{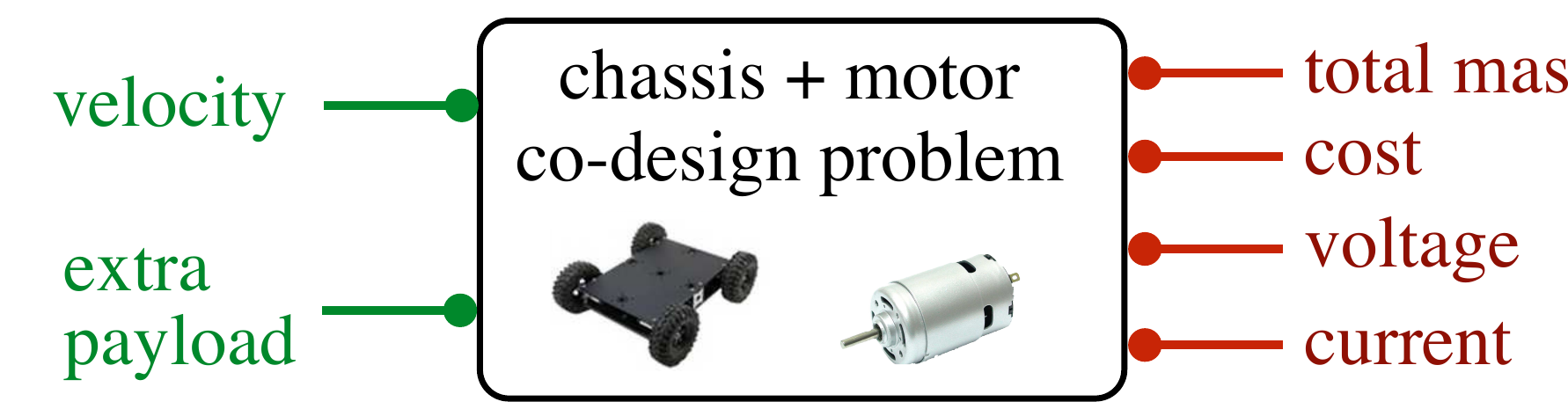}}

\label{exa:finish}Let us finish assembling our robot. A motor needs
a motor control board. The functional requirements are the (peak)
\F{output current} and the \F{output voltage range}~(\figref{mcb}).

\captionsideleft{\label{fig:mcb}}{\includegraphics[scale=0.33]{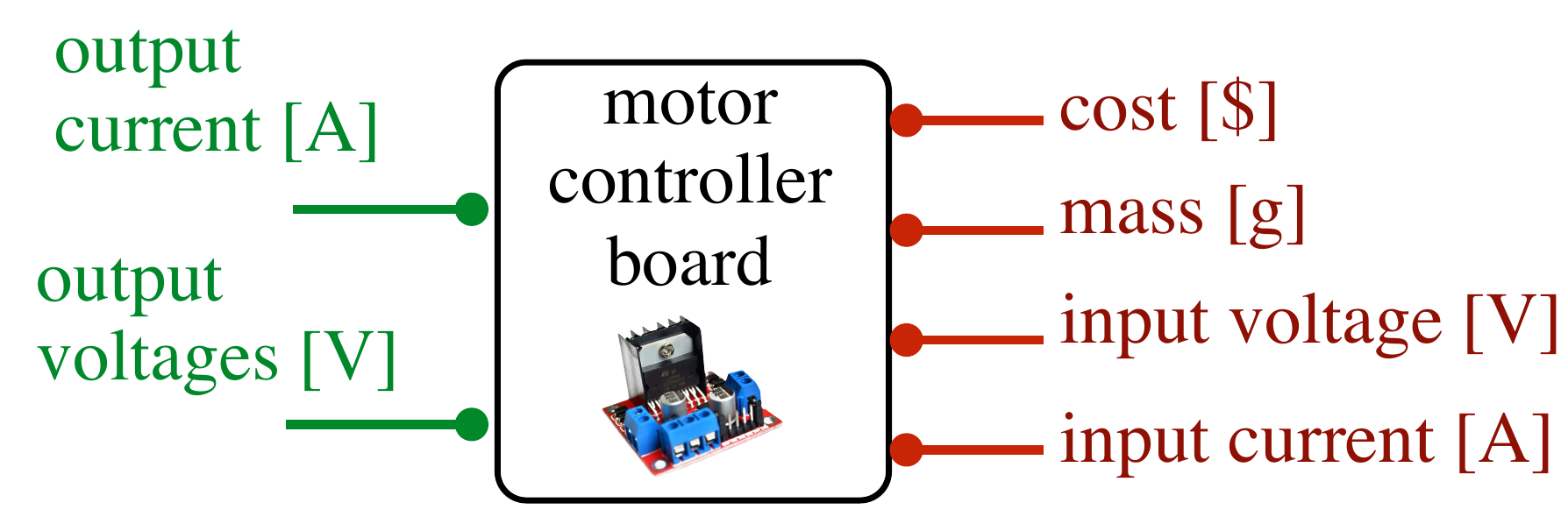}}

\noindent The functionality for a power supply could be parameterized
by the \F{output current}, the \F{output voltages}, and the \F{capacity}.
The resources could include \R{cost} and \R{mass} (\figref{example-ba}).

\captionsideleft{\label{fig:example-ba}}{\includegraphics[scale=0.33]{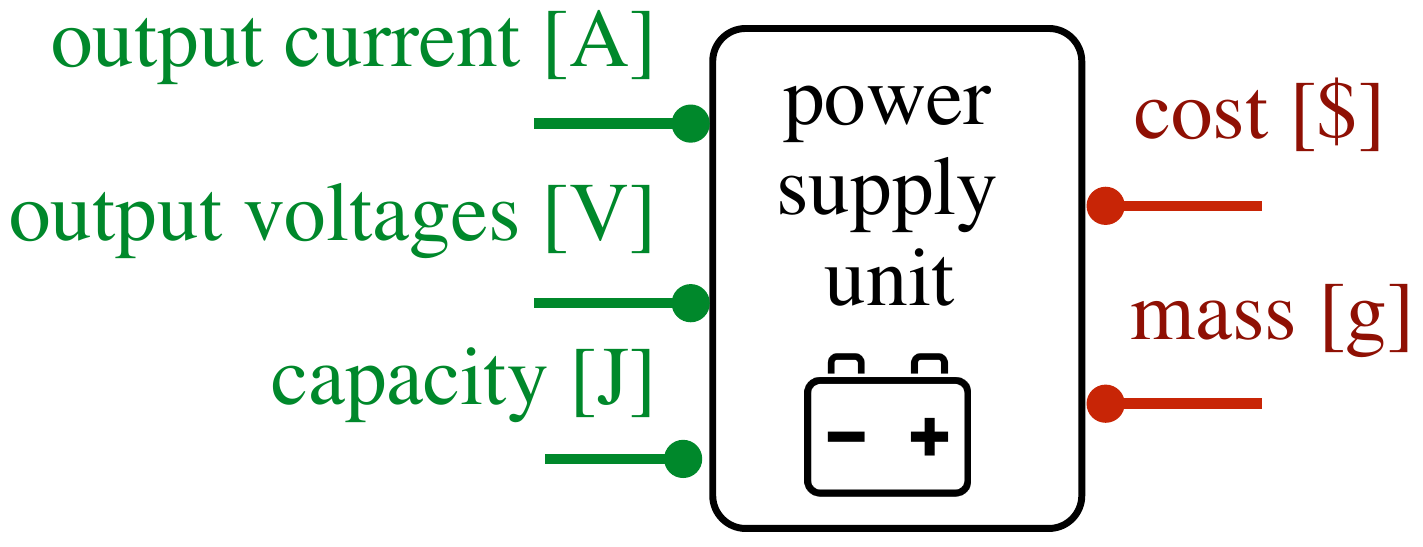}} 

\noindent 

\noindent Relations such as ${\colF\mbox{current}}\times{\colF\mbox{voltage}}\leq{\colR\mbox{power required}}$
and ${\colF\mbox{power}}\times{\colF\mbox{endurance}}\leq{\colR\mbox{energy required}}$
can be modeled by a trivial ``multiplication'' DPI (\figref{current_times_voltage}).

\captionsideleft{\label{fig:current_times_voltage}}{\includegraphics[scale=0.33]{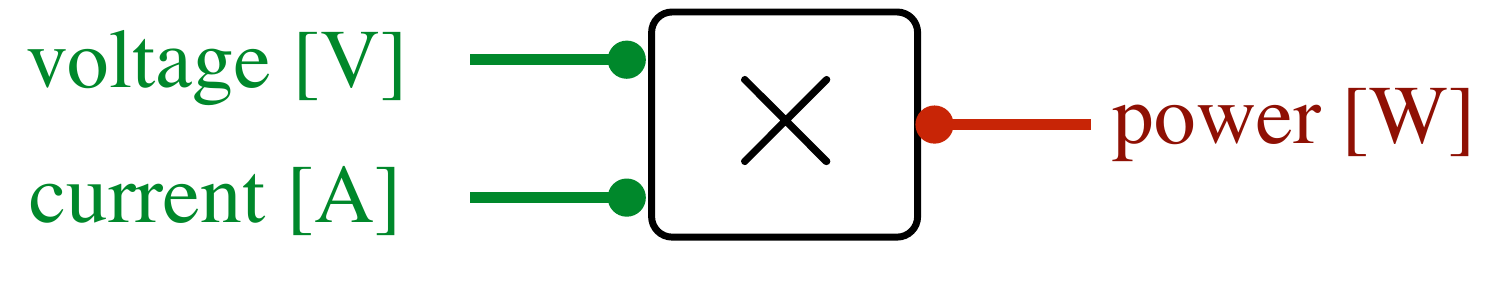}}

\noindent We can connect these DPs to obtain a co-design problem with
functionality \F{voltage}, \F{current}, \F{endurance} and resources
\R{mass} and \R{cost}~(\figref{connect}).

\captionsideleft{\label{fig:connect}}{\includegraphics[scale=0.29]{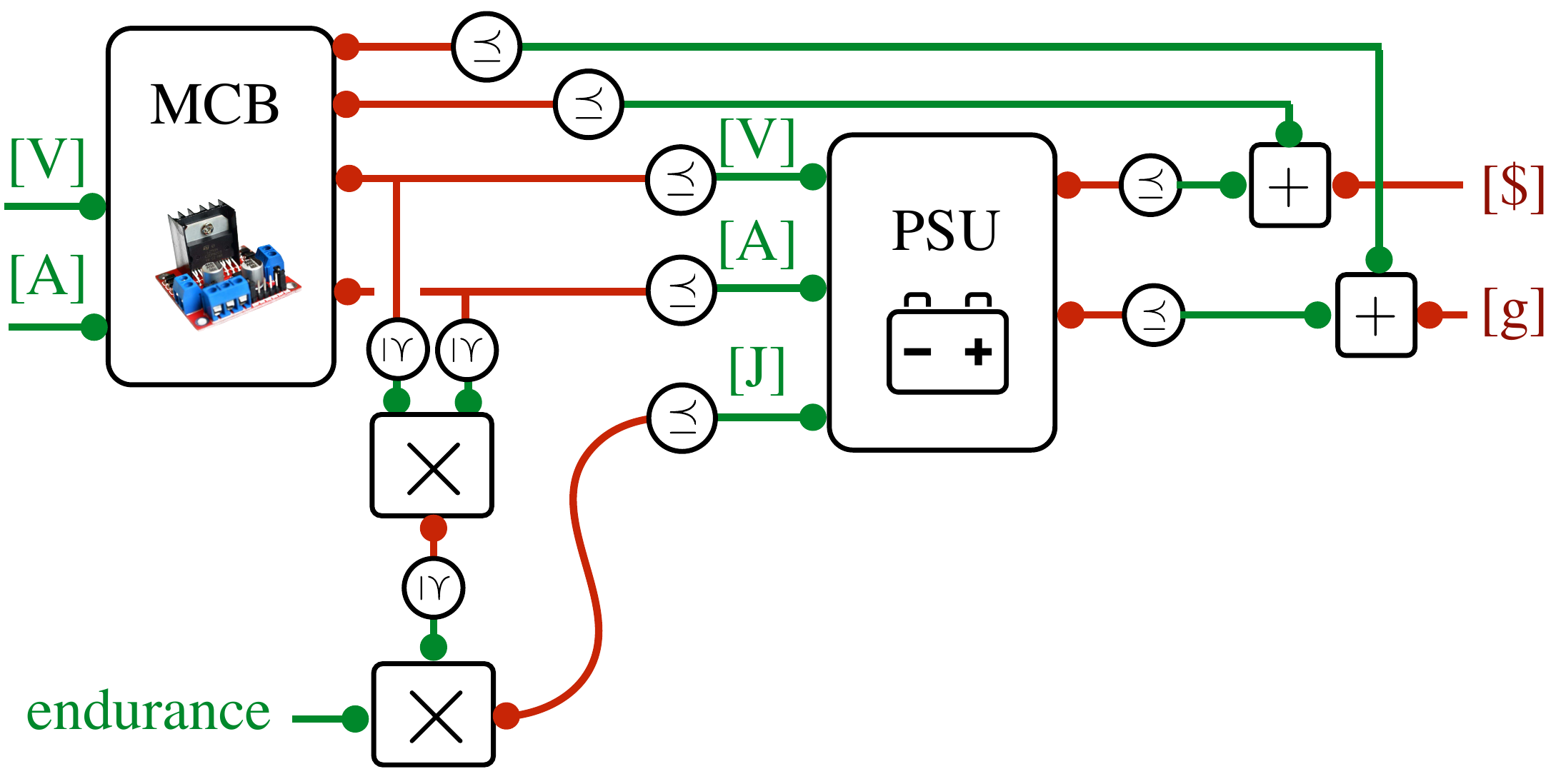}}

\noindent Draw a box around the diagram, and call it ``MCB+PSU'';
then interconnect it with the ``chassis+motor'' diagram in~\figref{another}.

\noindent 
\begin{figure}[H]
\begin{centering}
\includegraphics[scale=0.33]{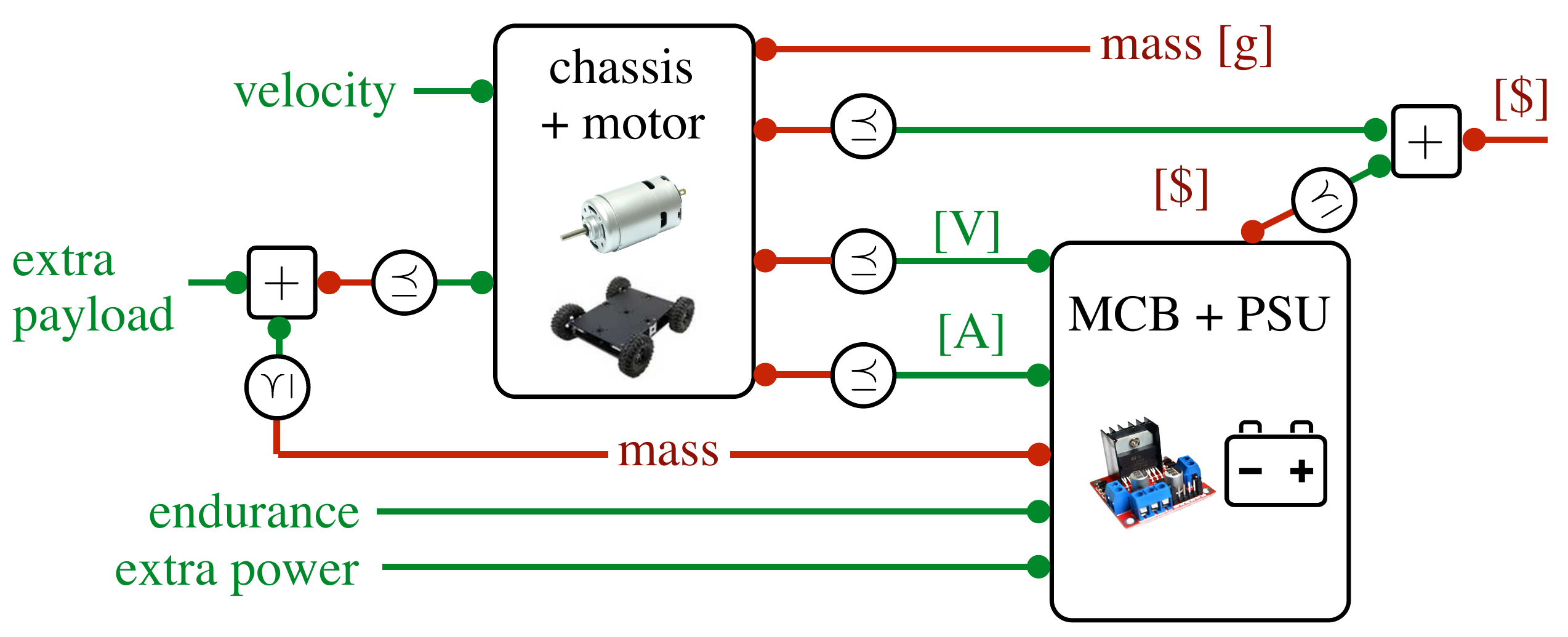}
\par\end{centering}
\caption{\label{fig:another}}
\end{figure}

We can further abstract away the diagram in~\figref{another} as
a ``mobility+power'' CDPI, as in \figref{shipping}. The formalism
allows to consider \R{mass} and \R{cost} as independent resources,
meaning that we wish to obtain the Pareto frontier for the minimal
resources. Of course, one can always reduce everything to a scalar
objective. For example, a conversion from mass to cost exists and
it is called ``shipping''. Depending on the destination, the conversion
factor is between~$\$0.5/\mbox{lbs}$, using USPS, to~$\$10\mbox{k}/\mbox{lbs}$
for sending your robot to low Earth orbit. 

\noindent 
\begin{figure}[H]
\centering{}\includegraphics[scale=0.33]{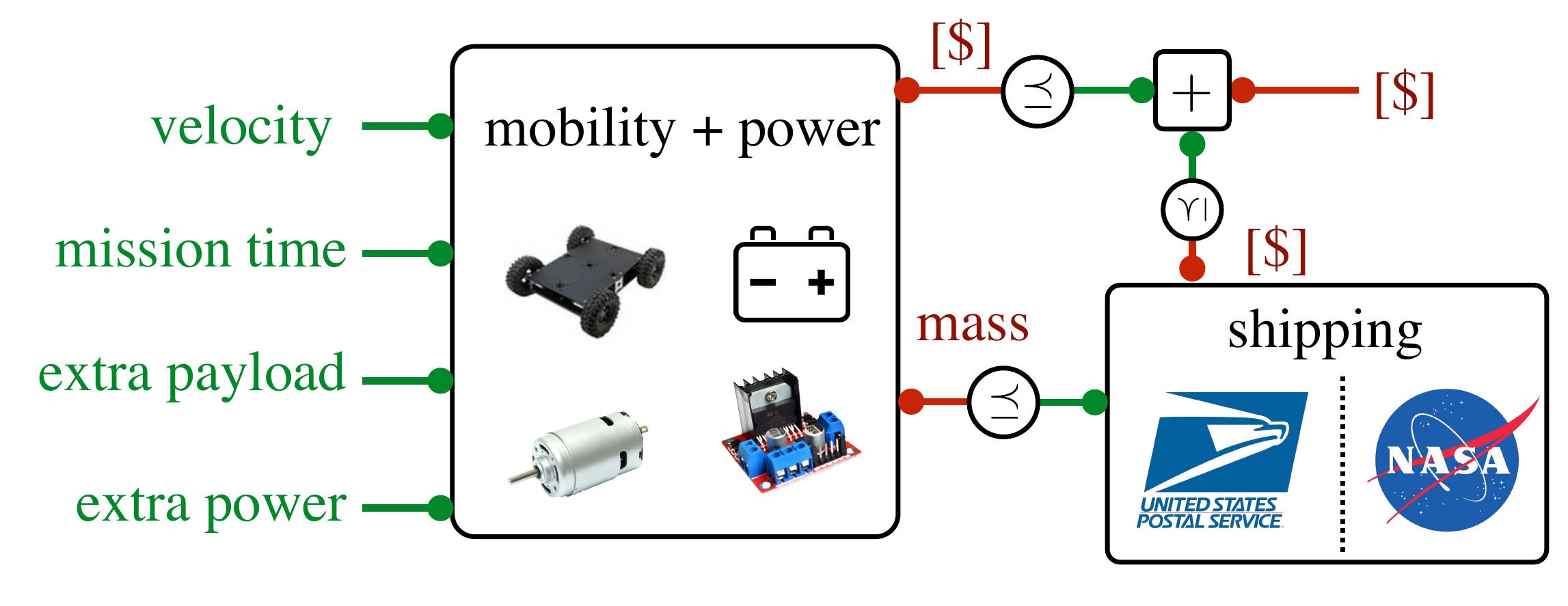}\caption{\label{fig:shipping}}
\end{figure}

\end{example}

\subsubsection*{Examples from the literature}

Many recent works in robotics and neighboring fields that deal with
minimality and resource constraints can be incorporated in this framework.

\begin{example}
Svorenova\,\,\etal~\cite{svorenova16resource} consider a joint
sensor scheduling and control synthesis problem, in which a robot
can decide to not perform sensing to save power, given performance
objectives on the probability of reaching the target and the probability
of collision. The method outputs a Pareto frontier of all possible
operating points. This can be cast as a design problem with functionality
equal to the \F{probability of reaching the target} and (the inverse
of) \F{the collision probability}, and with resources equal to the
\R{actuation power}, \R{sensing power}, and \R{sensor accuracy}.
 
\end{example}
\captionsideleft{\label{fig:progressive-1-1}}{\includegraphics[scale=0.33]{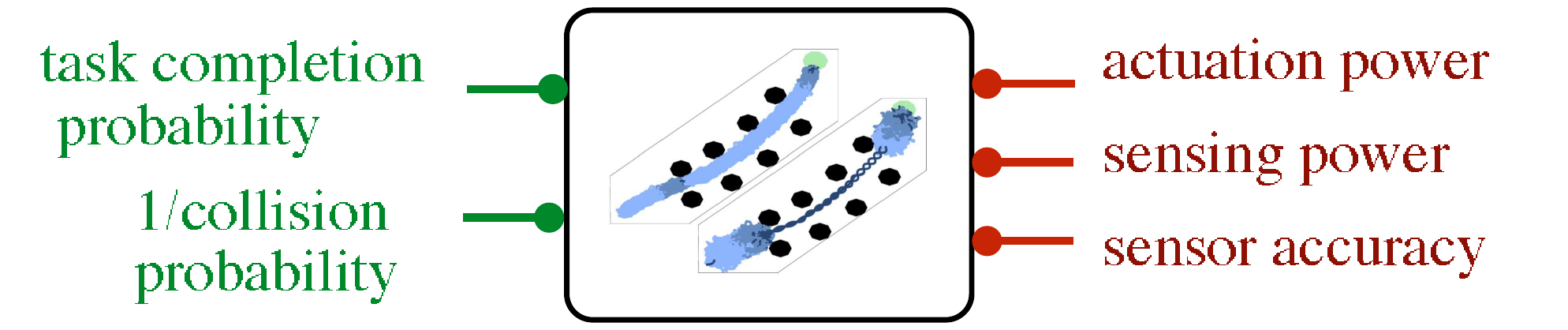}}

\begin{example}
Nardi\,\,\etal~\cite{zia16comparative} describe a benchmarking
system for visual SLAM that provides the empirical characterization
of the monotone relation between \F{the accuracy} of the visual
SLAM solution, the \F{throughput {[}frames/s{]}} and \R{the energy
for computation {[}J/frame{]}}. The implementation space is the product
of algorithmic parameters, compiler flags, and architecture choices,
such as the number of GPU cores active. This is an example of a design
problem whose functionality-resources map needs to be experimentally
evaluated.
\end{example}
\captionsideleft{}{\includegraphics[scale=0.33]{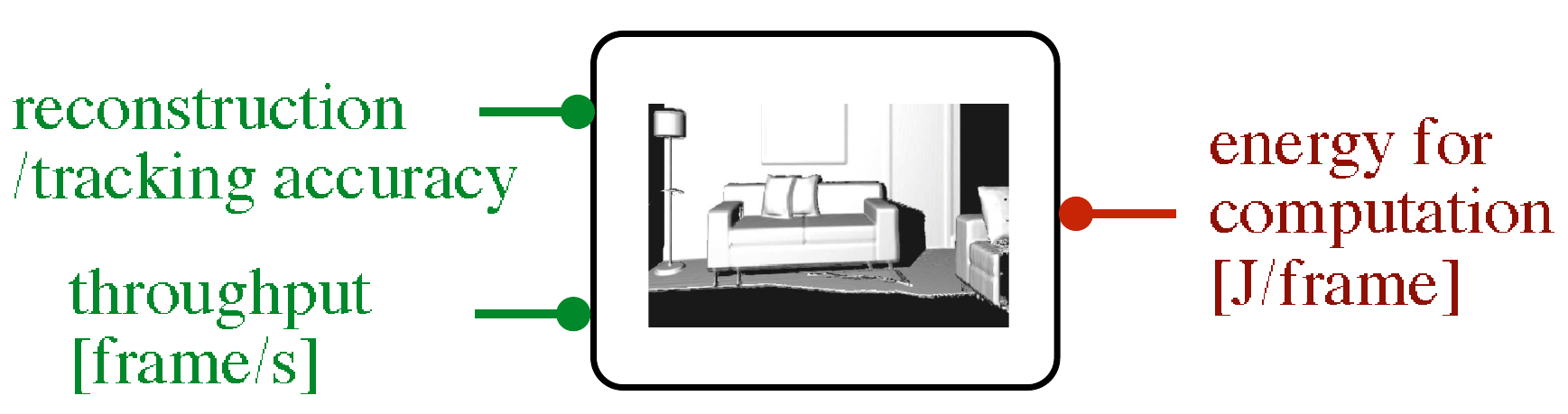}}

\subsubsection*{Other examples in minimal robotics}

Many works have sought to find ``minimal'' designs for robots, and
can be understood as characterizing the relation between the poset
of \F{tasks} and the poset of physical resources, which is the product
of \R{sensing}, \R{actuation}, and \R{computation} resources,
plus other non-physical resources, such as \R{prior knowledge}~(\figref{robot-generic}).
Given a task, there is a minimal antichain in the resources poset
that describes the possible trade-offs (e.g., compensating lousier
sensors with more computation). 

\captionsideleft{\label{fig:robot-generic}}{\includegraphics[scale=0.33]{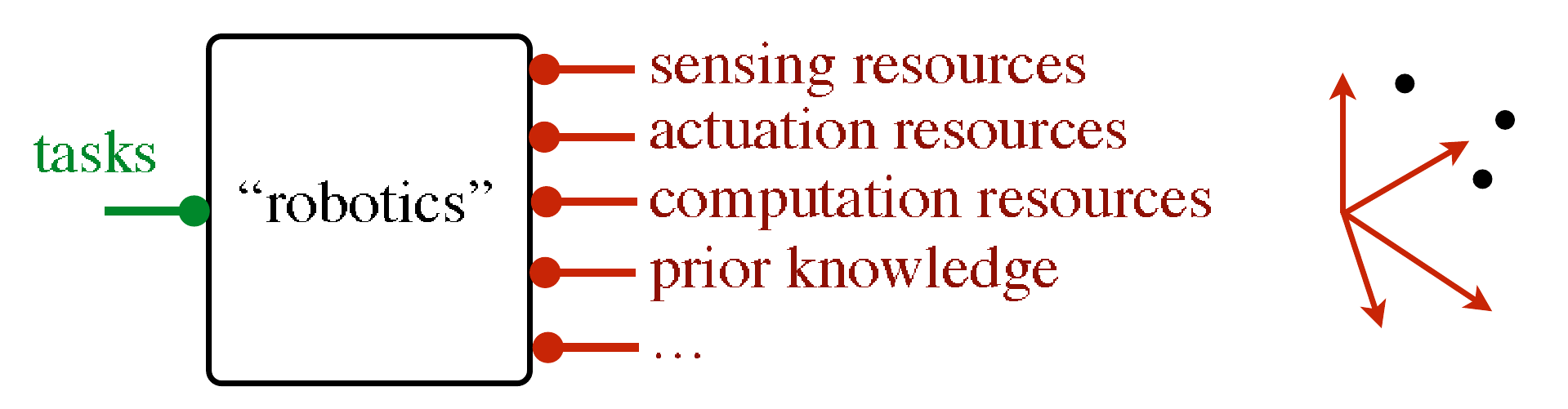}}

The poset structure arises naturally: for example, in the \emph{sensor
lattice}~\cite{lavalle12sensing}\emph{,} a sensor dominates another
if it induces a finer partition of the state space. Similar dominance
relations can be defined for actuation and computation. O'Kane and
Lavalle~\cite{okane08comparing} define a robot as a union of ``robotic
primitives'', where each primitive is an abstraction for a set of
sensors, actuators, and control strategies that can be used together
(e.g., a compass plus a contact sensor allow to ``drive North until
a wall is hit''). The effect of each primitive is modeled as an operator
on the robot's information space. It is possible to work out what
are the minimal combinations of robotic primitives (minimal antichain)
that are sufficient to perform a task (e.g., global localization),
and describe a dominance relation (partial order) of primitives. Other
works have focused on minimizing the complexity of the controller.
Egerstedt~\cite{egerstedt03motion} studies the relation between
the \F{complexity of the environment} and a notion of \R{minimum
description length of control strategies}, which can be taken as
a proxy for the computation necessary to perform the task. Soatto~\cite{soatto11steps}
studies the relation between the \F{performance of a visual task},
and the \R{ minimal representation} that is needed to perform that
task. 

The hope is that the theory of co-design presented in this paper will
help to integrate all this previous work in the same theoretical and
quantitative framework.

\section{Problem statement and summary of results\label{sec:Optimization}}

Given an arbitrary graph of design problems, and assuming we know
how to solve each problem separately, we ask whether we can solve
the \emph{co-design} problem optimally. 
\begin{problem}
\label{prob:MCDP}Suppose that we are given a CDPI~$\left\langle \funsp,\ressp,\left\langle \cdpiN,\mathcal{E}\right\rangle \right\rangle ,$
and that we can evaluate the map~$\ftor_{\cdpin}$ for all~$\cdpin\in\cdpiN$.
Given a required functionality~$\fun\in\funsp$, we wish to find
the \emph{minimal} resources in~$\ressp$ for which there exists
a feasible implementation vector that makes all sub-problems feasible
at the same time and all co-design constraints satisfied; or, if
none exist, provide a certificate of infeasibility. 
\end{problem}
In other words, given the maps~$\{\ftor_{\cdpin},\,\cdpin\in\cdpiN\}$
for the subproblems, one needs to evaluate the map~$\ftor:\funsp\rightarrow\Aressp$
for the entire CDPI (\figref{question}). 

\captionsideleft{\label{fig:question}}{\hspace{-4mm}\includegraphics[width=7cm]{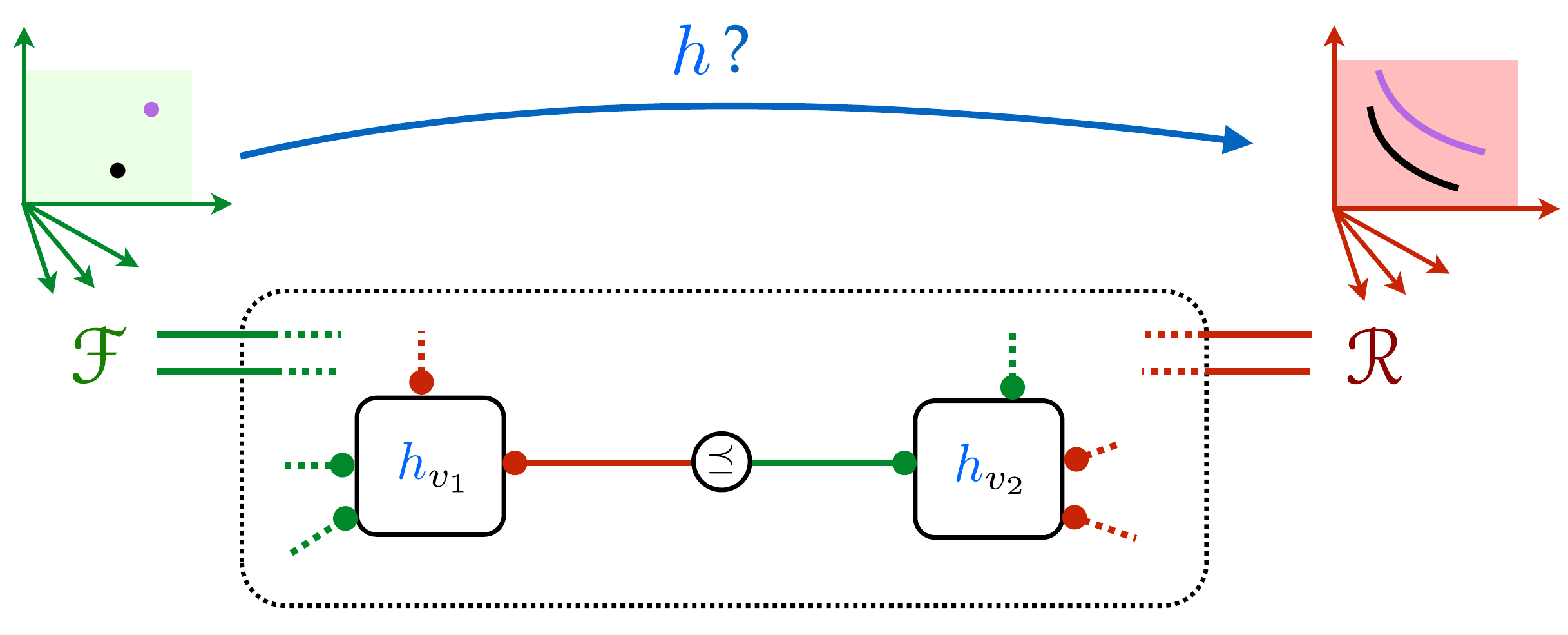}}

The rest of the paper will provide a solution to~\probref{MCDP},
under the assumption that all the DPIs inside the CDPI are ``monotone'',
in the sense of~\defref{DPI-monotone}.
\begin{defn}
\label{def:DPI-monotone}A DPI $\left\langle \funsp,\ressp,\impsp,\exc,\eval\right\rangle $
is ``monotone\emph{''} if
\begin{enumerate}
\item The posets $\funsp,\ressp$ are complete partial orders (\defref{cpo}).
\item The map~$\ftor$ is \scottcontinuous (\defref{scott}).
\end{enumerate}
\end{defn}

Call ``Monotone Co-Design Problems'' (MCDPs) the set of CDPIs for
which all subproblems respect the conditions in~\defref{DPI-monotone}.
 I will show two main results:

1) A \textbf{modeling result} (Theorem~\vref{thm:CDP-monotone})
says that the class of MCDPs is closed with respect to arbitrary interconnections.
Therefore, given a co-design diagram, such as the one in~\figref{shipping},
if we know that each design problem is an MCDP, we can conclude that
the diagram represents an MCDP as well.

2) An \textbf{algorithmic result }(Theorem~\vref{thm:CDP-solvig})
says that the functionality-resources map~$\ftor$ for the entire
MCDP has an explicit expression in terms of the maps~$\{\ftor_{\cdpin},\,\cdpin\in\cdpiN\}$
for the subproblems. If there are cycles in the co-design diagram,
the map~$\ftor$ involves the solution of a\emph{ least fixed point}
equation in the space of antichains. This equation can be solved using
Kleene's algorithm to find the antichain containing all minimal solutions
at the same time.

\subsubsection*{Approach}

The strategy to obtain these results consists in reducing an arbitrary
interconnection of design problems to considering\emph{ }only a finite
number of composition operators (series, parallel, and feedback).
\secref{threeoperators} defines these composition operators. \secref{Decomposition}~shows
how to turn a graph into a tree, where each junction is one of the
three operators. Given the tree representation of an MCDPs, we will
be able to give inductive arguments to prove the results.

\subsubsection*{Expressivity of MCDPs}

The results are significant because MCDPs induce a rich family of
optimization problems. 

We are not assuming, let alone strong properties like convexity, even
weaker properties like differentiability or continuity of the constraints.
In fact, we are not even assuming that functionality and resources
are continuous spaces; they could be arbitrary discrete posets. (In
that case, completeness and \scottcontinuity are trivially satisfied.)

Moreover, even assuming topological continuity of all spaces and maps
considered, MCDPs are strongly not convex. What makes them nonconvex
is the possibility of introducing feedback interconnections. To show
this, I will give an example of a 1-dimensional problem with a continuous~$\ftor$
for which the feasible set is disconnected.

\begin{figure}[H]
\hfill{}\subfloat[\label{fig:Simple-DP}]{\centering{}\includegraphics[scale=0.33]{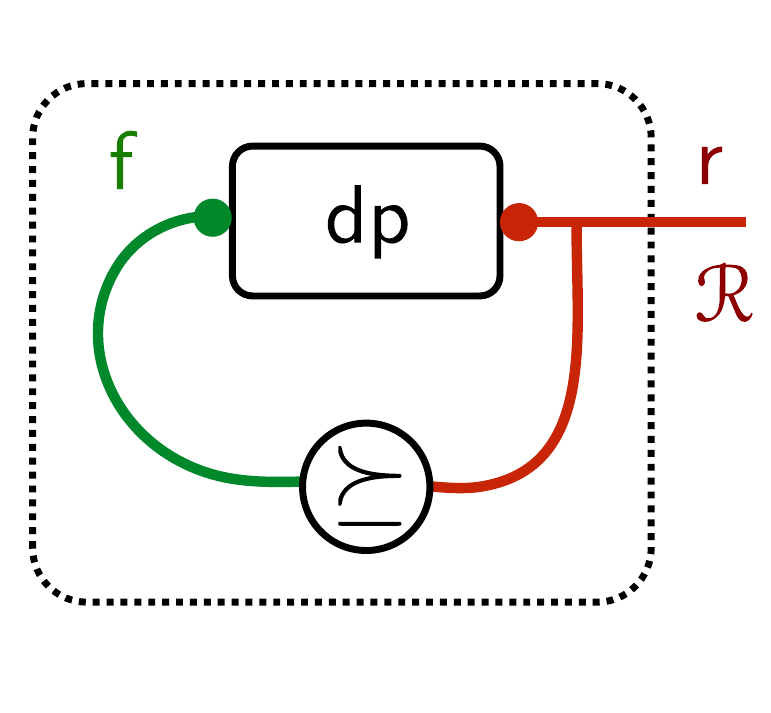}}\hfill{}\subfloat[\label{fig:nonconvex3}]{\centering{}\includegraphics[scale=0.33]{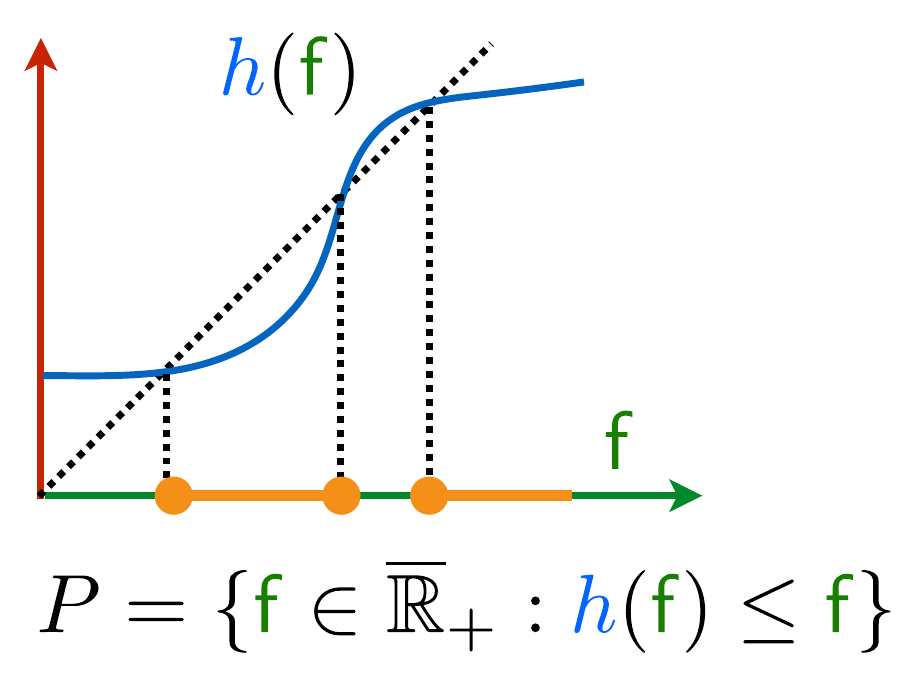}}\hfill{}

\caption{\label{fig:ceil-1}One feedback connection and a topologically continuous~$\ftor$
are sufficient to induce a disconnected feasible set.}
\end{figure}

\medskip{}

\begin{example}
\label{exa:one}Consider the CDPI in \figref{Simple-DP}. The \uline{m}inimal
resources~$M\subseteq\Aressp$ are the objectives of this optimization
problem:
\[
M\doteq\begin{cases}
\with & \fun,\res\in\funsp=\ressp,\\
\Min_{\posleq} & \res,\\
 & \res\in\ftor(\fun),\\
 & \res\posleq\fun.
\end{cases}
\]
The \uline{fea}sible set~$\Phi\subseteq\funsp\times\ressp$ is
the set of functionality and resources that satisfy the constraints~$\res\in\ftor(\fun)$
and~$\res\posleq\fun$:
\begin{equation}
\Phi=\left\{ \langle\fun,\res\rangle\in\funsp\times\ressp:(\res\in\ftor(\fun))\wedge(\res\posleq\fun)\right\} .\label{eq:feasible}
\end{equation}
The \uline{p}rojection~$P$ of~$\Phi$ to the functionality space
is:
\[
P=\left\{ \fun\mid\left\langle \fun,\res\right\rangle \in\Phi\right\} .
\]

In the scalar case ($\funsp=\ressp=\langle\nonNegRealsComp,\leq\rangle$),
the map~$\ftor\colon\funsp\rightarrow\Aressp$ is simply a map~$\ftor\colon\F{\nonNegRealsComp}\rightarrow\R{\nonNegRealsComp}$.
The set~$P$ of feasible functionality is described by
\begin{equation}
P=\{\fun\in\nonNegRealsComp:\ftor(\fun)\leq\fun\}.\label{eq:Pfeasible}
\end{equation}
\figref{nonconvex3} shows an example of a continuous map~$\ftor$
that gives a disconnected feasible set~$P$. Moreover, $P$ is disconnected
under any order-preserving nonlinear re-parametrization.

\end{example}

\section{Composition operators for design problems}

This section defines a handful of composition operators for design
problems. Later, \secref{Decomposing2} will prove that any co-design
problem can be described in terms of a subset of these operators. 

\label{sec:threeoperators}
\begin{defn}[$\dpseries$]
\label{def:series-composition}The series composition of two DPIs
$\dprob_{1}=\langle\funsp_{1},\ressp_{1},\impsp_{1},\exc_{1},\eval_{1}\rangle$
and $\dprob_{2}=\langle\funsp_{2},\ressp_{2},\impsp_{2},\exc_{2},$
$\eval_{2}\rangle$, for which~$\funsp_{2}=\ressp_{1}$, is 
\[
\dpseries(\dprob_{1},\dprob_{2})\doteq\left\langle \funsp_{1},\ressp_{2},\impsp,\exc,\eval\right\rangle ,
\]
where:
\begin{eqnarray*}
\impsp & = & \{\left\langle \imp_{1},\imp_{2}\right\rangle \in\impsp_{1}\times\impsp_{2}\mid\eval_{1}(\imp_{1})\posleq_{\ressp_{1}}\exc_{2}(\imp_{2})\},\\
\exc & : & \left\langle \imp_{1},\imp_{2}\right\rangle \mapsto\exc_{1}(\imp_{1}),\\
\eval & : & \left\langle \imp_{1},\imp_{2}\right\rangle \mapsto\eval_{2}(\imp_{2}).
\end{eqnarray*}
\end{defn}
\captionsideleft{\label{fig:composition-2}}{\includegraphics[scale=0.33]{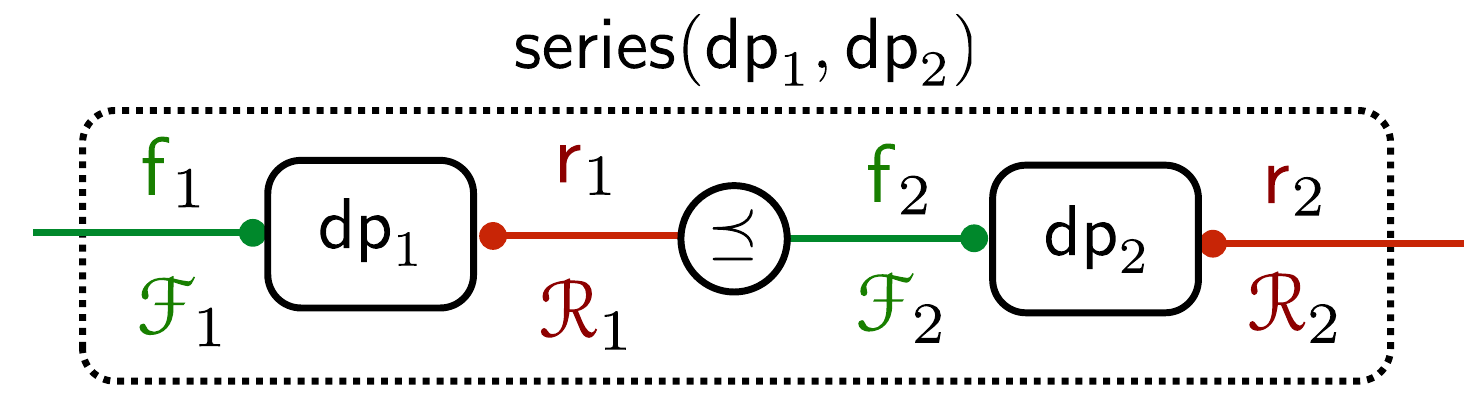}}
\begin{defn}[$\dppar$]
\label{def:parallel}The parallel composition of two DPIs $\dprob_{1}=\left\langle \funsp_{1},\ressp_{1},\impsp_{1},\exc_{1},\eval_{1}\right\rangle $
and $\dprob_{2}=\langle\funsp_{2},\ressp_{2},\impsp_{2},\exc_{2},$
$\eval_{2}\rangle$ is
\[
\dppar(\dprob_{1},\dprob_{2})\doteq\left\langle \funsp_{1}\times\funsp_{2},\ressp_{1}\times\ressp_{2},\impsp_{1}\times\impsp_{2},\exc,\eval\right\rangle ,
\]
where:
\begin{eqnarray}
\exc & : & \left\langle \imp_{1},\imp_{2}\right\rangle \mapsto\left\langle \exc_{1}(\imp_{1}),\exc_{2}(\imp_{2})\right\rangle ,\label{eq:dppar-exec}\\
\eval & : & \left\langle \imp_{1},\imp_{2}\right\rangle \mapsto\left\langle \eval_{1}(\imp_{1}),\eval_{2}(\imp_{2})\right\rangle .\nonumber 
\end{eqnarray}

\captionsideleft{\label{fig:dppar}}{\includegraphics[scale=0.33]{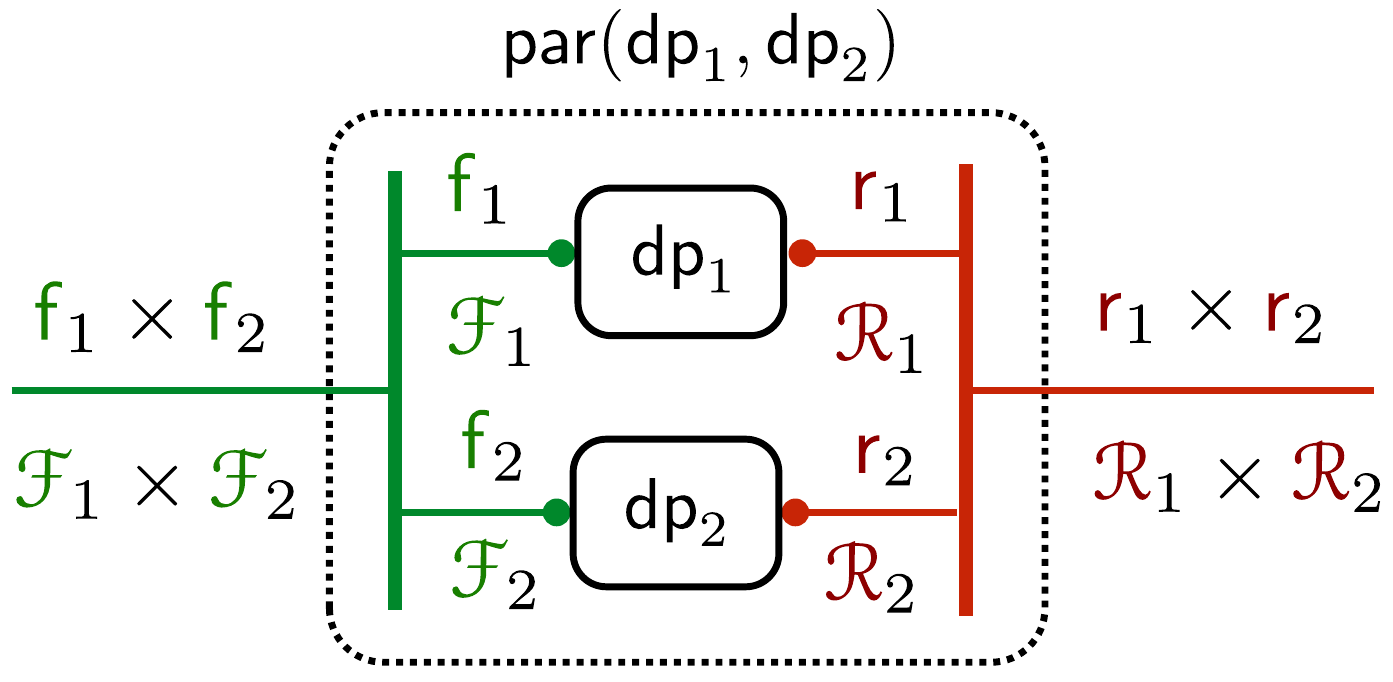}}
\end{defn}

\begin{defn}[$\dploop$]
\label{def:dp_loop}Suppose~$\dprob$ is a DPI with factored functionality
space~$\funsp_{1}\times\ressp$: 
\[
\dprob=\left\langle \funsp_{1}\times\ressp,\ressp,\impsp,\left\langle \exc_{1},\exc_{2}\right\rangle ,\eval\right\rangle .
\]
Then we can define the DPI~$\dploop(\dprob)$ as 
\[
\dploop(\dprob)\doteq\left\langle \funsp_{1},\ressp,\impsp',\exc_{1},\eval\right\rangle ,
\]
where~$\impsp'\subseteq\impsp$ limits the implementations to those
that respect the additional constraint~$\eval(\imp)\posleq\exc_{2}(\imp)$:
\[
\impsp'=\{\imp\in\impsp:\eval(\imp)\posleq\exc_{2}(\imp)\}.
\]
This is equivalent to ``closing a loop'' around~$\dprob$ with
the constraint~$\fun_{2}\posgeq\res$~(\figref{sloop}).
\end{defn}
\captionsideleft{\label{fig:sloop}\label{fig:sloop2}}{\includegraphics[scale=0.33]{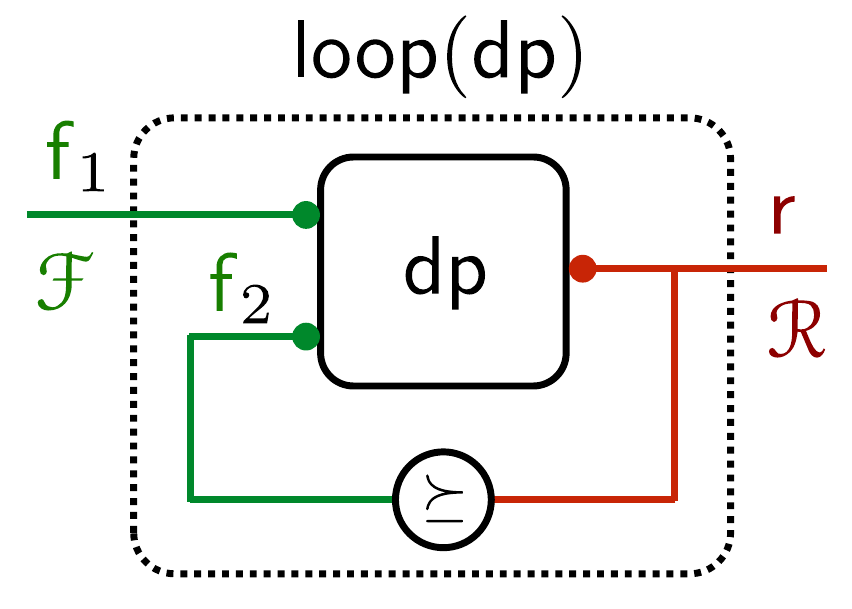}}

The operator~$\dploop$ is asymmetric because it acts on a design
problem with 2 functionalities and 1 resources. We can define a symmetric
feedback operator $\dploopb$ as in \figref{loop_general}, which
can be rewritten in terms of $\dploop$, using the construction in~\figref{loop_general2}\emph{.}

\begin{figure}[H]
\hspace*{\fill}\subfloat[\label{fig:loop_general}]{\includegraphics[scale=0.33]{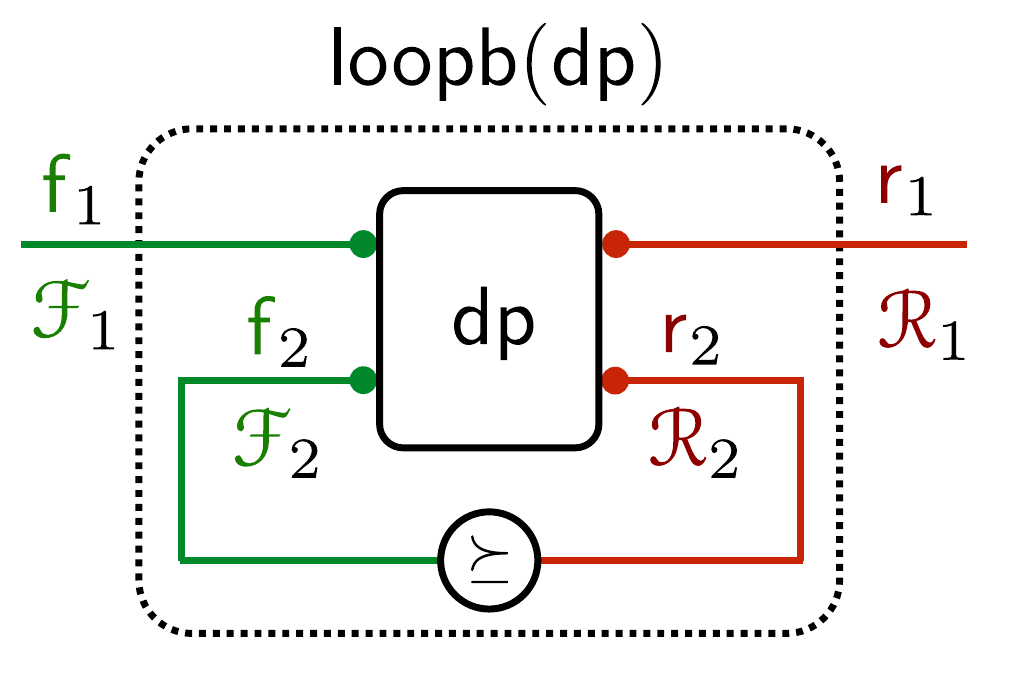}

}\hspace*{\fill}\subfloat[\label{fig:loop_general2}]{\includegraphics[scale=0.33]{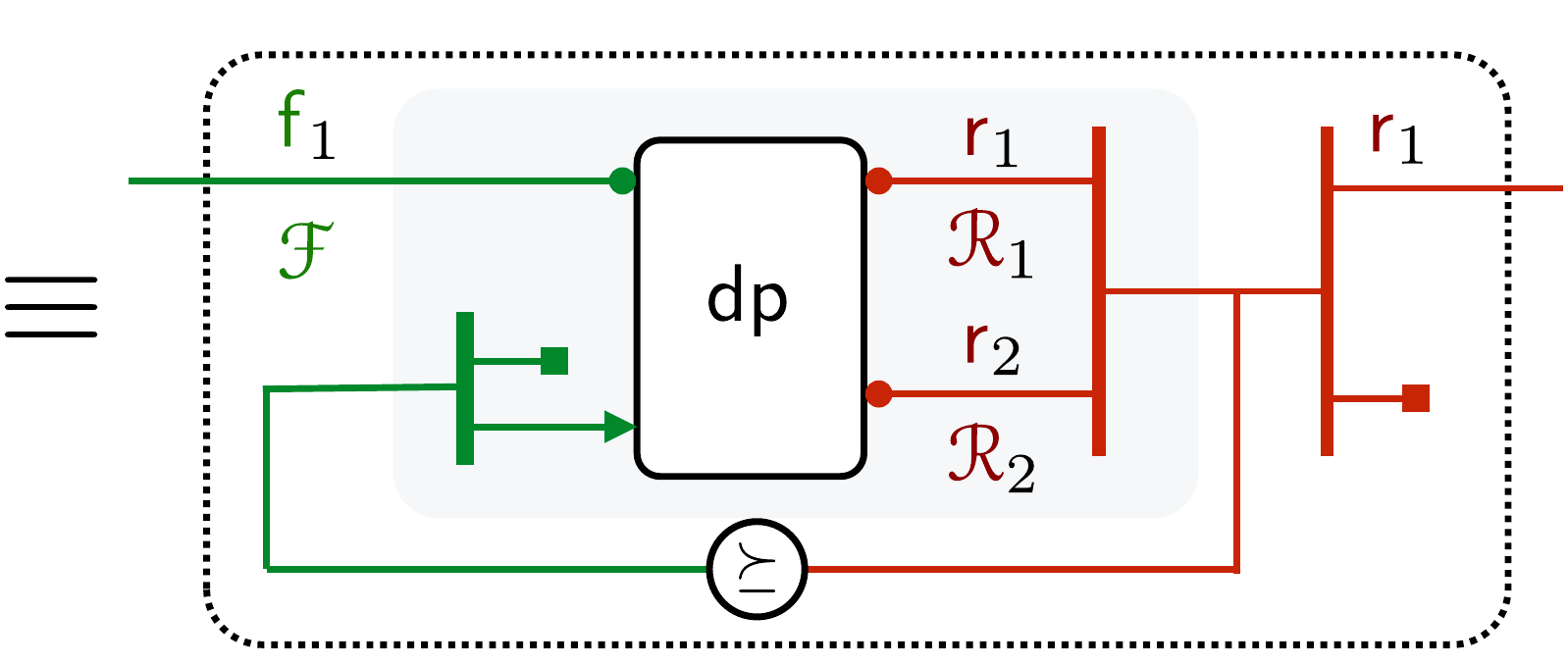}}\hspace*{\fill}\caption{A symmetric operator $\dploopb$ can be defined in terms of $\dploop$.}
\end{figure}

A ``co-product'' (see,~\eg\,\cite[Section 2.4]{spivak14category})
of two design problems is a design problem with the implementation
space~$\impsp=\impsp_{1}\sqcup\impsp_{2}$, and it represents the
exclusive choice between two possible alternative families of designs.
\begin{defn}[Co-product]
\label{def:parallel-1}Given two DPIs with same functionality and
resources $\dprob_{1}=\left\langle \funsp,\ressp,\impsp_{1},\exc_{1},\eval_{1}\right\rangle $
and $\dprob_{2}=\left\langle \funsp,\ressp,,\impsp_{2},\exc_{2},\eval_{2}\right\rangle ,$
define their co-product as
\[
\dprob_{1}\sqcup\dprob_{2}\doteq\left\langle \funsp,\ressp,\impsp_{1}\sqcup\impsp_{2},\exc,\eval\right\rangle ,
\]
where
\begin{eqnarray}
\exc & : & \imp\mapsto\begin{cases}
\exc_{1}(\imp), & \text{if }\imp\in\impsp_{1},\\
\exc_{2}(\imp), & \text{if }\imp\in\impsp_{2},
\end{cases}\label{eq:dppar-exec-1}\\
\eval & : & \imp\mapsto\begin{cases}
\eval_{1}(\imp), & \text{if }\imp\in\impsp_{1},\\
\eval_{2}(\imp), & \text{if }\imp\in\impsp_{2}.
\end{cases}\nonumber 
\end{eqnarray}

\captionsideleft{\label{fig:dpcoproduct}}{\includegraphics[scale=0.33]{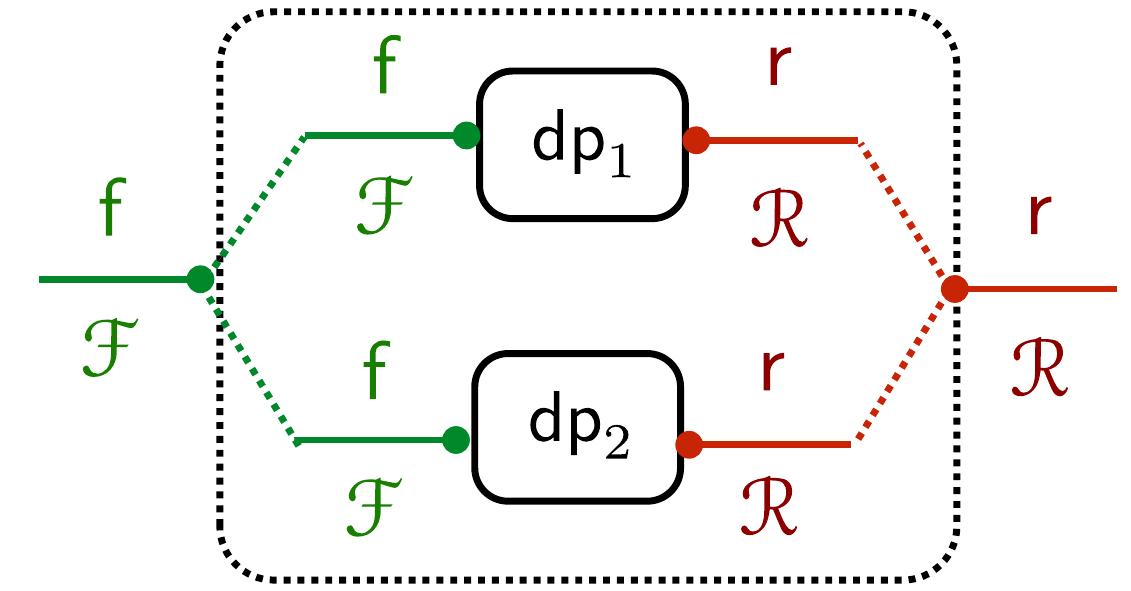}}
\end{defn}

\section{Decomposition of MCDPs\label{sec:Decomposition}}

\label{sec:Decomposing2}This section shows how to describe an arbitrary
interconnection of design problems using only three composition operators.
More precisely, for each CDPI with a set of atoms~$\cdpiN$, there
is an equivalent one that is built from $\dpseries$/$\dppar$/$\dploop$
applied to the set of atoms~$\cdpiN$ plus some extra ``plumbling''
(identities, multiplexers).

\subsubsection*{Equivalence}

The definition of equivalence below ensures that two equivalent DPIs
have the same map from functionality to resources, while one of the
DPIs can have a slightly larger implementation space. 
\begin{defn}
Two DPIs $\left\langle \funsp,\ressp,\impsp_{1},\exc_{1},\eval_{1}\right\rangle $
and $\langle\funsp,\ressp,$ $\impsp_{2},\exc_{2},\eval_{2}\rangle$
are \emph{equivalent} if there exists a map~$\varphi:\impsp_{2}\rightarrow\impsp_{1}$
such that~$\exc_{2}=\exc_{1}\circ\varphi$ and~$\eval_{2}=$ $\eval_{1}\circ\varphi$.
\end{defn}

\subsubsection*{Plumbing}

We also need to define ``trivial DPIs'', which serve as ``plumbing''.
These can be built by taking a map $f:\funsp\rightarrow\ressp$ and
lifting it to the definition of a DPI. The implementation space of
a trivial DPI is a copy of the functionality space and there is a
1-to-1 correspondence between functionality and implementation.
\begin{defn}[Trivial DPIs]
Given a map~$f:\funsp\rightarrow\ressp$, we can lift it to define
a trivial DPI $\triv(f)=\left\langle \funsp,\ressp,\funsp,\idFunc_{\funsp},f\right\rangle $,
where~$\idFunc_{\funsp}$ is the identity on~$\funsp$. 
\end{defn}
\begin{prop}
\label{prop:reduction}Given a CDPI $\left\langle \funsp,\ressp,\left\langle \cdpiN,\mathcal{E}\right\rangle \right\rangle $,
we can find an equivalent CDPI obtained by applying the operators
$\dppar/\dpseries/\dploop$ to a set of atoms~$\cdpiN'$ that contains~$\cdpiN$
plus a set of trivial DPIs. Furthermore, one instance of~$\dploop$
is sufficient.
\end{prop}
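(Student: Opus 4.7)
The plan is to build the equivalent CDPI in three stages: parallelize the atoms, insert trivial ``plumbing'' to reshape the product spaces, and close a single feedback. First, I would form $\dprob^{\text{all}}$ by iterated application of $\dppar$ to all atoms in $\cdpiN$; this yields a DPI whose functionality and resources spaces are the full products $\funsp^{\text{all}} = \prod_{\cdpin} \funsp_{\cdpin}$ and $\ressp^{\text{all}} = \prod_{\cdpin} \ressp_{\cdpin}$, with $\exc$ and $\eval$ acting factorwise and no interconnections yet. I would then split $\funsp^{\text{all}} \cong \funsp^{\text{free}} \times \funsp^{\text{conn}}$ and $\ressp^{\text{all}} \cong \ressp^{\text{free}} \times \ressp^{\text{conn}}$, where the ``conn'' factors collect the ports appearing as sinks and sources of the edges in~$\mathcal{E}$ and ``free'' the unconnected ports; the edge set induces a poset isomorphism $\iota \colon \ressp^{\text{conn}} \to \funsp^{\text{conn}}$ via the well-formedness clause $\pi_{\cdpiresindA}\ressp_{\cdpinA} = \pi_{\cdpifunindB}\funsp_{\cdpinB}$ in \defref{cdpi}.

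Second, I would pre-compose with a trivial DPI $\triv(\alpha)$ of functionality $\funsp^{\text{free}} \times \ressp^{\text{all}}$ and resources $\funsp^{\text{all}}$, with map
\[
\alpha \colon (\fun^{\text{free}}, (\res^{\text{free}}, \res^{\text{conn}})) \mapsto (\fun^{\text{free}}, \iota(\res^{\text{conn}})).
\]
The series composition $\dpseries(\triv(\alpha), \dprob^{\text{all}})$ now has functionality of exactly the form $\funsp_{1} \times \ressp$ required by \defref{dp_loop}, so one application of $\dploop$ closes every edge constraint simultaneously, producing a DPI with functionality $\funsp^{\text{free}}$ and resources $\ressp^{\text{all}}$. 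Post-composing in series with a trivial projection $\triv(\pi)\colon \ressp^{\text{all}} \to \ressp^{\text{free}}$ discards the auxiliary $\ressp^{\text{conn}}$, yielding the external interface $\funsp^{\text{free}}, \ressp^{\text{free}}$ that the CDPI prescribes. Equivalence with the original CDPI is witnessed by the inflation map $\varphi$ that sends each feasible atom-implementation tuple to an augmented tuple with canonical plumbing choices (e.g., $\res^{\text{conn}} = \eval^{\text{conn}}(\imp)$); the identities $\exc_{\text{orig}} = \exc_{\text{new}} \circ \varphi$ and $\eval_{\text{orig}} = \eval_{\text{new}} \circ \varphi$ then follow from the factorwise definitions of $\dppar$, $\dpseries$, and $\dploop$.

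The main obstacle is verifying the ``one $\dploop$ suffices'' clause by unfolding the chain of constraints induced by the series-loop-series construction. After eliminating the auxiliary $\res^{\text{conn}}$ introduced in the loop (optimally $\eval^{\text{conn}}(\imp)$ by monotonicity and componentwise decoupling of the product order), the remaining feasibility condition collapses to $\iota(\eval^{\text{conn}}(\imp)) \posleq \exc^{\text{conn}}(\imp)$, which componentwise is exactly the family of edge inequalities $\pi_{\cdpiresindA}\eval_{\cdpinA}(\pi_{\cdpinA}\imp) \posleq \pi_{\cdpifunindB}\exc_{\cdpinB}(\pi_{\cdpinB}\imp)$ of \defref{cdpi}. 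Conceptually, bundling every looped port into a single product factor reduces $|\mathcal{E}|$ separate closures to one because product orders decouple componentwise; the rest is routine bookkeeping.
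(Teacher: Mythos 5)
Your construction is correct, but it takes a genuinely different route from the paper's. You put all atoms in a single $\dppar$, insert one trivial rewiring DPI that routes \emph{every} connected resource port back to its matching functionality port, and close \emph{all} edges of $\mathcal{E}$ with a single $\dploop$ --- the standard ``normal form'' of a traced monoidal category. The paper instead first computes an \emph{arc feedback set} $F\subseteq\mathcal{E}$, removes it, writes the resulting acyclic graph as a series--parallel composition along a weak topological ordering (with identity connectors as the trivial DPIs), and then applies $\dploop$ only to the edges in $F$ (bundled into one product). Both arguments legitimately establish ``one $\dploop$ suffices,'' and yours is arguably simpler and more uniform: it needs no graph algorithms and no case analysis on the acyclic part. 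What the paper's extra work buys is efficiency downstream: the looped resource space becomes the state space of the Kleene iteration, so the complexity bounds of \propref{complexity} scale with $\posetwidth(\prod_{e\in F}\ressp_{e})$, and \defref{design-complexity} is precisely the minimization of this quantity over feedback arc sets. Your construction corresponds to the worst admissible choice $F=\mathcal{E}$ (indeed you also feed back the unconnected resource ports, which is harmless but further inflates the loop), and it introduces a $\dploop$ even when the diagram is acyclic, where the paper needs none. Two small informalities to note, though neither is worse than the paper's own level of rigor: the claimed equality $\eval_{\text{orig}}=\eval_{\text{new}}\circ\varphi$ only holds after tightening the slack in the trivial connectors (series composition permits $\res^{\text{conn}}$ strictly above $\eval^{\text{conn}}(\imp)$), and the map $\iota$ is a poset isomorphism only under the implicit assumption that each port is incident to at most one edge.
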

\begin{proof}
We show this constructively. We will temporarily remove all cycles
from the graph, to be reattached later. To do this, find an \emph{arc
feedback set} (AFS) $F\subseteq\mathcal{E}$. An AFS is a set of edges
that, when removed, remove all cycles from the graph (see, e.g.,~\cite{golovach15incremental}).
For example, the CDPI represented in~\figref{cdpi_comp1} has a minimal
AFS that contains the edge~$\text{c}\rightarrow\text{a}$~(\figref{cdpi_comp2}). 

\begin{figure}[H]
\subfloat[\label{fig:cdpi_comp1}]{\centering{}\includegraphics[scale=0.33]{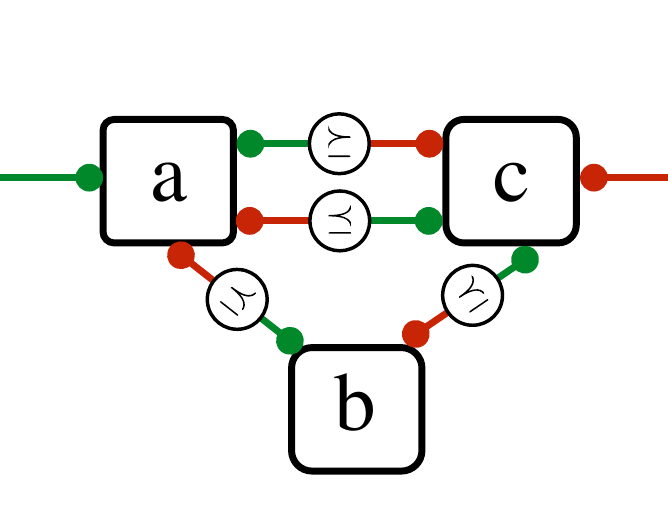}}\hfill{}\subfloat[\label{fig:cdpi_comp2}]{\centering{}\includegraphics[scale=0.33]{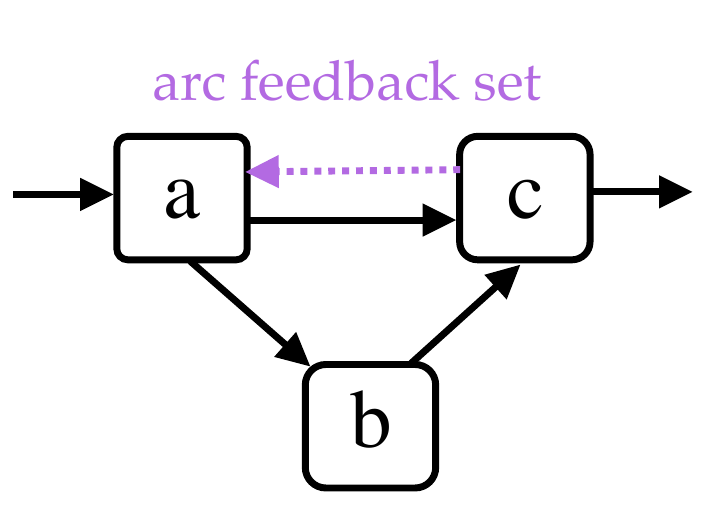}}\hfill{}\subfloat[\label{fig:cdpi_comp2b}]{\begin{centering}
\includegraphics[scale=0.33]{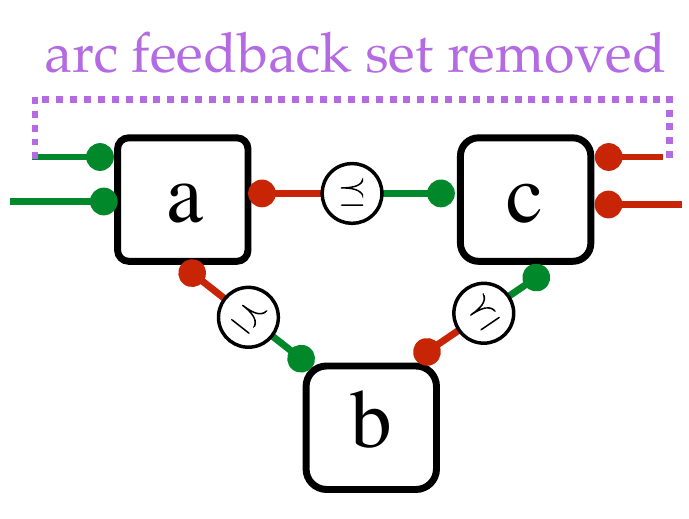}
\par\end{centering}

}

\caption{An example co-design diagram with three nodes~$\cdpiN=\{\text{a},\text{b},\text{c}\}$,
in which a minimal arc feedback set is $\{\text{c}\rightarrow\text{a}$\}.}
\end{figure}

Remove the~AFS~$F$ from~$\mathcal{E}$ to obtain the reduced edge
set~$\mathcal{E}'=\mathcal{E}\backslash F$.  The resulting graph~$\left\langle \cdpiN,\mathcal{E}'\right\rangle $
does not have cycles, and can be written as a series-parallel graph,
by applying the operators~$\dppar$ and~$\dpseries$ from a set
of nodes~$\cdpiN'$. The nodes~$\cdpiN'$ will contain~$\cdpiN$,
plus some extra ``connectors'' that are trivial DPIs. Find a weak
topological ordering of~$\cdpiN$. Then the graph~$\left\langle \cdpiN,\mathcal{E}'\right\rangle $
can be written as the series of~$|\cdpiN|$ subgraphs, each containing
one node of~$\cdpiN$. In the example, the weak topological ordering
is~$\left\langle \text{a},\text{b},\text{c}\right\rangle $ and there
are three subgraphs (\figref{cdpi_comp3}). 

\captionsideleft{\label{fig:cdpi_comp3}}{\includegraphics[scale=0.33]{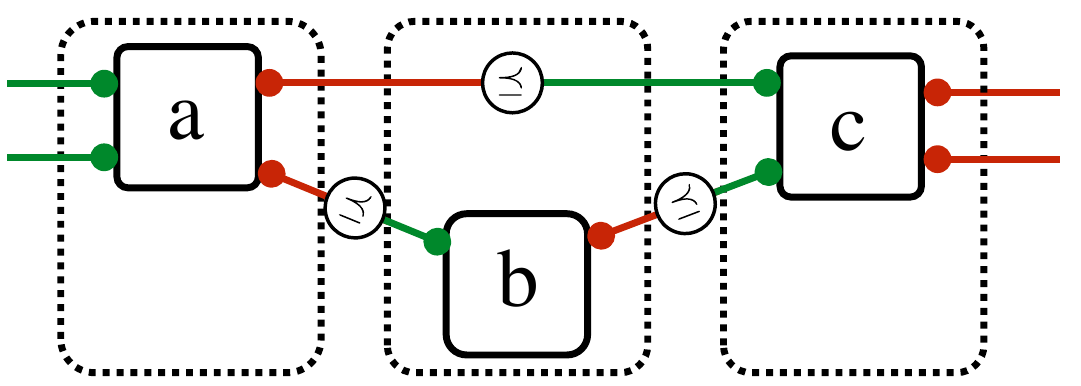}}

\noindent Each subgraph can be described as the parallel interconnection
of a node~$\cdpin\in\cdpiN$ and some extra connectors. For example,
the second subgraph in the graph can be written as the parallel interconnection
of node~$\text{b}$ and the identity $\triv(\idFunc)$ (\figref{cdpi_comp4}).

\captionsideleft{\label{fig:cdpi_comp4}}{\includegraphics[scale=0.33]{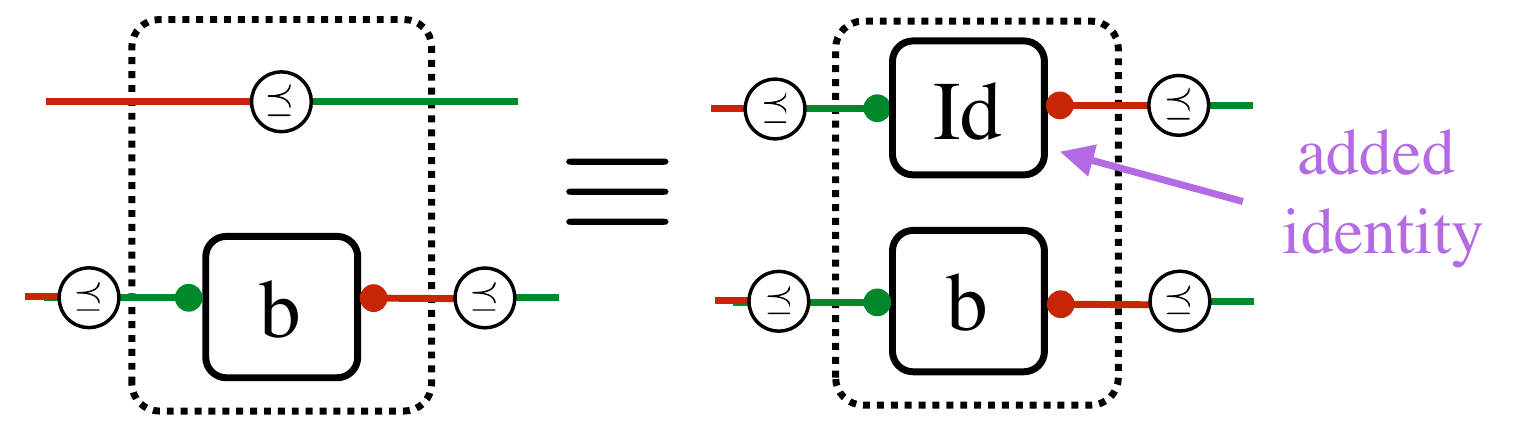}}

After this is done, we just need to ``close the loop'' around the
edges in the AFS~$F$ to obtain a CDPI that is equivalent to the
original one. Suppose the AFS~$F$ contains only one edge. Then one
instance of the~$\dploopb$ operator is sufficient~(\figref{cdpi_comp5}).
In this example, the tree representation (\figref{cdpi_comp6}) is
\[
\dploopb(\dpseries(\dpseries(\text{a},\dppar(\idFunc,\text{b})),\text{c}).
\]

\begin{figure}[H]
\hfill{}\subfloat[\label{fig:cdpi_comp5}]{\centering{}\includegraphics[scale=0.33]{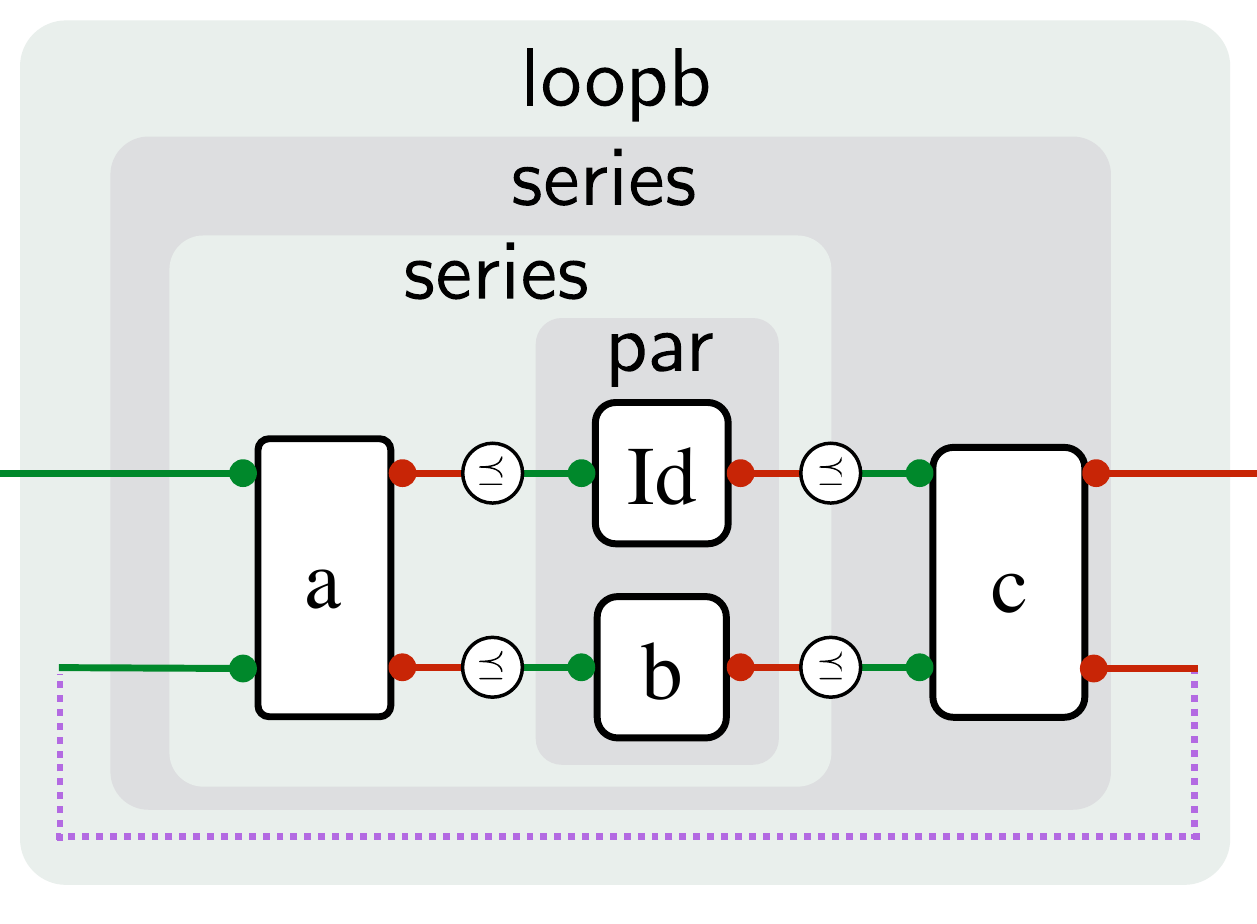}}\hfill{}\subfloat[\label{fig:cdpi_comp6}]{\centering{}\includegraphics[scale=0.33]{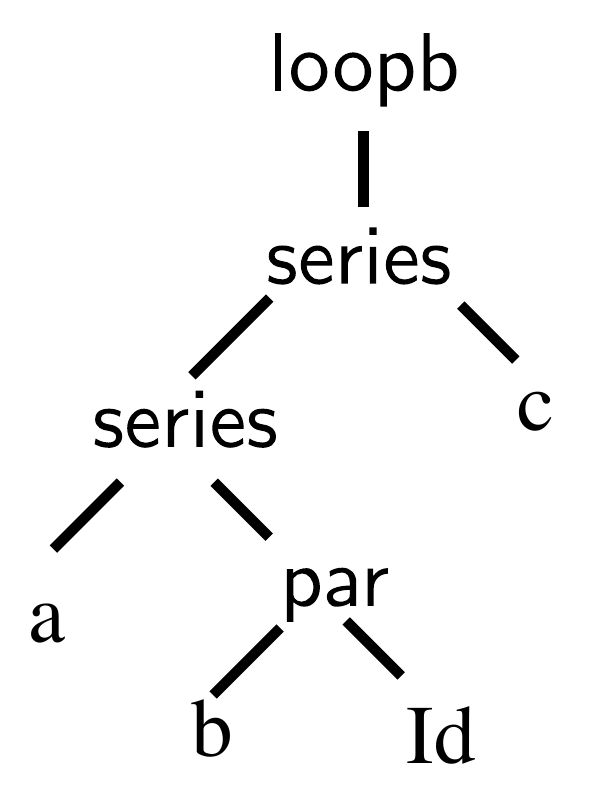}}\hfill{}

\caption{Tree representation for the co-design diagram in~\figref{cdpi_comp1}.}
\end{figure}

If the AFS contains multiple edges, then, instead of closing one loop
at a time, one can can always rewrite multiple nested loops as only
one loop by taking the product of the edges. For example, a diagram
like the one in~\figref{nested1} can be rewritten as~\figref{nested2}.
This construction is analogous to the construction used for the analysis
of process networks~\cite{lee10} (and any other construct involving
a traced monoidal category). Therefore, it is possible to describe
an arbitrary graph of design problems using only one instance of the
$\dploop$ operator.
\end{proof}

\begin{figure}[H]
\subfloat[\label{fig:nested1}]{\begin{centering}
\includegraphics[scale=0.33]{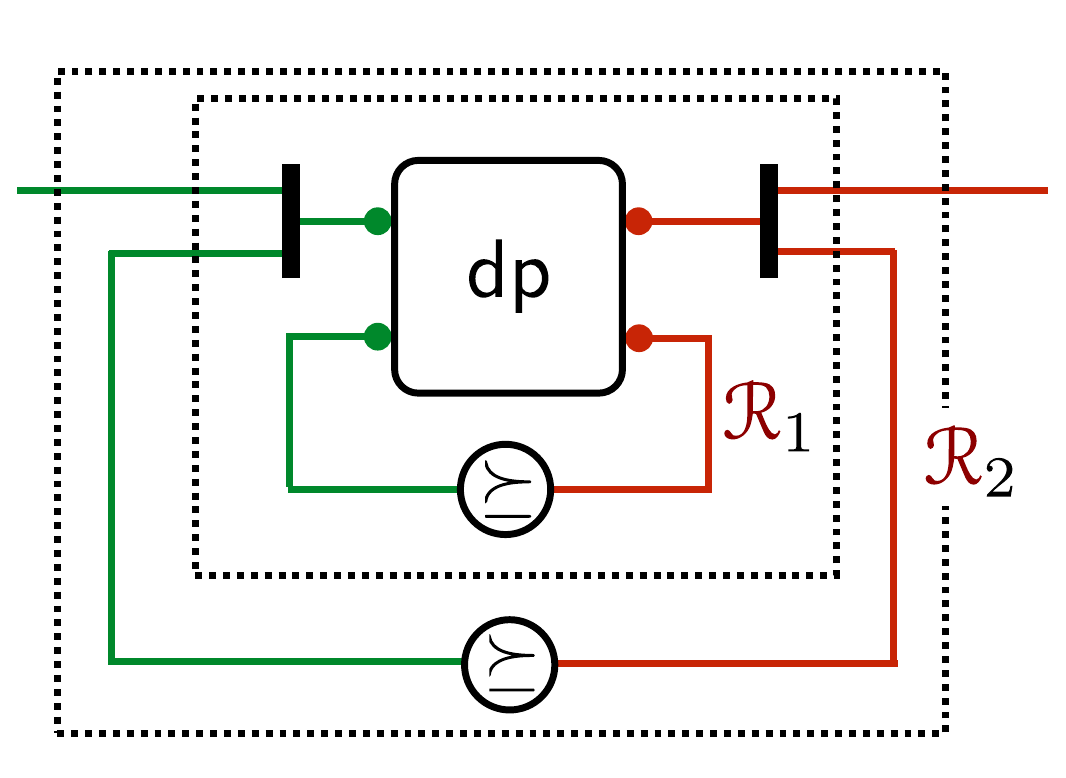}
\par\end{centering}
}\subfloat[\label{fig:nested2}]{\begin{centering}
\includegraphics[scale=0.33]{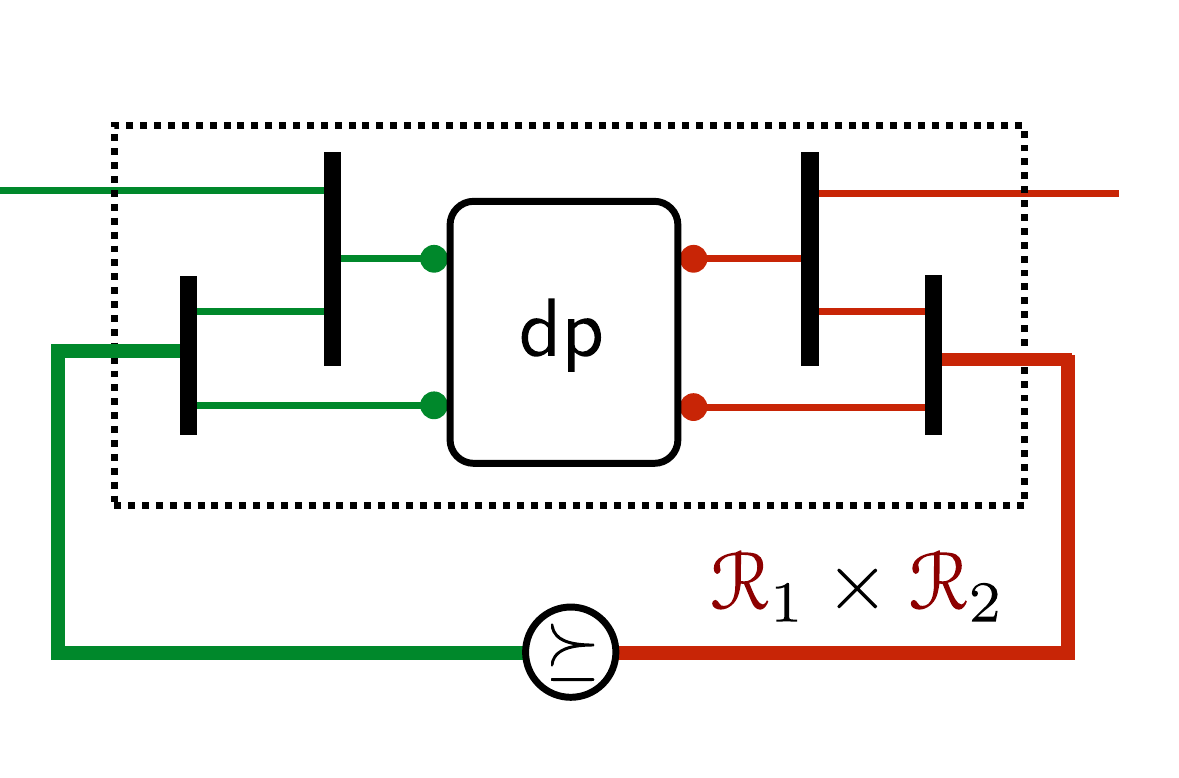}
\par\end{centering}
}

\caption{\label{fig:If-there-are}If there are nested loops in a co-design
diagram, they can be rewritten as one loop, by taking the product
of the edges.}
\end{figure}

\section{Monotonicity as compositional property\label{sec:Monotone-Co-Design-Problems}}

The first main result of this paper is an invariance result.

\noindent 
\fbox{\begin{minipage}[t]{0.95\columnwidth}
\begin{thm}
\label{thm:CDP-monotone}The class of MCDPs is closed with respect
to interconnection.
\end{thm}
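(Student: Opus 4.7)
The plan is to leverage the decomposition result (\propref{reduction}): any interconnection of design problems is equivalent to one built from the operators $\dpseries$, $\dppar$, $\dploop$ applied to the atomic DPIs together with a finite collection of trivial ``plumbing'' DPIs. It therefore suffices to verify that (i)~trivial DPIs induced by the natural plumbing maps (identities, projections, diagonals, multiplexers) are monotone in the sense of \defref{DPI-monotone}, and (ii)~each of the three operators $\dpseries$, $\dppar$, $\dploop$ preserves monotonicity. Monotonicity here has two parts to check for each constructed DPI: the functionality and resources spaces must remain \CPOs, and the induced map $\ftor$ must be \scottcontinuous into $\Aressp$. The \CPO part is immediate, since a finite product of \CPOs (with the componentwise order) is again a \CPO, and the loop operator does not change the ambient spaces; so the real content is the continuity argument.

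First I would write $\ftor$ for each operator explicitly in terms of the $\ftor_i$ of the constituents. For $\dpseries(\dprob_1,\dprob_2)$ one gets
\[
\ftor(\fun) \;=\; \resMin \bigcup_{\res_1 \in \ftor_2(\fun)} \ftor_1(\res_1),
\]
for $\dppar(\dprob_1,\dprob_2)$ one gets the componentwise Pareto product
\[
\ftor(\fun_1,\fun_2) \;=\; \resMin\{\,(\res_1,\res_2) : \res_i \in \ftor_i(\fun_i)\,\},
\]
and for $\dploop(\dprob)$ one gets
\[
\ftor(\fun_1) \;=\; \resMin\{\,\res \in \ressp : \res \in \ftor_{\dprob}(\fun_1,\res)\,\}.
\]
Then I would check Scott continuity of $\ftor$ in each case. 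The series and parallel cases amount to showing that (a)~finite products of \scottcontinuous maps are \scottcontinuous, and (b)~the operation ``take a monotone family of upper sets, union them, and take minimal elements'' is \scottcontinuous as a map $\Afunsp\to\Aressp$; this is a direct computation from the definition of the order on $\Aressp$ in \lemref{antichains-are-poset}, using that directed suprema in $\Aressp$ correspond to intersections of the corresponding upper sets.

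The main obstacle will be the loop operator $\dploop$. Here the expression for $\ftor$ is implicit: it is a least fixed point of the map
\[
\Psi_{\fun_1} : \Aressp \to \Aressp, \qquad R \mapsto \resMin\!\bigcup_{\res \in R} \ftor_{\dprob}(\fun_1,\res),
\]
where the inner $\ftor_{\dprob}$ is precomposed with the diagonal on the $\res$-argument. To make sense of this I would first show that $\Aressp$ is itself a \CPO whenever $\ressp$ is (with bottom $\{\bot_\ressp\}$ and directed suprema given by intersecting the associated upper sets), then argue that $\Psi_{\fun_1}$ is \scottcontinuous in $R$ because $\ftor_{\dprob}$ is. By \lemref{kleene-1}, $\lfp(\Psi_{\fun_1})$ exists and equals the supremum of the Kleene chain, which gives a well-defined $\ftor$ for the looped DPI. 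The final and most delicate step is to show that the resulting map $\fun_1\mapsto \lfp(\Psi_{\fun_1})$ is itself \scottcontinuous; this I would do by exchanging the two suprema (the Kleene iteration in $\Aressp$ and the directed supremum in $\funsp_1$), which is legitimate because both $\Psi_{\fun_1}$ is \scottcontinuous jointly in $\fun_1$ and $R$, and suprema commute in \CPOs. Composing these three operator-level results along the tree produced by \propref{reduction} yields monotonicity of the full CDPI and hence the theorem.
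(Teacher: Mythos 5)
Your overall strategy coincides with the paper's: invoke \propref{reduction} to reduce an arbitrary interconnection to the operators $\dpseries$, $\dppar$, $\dploop$ applied to the atoms plus trivial plumbing, and then verify that each operator preserves the two conditions of \defref{DPI-monotone}. The parallel and series cases are fine in outline (note one index slip: with the paper's convention $\funsp_{2}=\ressp_{1}$ for $\dpseries(\dprob_{1},\dprob_{2})$, the composite map is $\fun_{1}\mapsto\Min\bigcup_{s\in\ftor_{1}(\fun_{1})}\ftor_{2}(s)$, i.e., $\ftor_{1}$ is applied first, whereas you apply $\ftor_{2}$ first). Your closed-form for the loop, $\ftor_{\dploop(\dprob)}(\fun_{1})=\resMin\{\res:\res\in\ftor_{\dprob}(\fun_{1},\res)\}$, is in fact correct, but it is not free: the actual constraint is $\res\in\ftor_{\dprob}(\fun_{1},\fun_{2})$ for \emph{some} $\fun_{2}\posgeq\res$, and replacing that existential by the diagonal $\fun_{2}=\res$ is precisely where the paper spends most of its effort (\lemref{antichain-write}, \lemref{antichain_inter}, and the bracketed converse argument in \proref{loop-continuous}).

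The genuine gap is in the fixed-point operator you propose for the Kleene iteration. Your map $\Psi_{\fun_{1}}\colon R\mapsto\resMin\bigcup_{\res\in R}\ftor_{\dprob}(\fun_{1},\res)$ omits the intersection with $\upit\res$ that appears in the paper's operator \eqref{bigphi}, and its least fixed point is in general \emph{not} the antichain of minimal feasible resources. Concretely, take $\funsp_{1}=\One$, $\ressp=\overline{\mathbb{N}}\times\overline{\mathbb{N}}$, and the \scottcontinuous map
\[
\ftor_{\dprob}\colon\langle a,b\rangle\mapsto\{\langle a+1,0\rangle,\langle 0,b+1\rangle\},
\]
which is realized by a DPI. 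The feasible resources are exactly those in $\upit\{\langle\top,0\rangle,\langle 0,\top\rangle\}$, so the correct answer is $\{\langle\top,0\rangle,\langle 0,\top\rangle\}$, and the paper's iteration indeed produces $R_{k}=\{\langle k,0\rangle,\langle 0,k\rangle\}$ converging to it. Your iteration instead stalls at the second step: $R_{1}=\{\langle 1,0\rangle,\langle 0,1\rangle\}$ and
\[
\Psi_{\fun_{1}}(R_{1})=\Min{}_{\resleq}\{\langle 2,0\rangle,\langle 0,1\rangle,\langle 1,0\rangle,\langle 0,2\rangle\}=R_{1},
\]
so $\lfp(\Psi_{\fun_{1}})=\{\langle 1,0\rangle,\langle 0,1\rangle\}$, whose elements are not feasible. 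The term $\cap\,\upit\res$ is exactly what prevents this cross-over between incomparable branches of the antichain, forcing each branch to track an ascending chain of candidates. Your final step (Scott continuity of $\fun_{1}\mapsto\lfp(\Phi_{\fun_{1}})$ by exchanging suprema) is sound and is what \lemref{dagger} supplies, but the theorem's proof does not go through with the operator as you have written it.
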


\end{minipage}}

\begin{proof}
\propref{reduction} has shown that any interconnection of design
problems can be described using the three operators~$\dppar$, $\dpseries$,
and~$\dploop$. Therefore, we just need to check that monotonicity
in the sense of~\defref{DPI-monotone} is preserved by each operator
separately. This is done below in~\proref{dppar-monotone}\textendash \ref{pro:loop-continuous}. 
\end{proof}

\begin{prop}
\label{pro:dppar-monotone}If~$\dprob_{1}$ and~$\dprob_{2}$ are
monotone (\defref{DPI-monotone}), then also the composition~$\dppar(\dprob_{1},\dprob_{2})$
is monotone.
\end{prop}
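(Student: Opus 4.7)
The plan is to verify, one by one, the two conditions of \defref{DPI-monotone} for $\dppar(\dprob_1,\dprob_2)$, starting from the definition of the parallel operator (\defref{parallel}) and the explicit description of the map~$\ftor$ of a DPI (\defref{ftor}).

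First I would record the routine fact that the product of two \CPOs with componentwise order is again a \CPO: the bottom is the pair of the bottoms, and for a directed~$D\subseteq\funsp_1\times\funsp_2$ both projections~$\pi_iD$ are directed and $\sup D=(\sup\pi_1 D,\sup\pi_2 D)$. This handles condition~(1) for both $\funsp_1\times\funsp_2$ and $\ressp_1\times\ressp_2$.

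Next, I would derive a factorization of the composed map. Since in the parallel composition the two subsystems share no implementation, functionality, or resource variables, and both $\exc$ and $\eval$ act componentwise (see~\eqref{dppar-exec}), an implementation $\langle\imp_1,\imp_2\rangle$ realizes~$\langle\fun_1,\fun_2\rangle$ iff $\imp_i$ realizes $\fun_i$ for $i=1,2$. Consequently the minimal resources decouple and
\[
\ftor_{\dppar(\dprob_1,\dprob_2)}(\fun_1,\fun_2)\;=\;\ftor_1(\fun_1)\boxtimes\ftor_2(\fun_2),
\]
where $A_1\boxtimes A_2\definedas\{\langle r_1,r_2\rangle:r_1\in A_1,\,r_2\in A_2\}\in\antichains(\ressp_1\times\ressp_2)$. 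A short lemma, using $\upit(A_1\boxtimes A_2)=\upit A_1\times\upit A_2$ and the definition of the order on antichains in \lemref{antichains-are-poset}, shows that $\boxtimes$ is itself monotone and in fact \scottcontinuous as a map $\antichains\ressp_1\times\antichains\ressp_2\to\antichains(\ressp_1\times\ressp_2)$; the directed-sup of the product of two directed families of antichains is the product of the directed-sups, because upper closure, set product, and intersection of upper sets all interact cleanly.

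Finally, I would assemble \scottcontinuity of the composed $\ftor$. Given a directed $D\subseteq\funsp_1\times\funsp_2$, use componentwise suprema to write $\sup D=(\sup\pi_1 D,\sup\pi_2 D)$, then apply \scottcontinuity of $\ftor_1$ and $\ftor_2$ componentwise, and finally \scottcontinuity of $\boxtimes$ to commute the product past the supremum, yielding $\ftor(\sup D)=\sup\ftor(D)$. The main obstacle I anticipate is purely bookkeeping: verifying the continuity of~$\boxtimes$ under the ``reverse-inclusion on upper closures'' order on antichains, where one must be careful that taking $\Min$ is idempotent under upper closure so that no spurious non-minimal elements spoil the identity $\upit(A_1\boxtimes A_2)=\upit A_1\times\upit A_2$. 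Once that lemma is in hand, monotonicity of each operator and the whole proof are immediate.
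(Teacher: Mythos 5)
Your proposal is correct and follows essentially the same route as the paper: verify that products of \CPOs are \CPOs, then factor the composed map as $\ftor(\fun_1,\fun_2)=\ftor_1(\fun_1)\acprod\ftor_2(\fun_2)$ and deduce \scottcontinuity from that of the factors. The only difference is that where you propose to prove \scottcontinuity of the antichain product by hand via $\upit(A_1\acprod A_2)=\upit A_1\times\upit A_2$, the paper simply cites Gierz~\etal~\cite[Lemma II.2.8]{gierz03continuous} for that step.
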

\begin{proof}
We need to refer to the definition of $\dppar$ in \defref{parallel}
and check the conditions in \defref{DPI-monotone}. If~$\funsp_{1},\funsp_{2},\ressp_{1},\ressp_{2}$
are CPOs, then~$\funsp_{1}\times\funsp_{2}$ and~$\ressp_{1}\times\ressp_{2}$
are CPOs as well.

From \defref{ftor} and (\ref{eq:dppar-exec}) we know $\ftor$ can
be written as
\begin{align*}
\ftor:\funsp_{1}\times\funsp_{2} & \rightarrow\antichains(\ressp_{1}\times\ressp_{2})\\
\left\langle \fun_{1},\fun_{2}\right\rangle  & \mapsto\resMin\{\left\langle \eval_{1}(\imp_{1}),\eval_{2}(\imp_{2})\right\rangle \mid\\
 & \qquad\left(\left\langle \imp_{1},\imp_{2}\right\rangle \in\impsp_{1}\times\impsp_{2}\right)\\
 & \qquad\,\wedge\,\left(\left\langle \fun_{1},\fun_{2}\right\rangle \posleq\left\langle \exc_{1}(\imp_{1}),\exc_{2}(\imp_{2})\right\rangle \right)\}.
\end{align*}
All terms factorize in the two components, giving:{\small{}
\begin{align*}
\ftor\!:\!\left\langle \fun_{1},\fun_{2}\right\rangle \mapsto & \!\!\!\quad\Min_{\ressp_{1}}\{\left\langle \eval_{1}(\imp_{1})\right\rangle \mid\left(\imp\in\impsp_{1}\right)\,\wedge\,\left(\fun_{1}\posleq\exc_{1}(\imp_{1})\right)\}\\
 & \!\!\!\times\Min_{\ressp_{2}}\{\left\langle \eval_{2}(\imp_{2})\right\rangle \mid\left(\imp\in\impsp_{2}\right)\,\wedge\,\left(\fun_{2}\posleq\exc_{2}(\imp_{2})\right)\},
\end{align*}
}which reduces to 
\begin{eqnarray}
\ftor\colon\left\langle \fun_{1},\fun_{2}\right\rangle  & \mapsto & \ftor_{1}(\fun_{1})\acprod\ftor_{2}(\fun_{2}).\label{eq:isproduct}
\end{eqnarray}
The map $\ftor$ is \scottcontinuous iff $\ftor_{1}$ and $\ftor_{2}$
are~\cite[Lemma II.2.8]{gierz03continuous}.
\end{proof}

\begin{prop}
\label{pro:series-monotone}If $\dprob_{1}$ and $\dprob_{2}$ are
monotone (\defref{DPI-monotone}), then also the composition $\dpseries(\dprob_{1},\dprob_{2})$
is monotone.
\end{prop}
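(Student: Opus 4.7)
The plan is to mirror the structure of \proref{dppar-monotone}: derive an explicit formula for the functionality-to-resources map of the series composition in terms of $\ftor_1$ and $\ftor_2$, and then deduce \scottcontinuity from that of the factors. The CPO conditions on $\funsp_1$ and $\ressp_2$ are inherited directly from the assumed monotonicity of $\dprob_1$ and $\dprob_2$, so the only substantive task is to check that $\ftor$ for $\dpseries(\dprob_1,\dprob_2)$ is \scottcontinuous.

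First I unfold the definitions. By \defref{series-composition} and \defref{ftor}, an implementation of $\dpseries(\dprob_1,\dprob_2)$ that provides functionality $\fun_1$ is a pair $\langle\imp_1,\imp_2\rangle$ with $\fun_1\posleq\exc_1(\imp_1)$ and $\eval_1(\imp_1)\posleq\exc_2(\imp_2)$, and the resources consumed are $\eval_2(\imp_2)$. Splitting the existential quantifier over $\imp_1$ and $\imp_2$, and using the monotonicity of $\ftor_2$ to restrict attention to intermediate resources $\res_1=\eval_1(\imp_1)$ that are minimal among the feasible ones, I expect to arrive at the formula
\[
\ftor(\fun_1) \;=\; \resMin \bigcup_{\res_1\,\in\,\ftor_1(\fun_1)} \ftor_2(\res_1).
\]
The small verification here is that enlarging the union to range over all feasible $\res_1$, rather than only the minimal ones in $\ftor_1(\fun_1)$, does not change the outer $\Min$; this follows from the monotonicity of $\ftor_2$, since any dominated $\res_1$ produces only dominated contributions.

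Next I introduce the Kleisli-style lift $\hat{\ftor}_2\colon\antichains(\ressp_1)\to\antichains(\ressp_2)$,
\[
\hat{\ftor}_2(A) \;\doteq\; \resMin \bigcup_{\res\,\in\,A} \ftor_2(\res),
\]
so that the formula above reads $\ftor=\hat{\ftor}_2\funcComp\ftor_1$. Since $\ftor_1$ is \scottcontinuous by assumption and composition preserves \scottcontinuity, it suffices to show that $\hat{\ftor}_2$ is \scottcontinuous.

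The main obstacle is precisely the \scottcontinuity of $\hat{\ftor}_2$. Monotonicity is routine from \lemref{antichains-are-poset} together with the monotonicity of $\ftor_2$. For continuity, given a directed family $\{A_i\}$ in $\antichains(\ressp_1)$, I need to show $\hat{\ftor}_2(\sup_i A_i) = \sup_i \hat{\ftor}_2(A_i)$. Passing to the associated upper sets, where sups of directed families correspond to intersections and where $\upit\hat{\ftor}_2(A)=\bigcup_{\res\in A}\upit\ftor_2(\res)$, this reduces to an interchange of $\bigcap$ and $\bigcup$ that must be justified from the directedness of $\{A_i\}$ together with the \scottcontinuity of $\ftor_2$. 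Once that exchange is established, the proof closes by the same ``composition of \scottcontinuous maps'' argument used at the end of \proref{dppar-monotone}.
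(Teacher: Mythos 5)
Your proposal follows essentially the same route as the paper's proof: both reduce the series composition to the formula $\ftor(\fun_1)=\Min_{\posleq_{\ressp_2}}\bigcup_{\fun_2\in\ftor_1(\fun_1)}\ftor_2(\fun_2)$ by arguing that the intermediate constraint $\res_1\posleq\fun_2$ is tight (via monotonicity of $\ftor_2$), and both conclude by observing that this is a composition of \scottcontinuous maps. Your explicit isolation of the lift $\hat{\ftor}_2$ and of the remaining $\bigcap$/$\bigcup$ interchange spells out a step the paper leaves implicit in its closing remark, but the underlying argument is the same.
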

\begin{proof}
From the definition of $\dpseries$ (\defref{series-composition}),
the semantics of the interconnection is captured by this problem:
\begin{equation}
\ftor:\fun_{1}\mapsto\begin{cases}
\with & \res_{1},\fun_{2}\in\ressp_{1},\quad\res_{2}\in\ressp_{2},\\
\Min_{\posleq_{\ressp_{2}}} & \res_{2},\\
\subto & \res_{1}\in\ftor_{1}(\fun_{1}),\\
 & \res_{1}\posleq_{\ressp_{1}}\fun_{2},\\
 & \res_{2}\in\ftor_{2}(\fun_{2}).
\end{cases}\label{eq:dede}
\end{equation}
The situation is described by \figref{series_mono1-2}. The point~$\fun_{1}$
is fixed, and thus~$\ftor(\fun_{1})$ is a fixed antichain in~$\ressp_{1}$.
For each point $\res_{1}\in\ftor(\fun_{1})$, we can choose a $\fun_{2}\posgeq\res_{1}$.
For each~$\fun_{2}$, the antichain~$\ftor_{2}(\fun_{2})$ traces
the solution in $\ressp_{2}$, from which we can choose~$\res_{2}$. 

\captionsideleft{\label{fig:series_mono1-2}}{\includegraphics[scale=0.45]{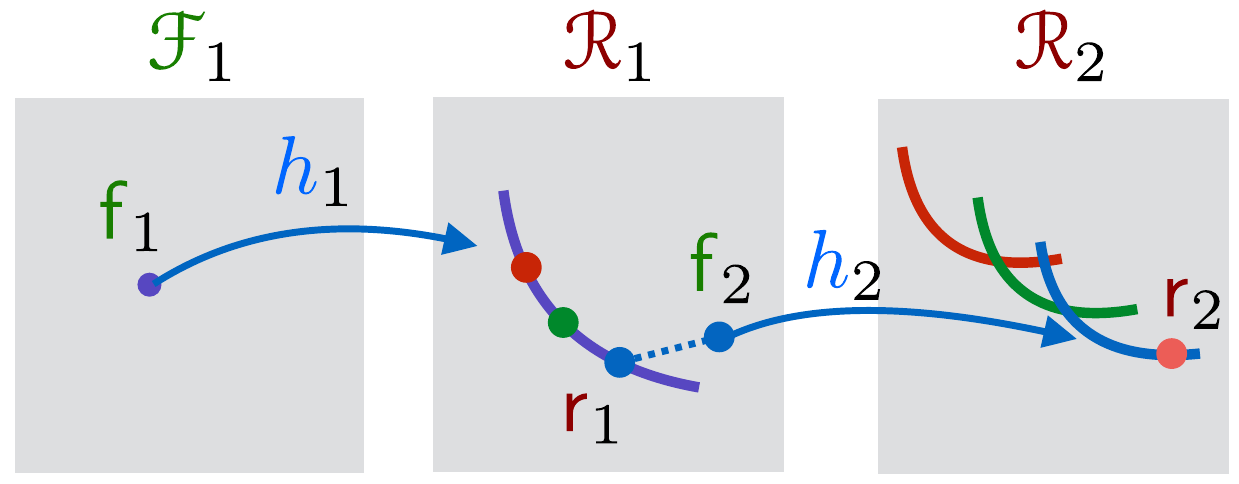}}

\noindent Because~$\ftor_{2}$ is monotone, $\ftor_{2}(\fun_{2})$
is minimized when~$\fun_{2}$ is minimized, hence we know that the
constraint~$\res_{1}\posleq\fun_{2}$ will be tight. We can then
conclude that the objective does not change introducing the constraint~$\res_{1}=\fun_{2}$.
The problem is reduced to:

\begin{equation}
\ftor:\fun_{1}\mapsto\begin{cases}
\with & \fun_{2}\in\ressp_{1},\quad\res_{2}\in\ressp_{2},\\
\Min_{\posleq_{\ressp_{2}}} & \res_{2},\\
\subto & \fun_{2}\in\ftor_{1}(\fun_{1}),\\
 & \res_{2}\in\ftor_{2}(\fun_{2}).
\end{cases}\label{eq:dede-2}
\end{equation}
Minimizing $\res_{2}$ with the only constraint being~$\res_{2}\in\ftor_{2}(\fun_{2})$,
and with~$\ftor_{2}(\fun_{2})$ being an antichain, the solutions
are all and only~$\ftor_{2}(\fun_{2})$. Hence the problem is reduced
to
\begin{equation}
\ftor:\fun_{1}\mapsto\begin{cases}
\with & \fun_{2}\in\ressp_{1},\\
\Min_{\posleq_{\ressp_{2}}} & \ftor_{2}(\fun_{2}),\\
\subto & \fun_{2}\in\ftor_{1}(\fun_{1}).
\end{cases}\label{eq:dede-2-1}
\end{equation}
The solution is simply
\begin{equation}
\ftor:\fun_{1}\mapsto\Min_{\posleq_{\ressp_{2}}}\bigcup_{\fun_{2}\in\ftor_{1}(\fun_{1})}\ftor_{2}(\fun_{2}).\label{eq:ora}
\end{equation}
This map is \scottcontinuous because it is the composition of \scottcontinuous
maps.
\end{proof}

\begin{prop}
\label{pro:loop-continuous}If $\dprob$ is monotone (\defref{DPI-monotone}),
so is~$\dploop(\dprob)$.
\end{prop}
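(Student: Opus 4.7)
The plan is to mimic \proref{dppar-monotone} and \proref{series-monotone}: derive a closed-form expression for the functionality-to-minimal-resources map of $\dploop(\dprob)$ in terms of the map $\ftor$ of $\dprob$, and then verify the two conditions of \defref{DPI-monotone}. The \CPO condition on $\funsp_{1}$ and $\ressp$ is immediate, since $\funsp_{1}\times\ressp$ being a \CPO forces both factors to be \CPOs; all the real work is in establishing \scottcontinuity.

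First I would expand the definition using \defref{dp_loop} and \defref{ftor} and introduce a dummy variable $\fun_{2}\in\ressp$ for the looped-back resource, so that a resource $\res$ is achievable at $\fun_{1}$ iff there exists $\fun_{2}$ with $\res\in\ftor(\fun_{1},\fun_{2})$ and $\res\posleq\fun_{2}$. By the same monotonicity argument used in \proref{series-monotone}, increasing $\fun_{2}$ can only raise the antichain $\ftor(\fun_{1},\fun_{2})$ in $\antichains\ressp$, so at the optimum the loop constraint is tight and one may set $\fun_{2}=\res$. This delivers the fixed-point characterization
\[
\fun_{1}\;\mapsto\;\resMin\{\res\in\ressp\colon\res\in\ftor(\fun_{1},\res)\}.
\]

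Next I would lift to antichains by defining
\[
\Psi\colon\funsp_{1}\times\antichains\ressp\to\antichains\ressp,\qquad\Psi(\fun_{1},A)=\resMin\bigcup_{\res\in A}\ftor(\fun_{1},\res),
\]
and identify the map of $\dploop(\dprob)$ as $\fun_{1}\mapsto\lfp\,\Psi(\fun_{1},\cdot)$. The poset $\antichains\ressp$ is a \CPO by \lemref{antichains-are-poset} (bottom $\{\bot_{\ressp}\}$; directed suprema exist via the upper-closure correspondence with $\upsets\ressp$), and $\Psi$ inherits joint \scottcontinuity from $\ftor$, directed union, and $\resMin$. Kleene's theorem (\lemref{kleene-1}) then exhibits the least fixed point as the supremum of the ascent chain $\{\bot_{\ressp}\}\posleq\Psi(\fun_{1},\{\bot_{\ressp}\})\posleq\Psi(\fun_{1},\Psi(\fun_{1},\{\bot_{\ressp}\}))\posleq\cdots$.

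The final and hardest step is to show that $\fun_{1}\mapsto\lfp\,\Psi(\fun_{1},\cdot)$ is itself \scottcontinuous. This is the standard ``parametric least fixed point is \scottcontinuous'' result from denotational semantics: for a directed family $\{\fun_{1}^{(i)}\}\subseteq\funsp_{1}$ with supremum $\fun_{1}^{\star}$, one shows by induction on $n$ that $\Psi(\fun_{1}^{\star},\cdot)^{(n)}(\{\bot_{\ressp}\})=\sup_{i}\Psi(\fun_{1}^{(i)},\cdot)^{(n)}(\{\bot_{\ressp}\})$, invoking joint \scottcontinuity of $\Psi$ at each step, and then swaps the two suprema (over $n$ and $i$) to conclude. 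I expect the main obstacle to be the careful verification of joint \scottcontinuity of $\Psi$ itself: the outer $\resMin$ and the directed union must be shown to commute with directed suprema under the \emph{reverse-inclusion} order of \lemref{antichains-are-poset}. It is cleanest to carry out this verification in $\upsets\ressp$, where the order is literal reverse inclusion, and then transport the result back to $\antichains\ressp$ via the upper-closure bijection.
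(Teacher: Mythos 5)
Your overall architecture matches the paper's: reduce to a pointwise feasibility condition, lift it to a fixed-point equation on $\antichains\ressp$, and invoke the parametric least-fixed-point lemma (the paper's \lemref{dagger}, cited to Davey--Priestley Ex.~8.26) for \scottcontinuity in $\fun_{1}$. The pointwise step is also sound: the paper proves, via \lemref{antichain-write} and \lemref{antichain_inter}, exactly your claim that $\res$ is feasible iff $\res\in\ftor(\fun_{1},\res)$. The gap is in the lift. Your iteration map
\[
\Psi(\fun_{1},A)=\resMin\bigcup_{\res\in A}\ftor(\fun_{1},\res)
\]
drops the loop constraint, whereas the paper's map $\Phi_{\fun_{1}}$ in (\ref{eq:bigphi}) keeps it inside the union as an intersection with $\upit\res$, i.e.\ $A\mapsto\Min_{\resleq}\bigcup_{\res\in A}\ftor(\fun_{1},\res)\cap\upit\res$. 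Without that intersection, a fixed point $A=\Psi(\fun_{1},A)$ only tells you that each $\res_{0}\in A$ lies in $\ftor(\fun_{1},\res_{2})$ for \emph{some} $\res_{2}\in A$; feasibility of $\res_{0}$ would require $\res_{2}\posgeq\res_{0}$, but distinct elements of an antichain are incomparable, so the iteration can launder infeasible points between branches.

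Concretely: take $\funsp_{1}=\One$, $\ressp=\overline{\mathbb{N}}\times\overline{\mathbb{N}}$, and a DPI realizing $\ftor\colon\left\langle x,y\right\rangle \mapsto\{\left\langle x+1,0\right\rangle ,\left\langle 0,y+1\right\rangle \}$ (monotone and \scottcontinuous). The self-consistent resources are exactly $\left\langle \top,0\right\rangle $ and $\left\langle 0,\top\right\rangle $, and the paper's iteration produces $\{\left\langle k,0\right\rangle ,\left\langle 0,k\right\rangle \}$ at step $k$, converging to the correct answer $\{\left\langle \top,0\right\rangle ,\left\langle 0,\top\right\rangle \}$. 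Your $\Psi$ instead reaches the fixed point $\{\left\langle 1,0\right\rangle ,\left\langle 0,1\right\rangle \}$ after one step (since $\left\langle 1,0\right\rangle $ dominates $\left\langle 2,0\right\rangle $ and $\left\langle 0,1\right\rangle $ dominates $\left\langle 0,2\right\rangle $ in the union), and both of those points are infeasible. So the closed form you would prove \scottcontinuous is not the map $\ftor_{\dploop(\dprob)}$. The fix is exactly the paper's: carry the constraint $\res'\in\upit\res$ through the antichain lift, which is what makes the converse direction of the fixed-point characterization (that every element of a prefixed antichain is feasible) go through.
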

\begin{proof}
The diagram in \figref{sloop} implies that the map~$\ftor_{\dploop(\dprob)}$
can be described as:
\begin{align}
\ftor_{\dploop(\dprob)}\colon\funsp_{1} & \rightarrow\Aressp,\label{eq:loopproblem}\\
\fun_{1} & \mapsto\begin{cases}
\with & \res,\fun_{2}\in\ressp,\\
\Min_{\resleq} & \res,\\
\subto & \res\in\ftor_{\dprob}(\fun_{1},\fun_{2}),\\
 & \res\resleq\fun_{2}.
\end{cases}
\end{align}
Denote by~$\ftor_{\fun_{1}}$ the map~$\ftor_{\dprob}$ with the
first element fixed:
\[
\ftor_{\fun_{1}}\colon\fun_{2}\mapsto\ftor_{\dprob}(\fun_{1},\fun_{2}).
\]
Rewrite $\res\in\ftor_{\dprob}(\fun_{1},\fun_{2})$ in \eqref{loopproblem}
as
\begin{equation}
\res\in\ftor_{\fun_{1}}(\fun_{2}).\label{eq:h2}
\end{equation}
Let~$\res$ be a feasible solution, but not necessarily minimal.
Because of \lemref{antichain-write}, the constraint \eqref{h2} can
be rewritten as 
\begin{equation}
\{\res\}=\ftor_{\fun_{1}}(\fun_{2})\cap\upit\res.\label{eq:h3}
\end{equation}
Because $\fun_{2}\posgeq\res$, and $\ftor_{\fun_{1}}$ is \scottcontinuous,
it follows that~$\ftor_{\fun_{1}}(\fun_{2})\posgeq_{\Aressp}\ftor_{\fun_{1}}(\res)$.
Therefore, by \lemref{antichain_inter}, we have
\begin{equation}
\{\res\}\posgeq_{\Aressp}\ftor_{\fun_{1}}(\res)\cap\upit\res.\label{eq:fea}
\end{equation}
This is a recursive condition that all feasible~$\res$ must satisfy.

Let $\R{R}\in\Aressp$ be an antichain of feasible resources, and
let~$\res$ be a generic element of~$\R{R}$. Tautologically, rewrite~$\R{R}$
as the minimal elements of the union of the singletons containing
its elements: 
\begin{equation}
\R{R}=\Min_{\resleq}\bigcup_{\res\in\R{R}}\ \{\res\}.\label{eq:condition3}
\end{equation}
Substituting (\ref{eq:fea}) in (\ref{eq:condition3}) we obtain (cf
\lemref{antichain_union})
\begin{equation}
\R{R}\posgeq_{\Aressp}\Min_{\resleq}\bigcup_{\res\in\R{R}}\ftor_{\fun_{1}}(\res)\ \cap\ \upit\res.\label{eq:recursive}
\end{equation}

{[}Converse: It is also true that if an antichain~$\R{R}$ satisfies~\eqref{recursive}
then all~$\res\in\R{R}$ are feasible. The constraint \eqref{recursive}
means that for any~$\res_{0}\in\R{R}$ on the left side, we can find
a~$\res_{1}$ in the right side so that~$\res_{0}\posgeq_{\ressp}\res_{1}$.
The point~$\res_{1}$ needs to belong to one of the sets of which
we take the union; say that it comes from $\res_{2}\in\R{R}$, so
that $\res_{1}\in\ftor_{\fun_{1}}(\res_{2})\ \cap\ \upit\res_{2}$.
Summarizing:

{\footnotesize{}
\begin{equation}
\forall\res_{0}\in\R{R}:\ \exists\res_{1}\colon\ (\res_{0}\posgeq_{\ressp}\res_{1})\ \wedge\ (\exists\res_{2}\in\R{R}\colon\ \res_{1}\in\ftor_{\fun_{1}}(\res_{2})\ \cap\ \upit\res_{2}).\label{eq:conc}
\end{equation}
}Because~$\res_{1}\in\ftor_{\fun_{1}}(\res_{2})\,\cap\,\upit\res_{2}$,
we can conclude that~$\res_{1}\in\upit\res_{2}$, and therefore~$\res_{1}\posgeq_{\ressp}\res_{2}$,
which together with~$\res_{0}\posgeq_{\ressp}\res_{1}$, implies~$\res_{0}\posgeq_{\ressp}\res_{2}$.
We have concluded that there exist two points~$\res_{0},\res_{2}$
in the antichain~$\R{R}$ such that~$\res_{0}\posgeq_{\ressp}\res_{2}$;
therefore, they are the same point:~$\res_{0}=\res_{2}$. Because~$\res_{0}\posgeq_{\ressp}\res_{1}\posgeq_{\ressp}\res_{2}$,
we also conclude that~$\res_{1}$ is the same point as well. We can
rewrite~\eqref{conc} by using~$\res_{0}$ in place of~$\res_{1}$
and~$\res_{2}$ to obtain~$\forall\res_{0}\in\R{R}:\res_{0}\in\ftor_{\fun_{1}}(\res_{0})$,
which means that~$\res_{0}$ is a feasible resource.{]}

We have concluded that all antichains of feasible resources~$\R{R}$
satisfy~\eqref{recursive}, and conversely, if an antichain~$\R{R}$
satisfies~\eqref{recursive}, then it is an antichain of feasible
resources. 

Equation \eqref{recursive} is a recursive constraint for~$\R{R}$,
of the kind 
\[
\Phi_{\fun_{1}}(\R{R})\posleq_{\Aressp}\R{R},
\]
with the map~$\Phi_{\fun_{1}}$ defined by
\begin{eqnarray}
\Phi_{\fun_{1}}:\Aressp & \rightarrow & \Aressp,\label{eq:bigphi}\\
\R{R} & \mapsto & \Min_{\resleq}\bigcup_{\res\in\R{R}}\ftor_{\fun_{1}}(\res)\ \cap\ \upit\res.\nonumber 
\end{eqnarray}
If we want the \emph{minimal} resources, we are looking for the \emph{least}
antichain: 
\[
\min_{\posleq_{\Aressp}}\{\,\R{R}\in\Aressp\colon\ \Phi_{\fun_{1}}(\R{R})\posleq_{\Aressp}\R{R}\,\},
\]
which is equal to the \emph{least fixed point }of~$\Phi_{\fun_{1}}$.
Therefore, the map $\ftor_{\dploop(\dprob)}$ can be written as
\begin{equation}
\ftor_{\dploop(\dprob)}\colon\fun_{1}\mapsto\lfp(\Phi_{\fun_{1}}).\label{eq:loop_fixpoint}
\end{equation}
\lemref{dagger} shows that $\lfp(\Phi_{\fun_{1}})$ is \scottcontinuous
in~$\fun_{1}$. 
\end{proof}

\begin{lem}
\label{lem:antichain-write}Let~$A$ be an antichain in $\posA$.
Then
\[
a\in A\qquad\equiv\qquad\{a\}=A\,\cap\upit a.
\]
\end{lem}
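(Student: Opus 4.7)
The statement is essentially a set-theoretic unpacking of the antichain property, so I would prove it by a straightforward double-inclusion argument, treating the two directions of the biconditional separately.

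For the forward direction, assume $a \in A$. Since $a \posleq a$, we have $a \in \upit a$ by definition of upper closure, and together with $a \in A$ this gives $a \in A \cap \upit a$, so $\{a\} \subseteq A \cap \upit a$. For the reverse inclusion, take any $b \in A \cap \upit a$. Then $b \in A$ and $a \posleq b$. Since both $a$ and $b$ lie in the antichain $A$, the antichain property forces $a = b$, so $b \in \{a\}$. Thus $A \cap \upit a \subseteq \{a\}$, and the two sets coincide.

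For the backward direction, assume $\{a\} = A \cap \upit a$. Then in particular $a \in A \cap \upit a$, and hence $a \in A$.

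There is essentially no obstacle here; the only subtle point is to notice that the antichain hypothesis is used exactly once, namely to collapse the relation $a \posleq b$ between two members of $A$ to equality. The rest is pure unfolding of the definitions of $\upit(\cdot)$ and of intersection.
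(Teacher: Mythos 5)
Your proof is correct: both directions check out, and you correctly isolate the single use of the antichain hypothesis (collapsing $a\posleq b$ with $a,b\in A$ to $a=b$) together with reflexivity to get $a\in\upit a$. The paper states this lemma without proof, so there is nothing to compare against; your double-inclusion unfolding of the definitions is exactly the argument the author evidently had in mind.
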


\begin{lem}
\label{lem:antichain_inter}For $A,B\in\antichains\posA$, and $S\subseteq P$,
$A\posleq_{\Aressp}B$ implies $A\cap S\posleq_{\Aressp}B\cap S$. 
\end{lem}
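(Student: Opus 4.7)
The plan is to translate both sides of the implication using the characterization of the order on antichains from \lemref{antichains-are-poset}: $X \posleq_{\Aressp} Y$ is equivalent to $\upit X \supseteq \upit Y$. Under this translation, the hypothesis becomes $\upit A \supseteq \upit B$ and the conclusion becomes $\upit(A \cap S) \supseteq \upit(B \cap S)$, reducing the problem to a set-theoretic chase on upper sets.

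To verify the containment, I fix an arbitrary $y \in \upit(B \cap S)$: by definition there exists $b \in B \cap S$ with $b \posleq y$. Since $b \in B \subseteq \upit B$ and $\upit A \supseteq \upit B$, we have $b \in \upit A$, so there is some $a \in A$ with $a \posleq b$, and by transitivity $a \posleq y$. The natural candidate for a witness of $y \in \upit(A \cap S)$ is therefore $a$ itself, and it remains only to place $a$ inside $S$.

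The main obstacle is precisely this last step: the chain $a \posleq b \in S$ does not by itself force $a \in S$ for an arbitrary subset $S$. I would address it either by refining the choice of $a$ inside $A$ using the antichain structure of $A$ (picking a minimal witness below $b$ that still lies in $S$), or by exploiting the specific form of $S$ relevant to the intended use: in \proref{loop-continuous} the set is $S = \upit\res$, and the set $B \cap S$ is constrained to lie above $\res$, which pins down the witness unambiguously. Once $a \in A \cap S$ is secured, $y \in \upit(A \cap S)$ follows immediately, completing the chase and hence the lemma.
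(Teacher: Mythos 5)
You have put your finger on exactly the right obstruction, but neither of your proposed repairs can close it, because the statement is false as written for a general subset $S$ (the paper, for what it is worth, states this lemma without any proof, so there is nothing to compare against). Take $\posA=\{p,q\}$ with $p\posleq q$, let $A=\{p\}$, $B=\{q\}$, and $S=\upit q=\{q\}$. Then $\upit A=\{p,q\}\supseteq\{q\}=\upit B$, so $A\posleq_{\antichains\posA}B$; but $A\cap S=\emptyset=\top_{\antichains\posA}$ while $B\cap S=\{q\}$, so $\upit(A\cap S)=\emptyset\not\supseteq\{q\}=\upit(B\cap S)$ and the conclusion fails. This kills your first repair outright: there is no ``minimal witness below $b$ that still lies in $S$'' to select, because $A\cap S$ can be empty while $B\cap S$ is not. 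It also shows your second repair does not help: the counterexample already has $S$ of the exact form $\upit\res$ that arises in \proref{loop-continuous}, and knowing that the witness in $B\cap S$ is pinned down says nothing about the existence of any witness in $A\cap S$.

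What your chase does prove is the variant in which $S$ is a \emph{down-set}: if $a\posleq b$ and $b\in S$ with $S$ downward closed, then $a\in S$ and the argument closes. Unfortunately that is the opposite of what the intended application requires, since $\upit\res$ is upward closed. In the instance actually used in \proref{loop-continuous}, with $A=\ftor_{\fun_{1}}(\res)$, $B=\ftor_{\fun_{1}}(\fun_{2})$, $S=\upit\res$ and $B\cap S=\{\res\}$, the desired conclusion $A\cap S\posleq_{\antichains\posA}\{\res\}$ amounts (by testing the containment at the point $\res\in\upit(B\cap S)$ and using antisymmetry) to the assertion $\res\in\ftor_{\fun_{1}}(\res)$, which is genuinely extra information not contained in $A\posleq_{\antichains\posA}B$. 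So the honest verdict is that your proof is incomplete at precisely the step you flagged, and no refinement of the witness-chasing can complete it: the lemma needs either a stronger hypothesis on $S$ (downward closure) or additional structure relating $A$, $B$, and $S$, and the instance needed downstream must be argued from the specific form of $\ftor$ rather than derived from this lemma as stated.
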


\begin{lem}
\label{lem:antichain_union}For $A,B,C,D\in\antichains\posA$, $A\posleq_{\Aressp}C$
and $B\posleq_{\Aressp}D$ implies $A\cup B\posleq_{\Aressp}C\cup D.$
\end{lem}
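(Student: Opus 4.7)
The plan is to translate the claim from antichains into upper sets, where the order becomes reverse inclusion, and where unions behave well. By \lemref{antichains-are-poset}, $A\posleq_\Aressp C$ means $\upit A\supseteq\upit C$, and similarly $\upit B\supseteq\upit D$. The target inequality $A\cup B\posleq_\Aressp C\cup D$ is, by the same definition, $\upit(A\cup B)\supseteq\upit(C\cup D)$.

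The key observation I will use is that upper closure distributes over finite unions: for any $X,Y\subseteq\posA$,
\[
\upit(X\cup Y)\ =\ \upit X\ \cup\ \upit Y.
\]
This is immediate from the definition of $\upit$: a point $p$ lies above some element of $X\cup Y$ iff it lies above an element of $X$ or above an element of $Y$. Granting this identity, the result is one line: from $\upit A\supseteq\upit C$ and $\upit B\supseteq\upit D$, taking unions on both sides yields $\upit A\cup\upit B\supseteq\upit C\cup\upit D$, which by the distributive identity is exactly $\upit(A\cup B)\supseteq\upit(C\cup D)$.

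One small subtlety worth addressing explicitly: $A\cup B$ (and $C\cup D$) need not themselves be antichains, so strictly speaking the symbol $\posleq_\Aressp$ is being applied outside its defining domain. I would handle this by noting that $\posleq_\Aressp$ extends unambiguously to arbitrary subsets of $\posA$ through the map $S\mapsto\upit S$ into the poset $\langle\upsets\posA,\supseteq\rangle$; equivalently, one may replace $A\cup B$ by $\Min_{\resleq}(A\cup B)$, which has the same upper closure, so the statement is insensitive to this choice. With that clarification the proof reduces to the two-line argument above, and the main (very mild) obstacle is simply recording the distributivity of $\upit$ over union cleanly — there is no real combinatorial difficulty.
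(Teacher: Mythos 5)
Your proof is correct; the paper states this lemma without giving a proof, and your argument---translating to upper sets via \lemref{antichains-are-poset} and using the identity $\upit(X\cup Y)=\upit X\cup\upit Y$ together with monotonicity of union under $\supseteq$---is the natural intended one. Your side remark is also well taken: $A\cup B$ need not be an antichain, and reading the statement through $\Min_{\posleq}(A\cup B)$ (which has the same upper closure) matches how the paper actually invokes the lemma in the proof of \proref{loop-continuous}.
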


\begin{lem}
\label{lem:dagger}Let~$f\colon\posA\times\posB\rightarrow\posB$
be \scottcontinuous. For each~$x\in\posA$, define $f_{x}:y\mapsto f(x,y).$
Then $f^{\dagger}:x\mapsto\lfp(f_{x})$ is \scottcontinuous.
\end{lem}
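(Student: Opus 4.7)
The plan is to use Kleene's theorem (Lemma on Kleene's fixed-point) to express $f^\dagger(x)$ explicitly as the supremum of the Kleene ascent chain, and then to prove \scottcontinuity by swapping directed suprema. Concretely, fix $x\in\posA$. Since $f$ is \scottcontinuous jointly, its section $f_x\colon\posB\rightarrow\posB$ is \scottcontinuous, so by Kleene's algorithm
\[
f^\dagger(x) \ =\ \lfp(f_x)\ =\ \sup_{n\in\mathbb{N}} f_x^{(n)}(\bot_\posB),
\]
where $f_x^{(0)}(\bot_\posB)=\bot_\posB$ and $f_x^{(n+1)}(\bot_\posB)=f(x,f_x^{(n)}(\bot_\posB))$.

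First I would dispense with monotonicity of $f^\dagger$: if $x_1\posleq x_2$ then, by induction on $n$, $f_{x_1}^{(n)}(\bot_\posB)\posleq f_{x_2}^{(n)}(\bot_\posB)$ (the base case is trivial, the step uses monotonicity of $f$ in both arguments together with the inductive hypothesis). Taking suprema gives $f^\dagger(x_1)\posleq f^\dagger(x_2)$.

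The core step is \scottcontinuity. Let $D\subseteq\posA$ be directed with supremum $x^\star=\sup D$. The key auxiliary claim, proved by induction on $n$, is
\[
f_{x^\star}^{(n)}(\bot_\posB)\ =\ \sup_{x\in D} f_x^{(n)}(\bot_\posB).
\]
The base case $n=0$ is immediate since both sides are $\bot_\posB$. For the inductive step, one rewrites
\[
f_{x^\star}^{(n+1)}(\bot_\posB)\ =\ f\!\left(\sup D,\ \sup_{x\in D} f_x^{(n)}(\bot_\posB)\right),
\]
then applies joint \scottcontinuity of $f$ to the directed set $\{\langle x,f_x^{(n)}(\bot_\posB)\rangle : x\in D\}\subseteq\posA\times\posB$ (this set is directed because $D$ is directed and the inner iterate is monotone in $x$ by the earlier step). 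This turns the right-hand side into $\sup_{x\in D} f(x,f_x^{(n)}(\bot_\posB))=\sup_{x\in D} f_x^{(n+1)}(\bot_\posB)$, closing the induction.

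Finally, taking the supremum over $n$ and swapping the two directed suprema (both indexed by directed sets, so the double supremum is unambiguous) gives
\[
f^\dagger(x^\star)\ =\ \sup_n \sup_{x\in D} f_x^{(n)}(\bot_\posB)\ =\ \sup_{x\in D}\sup_n f_x^{(n)}(\bot_\posB)\ =\ \sup_{x\in D} f^\dagger(x),
\]
which is the required \scottcontinuity. The main obstacle I foresee is the inductive step of the auxiliary claim: it requires joint \scottcontinuity of $f$ on the product \CPO $\posA\times\posB$ (not just separate continuity in each argument) and the observation that pairing a directed set in $\posA$ with the monotone family of Kleene iterates produces a directed set in $\posA\times\posB$; once that is set up, the rest is bookkeeping with suprema.
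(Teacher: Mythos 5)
Your proof is correct. The paper itself does not prove this lemma---it defers to Davey and Priestley (Exercise 8.26) and Gierz~\etal (Exercise II-2.29)---and your argument is precisely the standard one found there: express $\lfp(f_{x})$ as the supremum of the Kleene chain, show by induction that the $n$-th iterate commutes with directed suprema in $x$ (using joint \scottcontinuity of~$f$ on the product and the directedness of the paired set $\{\langle x,f_{x}^{(n)}(\bot_{\posB})\rangle : x\in D\}$), and then exchange the two directed suprema. The only hypotheses you use beyond the statement are the ones implicit in the paper's setting, namely that $\posA$ and $\posB$ are \CPO{}s (in particular that $\posB$ has a bottom, so the Kleene chain has somewhere to start), so nothing is missing.
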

\begin{proof}
Davey and Priestly~\cite{davey02} leave this as Exercise~8.26.
A proof is found in Gierz~\etal~\cite[Exercise II-2.29]{gierz03continuous}.
\end{proof}

\section{Solution of MCDPs\label{sec:Solution-of-Monotone}}

The second main result is that the map $\ftor$ of a MCDP has an explicit
expression in terms of the maps~$\ftor$ of the subproblems.  

\noindent 
\fbox{\begin{minipage}[t]{0.95\columnwidth}
\begin{thm}
\label{thm:CDP-solvig}The map~$\ftor$ for an MCDP has an explicit
expression in terms of the maps $\ftor$ of its subproblems, defined
recursively using the rules in \tabref{Correspondence}.

\begin{table}[H]
\begin{centering}
\caption{Recursive expressions for $\ftor$\label{tab:Correspondence}}
\par\end{centering}
\centering{}\setlength\extrarowheight{5pt}\normalsize
\begin{tabular}{ccc}
series & $\dprob=\dpseries(\dprob_{1},\dprob_{2})$ & $\ftor=\ftor_{1}\opseries\ftor_{2}$\tabularnewline
parallel & $\dprob=\dppar(\dprob_{1},\dprob_{2})$ & $\ftor=\ftor_{1}\oppar\ftor_{2}$\tabularnewline
feedback & $\dprob=\dploop(\dprob_{1})$ & $\ftor=\ftor_{1}^{\oploop}$\tabularnewline
co-product & $\dprob=\dprob_{1}\sqcup\dprob_{2}$ & $\ftor=\ftor_{1}\opcoprod\ftor_{2}$\tabularnewline
\end{tabular}
\end{table}
\end{thm}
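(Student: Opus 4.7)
The plan is to reduce the theorem to a structural induction over the tree representation of the MCDP, reusing the explicit formulas for $\ftor$ already derived in the proof of Theorem~\ref{thm:CDP-monotone}. First I would define the four operators in Table~\ref{tab:Correspondence} precisely as maps on functionality-to-antichain maps, namely
\begin{align*}
(\ftor_1\oppar\ftor_2)\colon\langle\fun_1,\fun_2\rangle&\mapsto\ftor_1(\fun_1)\acprod\ftor_2(\fun_2),\\
(\ftor_1\opseries\ftor_2)\colon\fun_1&\mapsto\Min_{\resleq}\!\!\bigcup_{\fun_2\in\ftor_1(\fun_1)}\!\!\ftor_2(\fun_2),\\
\ftor_1^{\oploop}\colon\fun_1&\mapsto\lfp(\Phi_{\fun_1}),\\
(\ftor_1\opcoprod\ftor_2)\colon\fun&\mapsto\Min_{\resleq}\bigl(\ftor_1(\fun)\cup\ftor_2(\fun)\bigr),
\end{align*}
where $\Phi_{\fun_1}$ is the map introduced in~\eqref{eq:bigphi}. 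These definitions are abstracted away from any particular implementation space, so they only depend on the $\ftor_i$ themselves. This makes the theorem a statement purely about the induced functionality-to-minimal-resources maps.

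Next I would invoke Proposition~\vref{prop:reduction} to express the given CDPI as an $\dpseries/\dppar/\dploop$-tree whose leaves are either the original atomic DPIs $\cdpin\in\cdpiN$ or trivial ``plumbing'' DPIs $\triv(f)$, together with (if present in the model) a finite collection of coproduct nodes. The proof then proceeds by induction on the tree. The base case is immediate: for $\triv(f)=\langle\funsp,\ressp,\funsp,\idFunc_\funsp,f\rangle$ the map $\ftor$ is simply $\fun\mapsto\{f(\fun)\}$, and for an atomic DPI $\cdpin$ the map $\ftor_{\cdpin}$ is available by hypothesis. The inductive steps for $\dpseries$, $\dppar$, and $\dploop$ are exactly the content of Propositions~\ref{pro:dppar-monotone}, \ref{pro:series-monotone}, and~\ref{pro:loop-continuous}: equations~\eqref{eq:isproduct}, \eqref{eq:ora}, and~\eqref{eq:loop_fixpoint} already give the explicit formulas $\ftor_1\oppar\ftor_2$, $\ftor_1\opseries\ftor_2$, and $\ftor^{\oploop}$ respectively. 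Thus almost no new mathematics is needed; it is a matter of renaming the formulas as the operators in the table.

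The one case that requires a short independent argument is the coproduct, since Proposition~\ref{prop:reduction} does not reduce coproducts to the three basic operators. For $\dprob=\dprob_1\sqcup\dprob_2$, unfolding Definition~\ref{def:parallel-1} in the query~\eqref{eq:objective} splits the feasible set $\{\imp\in\impsp_1\sqcup\impsp_2\mid\fun\funleq\exc(\imp)\}$ as a disjoint union, so the set of attainable resources is $\{\eval_1(\imp_1)\mid\fun\funleq\exc_1(\imp_1)\}\cup\{\eval_2(\imp_2)\mid\fun\funleq\exc_2(\imp_2)\}$. Taking $\Min_{\resleq}$ commutes with unions in the sense that $\Min_{\resleq}(S_1\cup S_2)=\Min_{\resleq}(\Min_{\resleq}S_1\cup\Min_{\resleq}S_2)$, which yields $\ftor(\fun)=\Min_{\resleq}(\ftor_1(\fun)\cup\ftor_2(\fun))=(\ftor_1\opcoprod\ftor_2)(\fun)$, as required.

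The main obstacle I anticipate is purely notational rather than mathematical: one must verify that the operators $\opseries,\oppar,\oploop,\opcoprod$ are well-typed (domains and codomains match under the factorizations assumed by each node of the tree) and that the inductive substitution preserves the global functionality space $\funsp$ and resources space $\ressp$ of the whole CDPI, as specified in Definition~\ref{def:cdpi}. This is standard bookkeeping for series-parallel decompositions with one feedback closure, handled exactly as in the proof of Proposition~\ref{prop:reduction}, and requires no new ideas beyond what is already in the paper.
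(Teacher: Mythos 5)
Your proposal is correct and follows essentially the same route as the paper: the paper's own proof simply points to Propositions~\ref{pro:dppar-monotone}--\ref{pro:loop-continuous}, where the explicit expressions \eqref{eq:isproduct}, \eqref{eq:ora}, and \eqref{eq:loop_fixpoint} were derived, together with the operator definitions in Definitions~\ref{def:opseries}--\ref{def:opcoprod}, which is exactly the structural induction you spell out. Your explicit treatment of the coproduct case is a small addition the paper leaves implicit, and it is consistent with Definition~\ref{def:parallel-1}.
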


\end{minipage}}

\begin{proof}
These expressions were derived in the proofs of~\proref{dppar-monotone}\textendash \ref{pro:loop-continuous}.
 The operators $\opseries,\oppar,\oploop,\opcoprod$ are defined
in \defref{opseries}\textendash \ref{def:opcoprod}. 
\end{proof}
\begin{defn}[Series operator~$\opseries$]
\label{def:opseries}For two maps~$\ftor_{1}\colon\funsp_{1}\rightarrow\Aressp_{1}$
and~$\ftor_{2}\colon\funsp_{2}\rightarrow\Aressp_{2}$, if~$\ressp_{1}=\funsp_{2}$
, define
\begin{align*}
{\displaystyle \ftor_{1}\opseries\ftor_{2}\colon\funsp_{1}} & \rightarrow\Aressp_{2},\\
\fun_{1} & \mapsto\Min_{\posleq_{\ressp_{2}}}\bigcup_{s\in\ftor_{1}(\fun)}\ftor_{2}(s).
\end{align*}
\end{defn}

\begin{defn}[Parallel operator $\oppar$]
\label{def:opmaps}For two maps $\ftor_{1}\colon\funsp_{1}\rightarrow\Aressp_{1}$
and $\ftor_{2}\colon\funsp_{2}\rightarrow\Aressp_{2}$, define
\begin{align}
\ftor_{1}\oppar\ftor_{2}:(\funsp_{1}\times\funsp_{2}) & \rightarrow\antichains(\ressp_{1}\times\ressp_{2}),\label{eq:oppar}\\
\left\langle \fun_{1},\fun_{2}\right\rangle  & \mapsto\ftor_{1}(\fun_{1})\acprod\ftor_{2}(\fun_{2}),\nonumber 
\end{align}
where $\acprod$ is the product of two antichains.
\end{defn}

\begin{defn}[Feedback operator $\oploop$]
\label{def:oploop}For $\ftor:\funsp_{1}\times\ressp\rightarrow\Aressp$,
define
\begin{align}
\ftor^{\oploop}:\funsp_{1} & \rightarrow\Aressp,\nonumber \\
\fun_{1} & \mapsto\lfp\left(\Psi_{\fun_{1}}^{\ftor}\right),\label{eq:lfp}
\end{align}
where~$\Psi_{\fun_{1}}^{\ftor}$ is defined as
\begin{align}
\Psi_{\fun_{1}}^{\ftor}:\Aressp & \rightarrow\Aressp,\nonumber \\
{\colR R} & \mapsto\Min_{\posleq_{\ressp}}\bigcup_{\res\in{\colR R}}\ftor(\fun_{1},\res)\ \cap\upit\res.\label{eq:phi}
\end{align}
\end{defn}

\begin{defn}[Coproduct operator $\opcoprod$]
\label{def:opcoprod}For $\ftor_{1},\ftor_{2}:\funsp\rightarrow\Aressp$,
define
\begin{align*}
\ftor_{1}\opcoprod\ftor_{2}:\funsp & \rightarrow\Aressp,\\
\fun & \mapsto\Min_{\posleq_{\ressp}}\left(\ftor_{1}(\fun)\cup\ftor_{2}(\fun)\right).
\end{align*}
\end{defn}

\subsection{Example: Optimizing over the natural numbers}

This is the simplest example that can show two interesting properties
of MCDPs: 
\begin{enumerate}
\item the ability to work with discrete posets; and 
\item the ability to treat multi-objective optimization problems.
\end{enumerate}
Consider the family of optimization problems indexed by~${\colF c}\in\mathbb{N}$:
\begin{equation}
\begin{cases}
\Min_{\posleq_{\overline{\mathbb{N}}\times\overline{\mathbb{N}}}} & \left\langle {\colR x},{\colR y}\right\rangle ,\\
\subto & x+y\geq\lceil\sqrt{x}\,\rceil+\lceil\sqrt{y}\,\rceil+{\colF c}.
\end{cases}\label{eq:example-1}
\end{equation}
One can show that this optimization problem is an MCDP by producing
a co-design diagram with an equivalent semantics, such as the one
in~\figref{toydiagram}. 

\begin{figure}[H]
\subfloat[\label{fig:toydiagram}]{\centering{}\includegraphics[width=6.2cm]{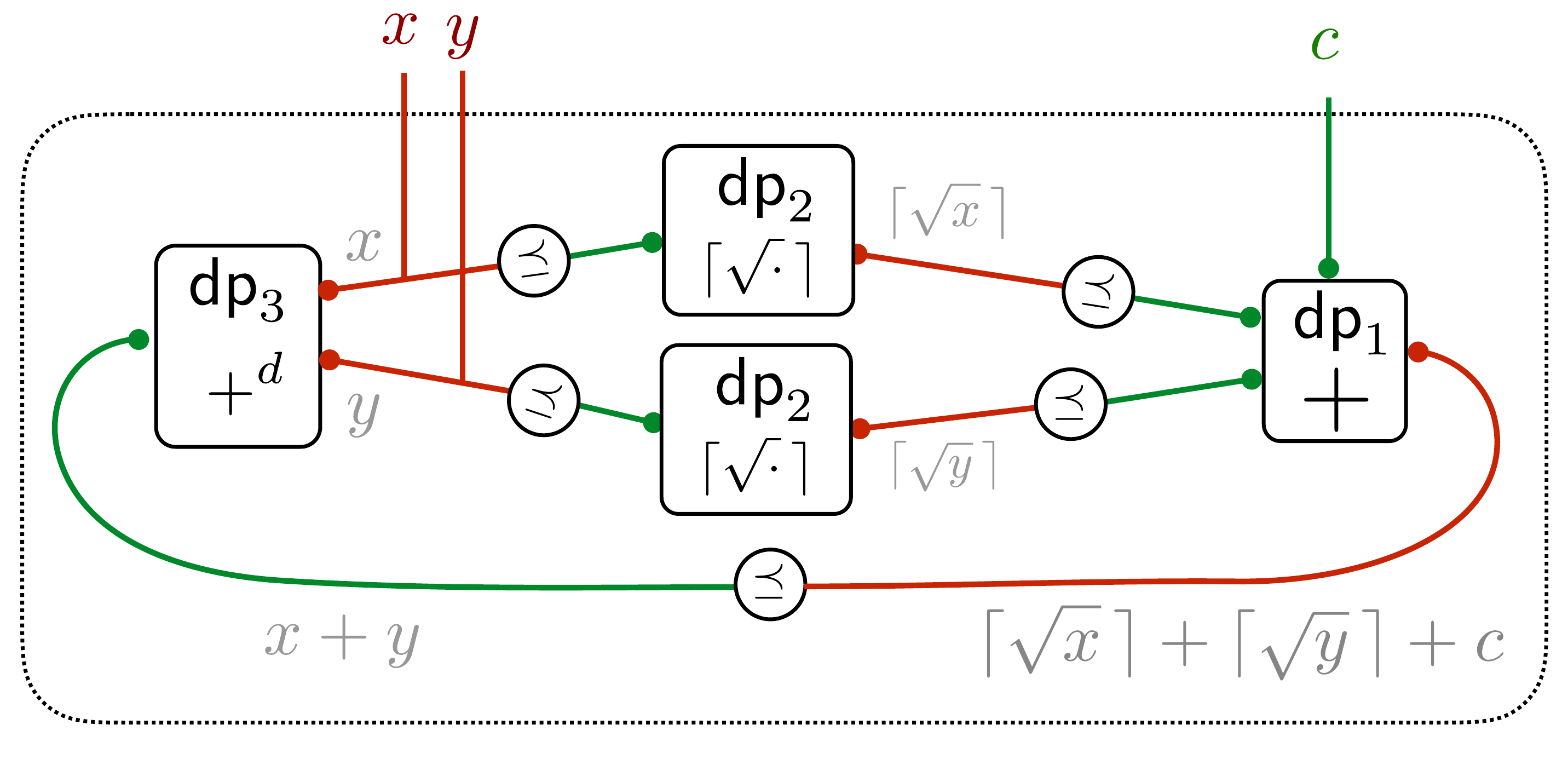}}\subfloat[\label{fig:toydiagram_tree}]{\begin{centering}
\includegraphics[scale=0.33]{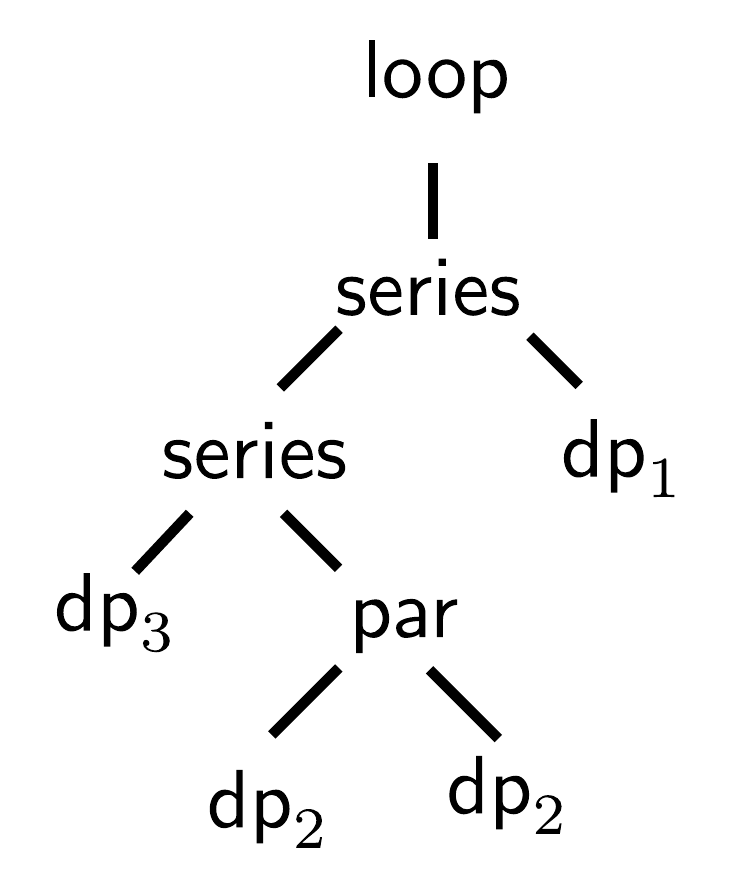}
\par\end{centering}
}

\caption{Co-design diagram equivalent to \eqref{example-1} and its tree representation.}
\end{figure}

\noindent The diagram contains three primitive DPIs: $\dprob_{1},$
$\dprob_{2}$ (used twice), and $\dprob_{3}$. Their $\ftor$ maps
are:
\begin{eqnarray*}
\ftor_{1}:{\colF\overline{\mathbb{N}}\times\overline{\mathbb{N}}\times\overline{\mathbb{N}}} & \rightarrow & {\colR\antichains\overline{\mathbb{N}}},\\
\left\langle \fun_{1},\fun_{2},\fun_{3}\right\rangle  & \mapsto & \{\fun_{1}+\fun_{2}+\fun_{3}\},\\
\ftor_{2}:{\colF\overline{\mathbb{N}}} & \rightarrow & {\colR\antichains\overline{\mathbb{N}}},\\
\fun & \mapsto & \{\lceil\sqrt{\fun}\,\rceil\},\\
\ftor_{3}:{\colF\overline{\mathbb{N}}} & \rightarrow & {\colR\antichains(\overline{\mathbb{N}}\times\overline{\mathbb{N}})},\\
\fun & \mapsto & \{\left\langle a,b\right\rangle \in\overline{\mathbb{N}}\times\overline{\mathbb{N}}:a+b=\fun\}.
\end{eqnarray*}
The tree decomposition (\figref{toydiagram_tree}) corresponds to
the expression
\begin{equation}
\dprob=\dploop(\dpseries(\dppar(\dprob_{2},\dprob_{2}),\dpseries(\dprob_{1},\dprob_{3}))).\label{eq:expression}
\end{equation}
Consulting \tabref{Correspondence}, from~(\ref{eq:expression})
one obtains an expression for~$\ftor$:
\begin{equation}
\ftor=\left((\ftor_{2}\oppar\ftor_{2})\opseries\ftor_{1}\opseries\ftor_{3}\right)^{\oploop}.\label{eq:h}
\end{equation}
This problem is small enough that we can write down an explicit expression
for~$\ftor$. By substituting in~(\ref{eq:h}) the definitions given
in~\defref{opseries}\textendash \ref{def:oploop}, we obtain that
evaluating~$\ftor(\F{c})$ means finding the least fixed point of
a map~$\Psi_{\F{c}}$: 
\[
\ftor:\F{c}\mapsto\lfp(\Psi_{\F{c}}).
\]
The map $\Psi_{\F{c}}:\R{\antichains(\overline{\mathbb{N}}\times\overline{\mathbb{N}})}\rightarrow\R{\antichains(\overline{\mathbb{N}}\times\overline{\mathbb{N}})}$
can be obtained from~(\ref{eq:phi}) as follows:
\[
\Psi_{\F{c}}:\R{R}\mapsto\Min\bigcup_{\left\langle x,y\right\rangle \in\R{R}}\upit\left\langle x,y\right\rangle \cap\qquad\qquad\qquad\qquad
\]
\[
\cap\left\{ \left\langle a,b\right\rangle \in\mathbb{N}^{2}:\left(a+b\geq\lceil\sqrt{x}\,\rceil+\lceil\sqrt{y}\,\rceil+\F{c}\right)\right\} .
\]

\begin{figure}
\begin{centering}
\includegraphics[bb=0bp 0bp 458bp 621bp,clip,scale=0.5]{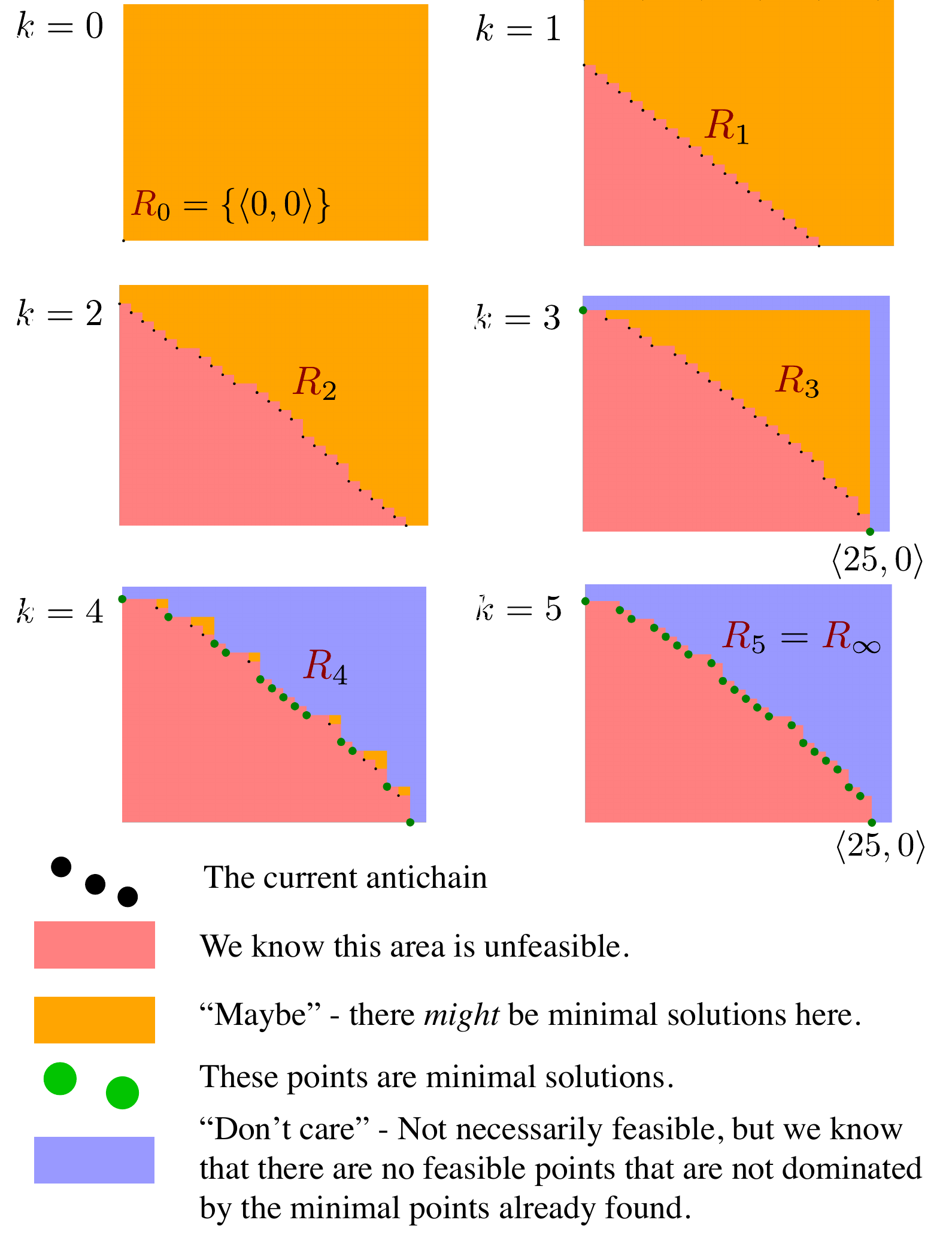}
\par\end{centering}
\caption{\label{fig:example24}Kleene ascent to solve the problem \eqref{example-1}
for $\F{c}=20$. The sequence converges in five steps to $\R{R}_{5}=\R{R}_{\infty}.$ }
\end{figure}

\noindent Kleene's algorithm is the iteration $\R{R}_{k+1}=\Psi_{\F{c}}(\R{R}_{k})$
starting from $\R{R}_{0}=\bot_{\R{\antichains(\overline{\mathbb{N}}\times\overline{\mathbb{N}})}}=\{\left\langle 0,0\right\rangle \}$.

\noindent For $\F{c}=0$, the sequence converges immediately: 
\[
\R{R}_{0}=\{\boldsymbol{\left\langle 0,0\right\rangle }\}=\ftor(\F{0}).
\]
 For $\F{c}=1$, the sequence converges at the second step:
\begin{align*}
\R{R}_{0} & =\{\left\langle 0,0\right\rangle \},\\
\R{R}_{1} & =\{\boldsymbol{\left\langle 0,1\right\rangle },\boldsymbol{\left\langle 1,0\right\rangle }\}=\ftor(\F{1}).
\end{align*}
For $\F{c}=2$, the sequence converges at the fourth step; however,
some solutions (in bold) converge sooner:
\begin{align*}
\R{R}_{0} & =\{\left\langle 0,0\right\rangle \},\\
\R{R}_{1} & =\{\left\langle 0,2\right\rangle ,\left\langle 1,1\right\rangle ,\left\langle 2,0\right\rangle \},\\
\R{R}_{2} & =\left\{ \boldsymbol{\left\langle 0,4\right\rangle },\left\langle 2,2\right\rangle ,\boldsymbol{\left\langle 4,0\right\rangle }\right\} ,\\
\R{R}_{3} & =\{\boldsymbol{\left\langle 0,4\right\rangle },\boldsymbol{\left\langle 3,3\right\rangle },\boldsymbol{\left\langle 4,0\right\rangle }\}=\ftor(\F{2}).
\end{align*}
The next values in the sequence are:
\begin{align*}
\ftor(\F{3}) & =\left\{ \boldsymbol{\left\langle 0,6\right\rangle },\boldsymbol{\left\langle 3,4\right\rangle },\boldsymbol{\left\langle 4,3\right\rangle },\boldsymbol{\left\langle 6,0\right\rangle }\right\} ,\\
\ftor(\F{4}) & =\left\{ \boldsymbol{\left\langle 0,7\right\rangle },\boldsymbol{\left\langle 3,6\right\rangle },\boldsymbol{\left\langle 4,4\right\rangle },\boldsymbol{\left\langle 6,3\right\rangle },\boldsymbol{\left\langle 7,0\right\rangle }\right\} .
\end{align*}
\figref{example24}~shows the sequence for $\F{c}=20$.

\subsubsection*{Guarantees of Kleene ascent}

\noindent Solving an MCDP with cycles reduces to computing a Kleene
ascent sequence~$\R{R}_{k}.$ At each instant~$k$ we have some
additional guarantees. 

For any finite~$k$, the resources ``below''~$\R{R}_{k}$ (the
set~$\ressp\ \backslash\upit\R{R}_{k}$,) are infeasible. (In \figref{example24},
those are colored in red.) 

If the iteration converges to a non-empty antichain~$\R{R}_{\infty}$,
the antichain~$\R{R}_{\infty}$ divides~$\ressp$ in two. Below
the antichain, all resources are infeasible. However, above the antichain
(purple area), it is not necessarily true that all points are feasible,
because there might be holes in the feasible set, as in \exaref{one}.
Note that this method does not compute the entire feasible set, but
rather only the \emph{minimal elements} of the feasible set, which
might be much easier to compute.

Finally, if the sequence converges to the empty set, it means that
there are no solutions. The sequence~$\R{R}_{k}$ can be considered
a certificate of infeasibility.

\subsection{Complexity of the solution }

\subsubsection*{1) Complexity of fixed point iteration}

Consider first the case of an MCDP that can be described as~$\dprob=\dploop(\dprob_{0})$,
where~$\dprob_{0}$ is an MCDP that is described only using the~$\dpseries$
and~$\dppar$ operators. Suppose that~$\dprob_{0}$ has resource
space~$\ressp$. Then evaluating~$\ftor$ for~$\dprob$ is equivalent
to computing a least fixed point iteration on the space of antichains~$\Aressp$.
This allows to give worst-case bounds on the number of iterations.

\begin{prop}
\label{prop:complexity}Suppose that~$\dprob=\dploop(\dprob_{0})$
and~$\dprob_{0}$ has resource space~$\ressp_{0}$ and evaluating~$\ftor_{0}$
takes at most~$c$ computation. Then we can obtain the following
bounds for the algorithm's resources usage:

\smallskip{}
\begin{tabular}{cc}
memory & $O(\posetwidth(\ressp_{0}))$\tabularnewline
number of steps & $O(\posetheight(\antichains\ressp_{0}))$\tabularnewline
total computation & $O(\posetwidth(\ressp_{0})\times\posetheight(\antichains\ressp_{0})\times c)$\tabularnewline
\end{tabular}

\end{prop}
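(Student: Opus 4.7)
The plan is to exploit the structure of the Kleene ascent established in \proref{loop-continuous}. Since $\dprob = \dploop(\dprob_0)$ with $\dprob_0$ built only from $\dpseries$ and $\dppar$, evaluating $\ftor$ at a fixed $\fun$ reduces (by \eqref{loop_fixpoint} and \lemref{kleene-1}) to computing
\[
\lfp(\Psi_{\fun}) \;=\; \sup_{k}\,\R{R}_k, \qquad \R{R}_0 = \{\bot_{\ressp_0}\},\quad \R{R}_{k+1} = \Psi_{\fun}(\R{R}_k),
\]
where $\Psi_{\fun}$ is the map in \eqref{phi}. All three bounds will follow by examining this iteration one ingredient at a time.

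First, the memory bound. Each $\R{R}_k$ is, by construction, an antichain in $\ressp_0$ (the outer $\Min_{\resleq}$ in the definition of $\Psi_{\fun}$ guarantees this). Hence $|\R{R}_k| \leq \posetwidth(\ressp_0)$ by \defref{poset-width-height}, and the working memory required at any one step is $O(\posetwidth(\ressp_0))$ (the previous iterate can be discarded once the next is written). Second, the iteration-count bound. By Kleene's theorem the sequence $\R{R}_0 \posleq_{\Aressp_0} \R{R}_1 \posleq_{\Aressp_0} \cdots$ is a chain in $\Aressp_0$; moreover, as soon as $\R{R}_{k+1} = \R{R}_k$ the point $\R{R}_k$ is a fixed point and equals $\lfp(\Psi_{\fun})$, so the iteration may halt. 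Therefore the number of steps before termination equals the length of a strictly ascending chain in $\Aressp_0$ starting at $\bot_{\Aressp_0}$, which is at most $\posetheight(\Aressp_0)$.

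Third, the per-step work. Evaluating $\Psi_{\fun}(\R{R}_k)$ requires one call to $\ftor_0$ for each element $\res \in \R{R}_k$ (namely $\ftor_0(\fun,\res)$), followed by the set-theoretic intersections with $\upit \res$ and a final $\Min_{\resleq}$. Since $|\R{R}_k| \leq \posetwidth(\ressp_0)$ and each call to $\ftor_0$ costs at most $c$, the dominating term is $O(\posetwidth(\ressp_0) \cdot c)$ per step. Multiplying by the number of iterations gives the total-computation bound $O(\posetwidth(\ressp_0)\cdot \posetheight(\Aressp_0)\cdot c)$.

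The main obstacle I expect is making the iteration-count argument airtight rather than heuristic. Strict ascent along the Kleene chain must be justified carefully: one needs to observe that $\Psi_{\fun}$ is monotone (from \proref{loop-continuous}), that the chain is non-decreasing, and that the first index where it fails to strictly increase is already a fixed point, at which point Kleene's theorem identifies it with $\lfp(\Psi_{\fun})$; only then can the length be bounded by $\posetheight(\Aressp_0)$. A secondary subtlety is that the bookkeeping in $\Psi_{\fun}$ (computing the union, the intersections with $\upit \res$, and the final $\Min_{\resleq}$) is absorbed into the $O(\posetwidth(\ressp_0))$ factor; this is clean when comparisons in $\ressp_0$ are unit cost, which is the natural accounting given that $c$ already absorbs the cost of a single $\ftor_0$-evaluation.
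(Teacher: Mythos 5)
Your proof is correct and follows essentially the same route as the paper's: memory is bounded by the width of $\ressp_{0}$ because each iterate is an antichain, the number of steps by the height of $\antichains\ressp_{0}$ because the Kleene iterates form an ascending chain, and the total work by evaluating $\ftor_{0}$ once per antichain element per step. Your additional care about why the chain ascends strictly until a fixed point is reached is a reasonable elaboration of what the paper leaves implicit, but it is not a different argument.
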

\begin{proof}
The memory utilization is bounded by~$\posetwidth(\ressp_{0})$,
because the state is an antichain, and~$\posetwidth(\ressp_{0})$
is the size of the largest antichain. The iteration happens in the
space~$\antichains\ressp_{0}$, and we are constructing an ascending
chain, so it can take at most~$\posetheight(\antichains\ressp_{0})$
steps to converge. Finally, in the worst case the map~$\ftor_{0}$
needs to be evaluated once for each element of the antichain for each
step.
\end{proof}
These worst case bounds are strict. 
\begin{example}
Consider solving $\dprob=\dploop(\dprob_{0})$ with $\dprob_{0}$
defined by $\ftor_{0}\colon\left\langle \left\langle \right\rangle ,x\right\rangle \mapsto x+1$
with $x\in\overline{\mathbb{N}}$. Then the least fixed point equation
is equivalent to solving~$\min\{x\colon\Psi(x)\leq x\}$ with~$\Psi:x\mapsto x+1$.
The iteration~$R_{k+1}=\Psi(R_{k})$ converges to~$\top$ in~$\posetheight(\overline{\mathbb{N}})=\aleph_{0}$
steps.
\end{example}

\begin{rem}
Making more precise claims requires additional more restrictive assumptions
on the spaces involved. For example, without adding a metric on~$\ressp$,
it is not possible to obtain properties such as linear or quadratic
convergence. 
\end{rem}

\begin{rem}[Invariance to re-parameterization]
All the results given in this paper are invariant to any order-preserving
re-parameterization of all the variables involved. 
\end{rem}

\subsubsection*{2) Relating complexity to the graph properties}

\propref{complexity} above assumes that the MCDP is already in the
form $\dprob=\dploop(\dprob_{0})$, and relates the complexity to
the poset~$\ressp_{0}$. Here we relate the results to the graph
structure of an MCDP.

Take an MCDP~$\dprob=\left\langle \funsp,\ressp,\left\langle \cdpiN,\mathcal{E}\right\rangle \right\rangle $.
To put~$\dprob$ in the form $\dprob=\dploop(\dprob_{0})$ according
to the procedure in~\secref{Decomposition}, we need to find an arc
feedback set (AFS) of the graph~$\left\langle \cdpiN,\mathcal{E}\right\rangle $.
Given a AFS~$F\subset\mathcal{E}$, then the resource space~$\ressp_{0}$
for a~$\dprob_{0}$ such that~$\dprob=\dploop(\dprob_{0})$ is the
product of the resources spaces along the edges: $\ressp_{0}=\prod_{e\in F}\ressp_{e}.$

Now that we have a relation between the AFS and the complexity of
the iteration, it is natural to ask what is the optimal choice of
AFS\textemdash which, so far, was left as an arbitrary choice. The
AFS should be chosen as to minimize one of the performance measures
in~\propref{complexity}. 

Of the three performance measures in~\propref{complexity}, the most
fundamental appears to be~$\posetwidth(\ressp_{0})$, because that
is also an upper bound on the number of distinct minimal solutions.
Hence we can call it ``design complexity'' of the MCDP.
\begin{defn}
\label{def:design-complexity}Given a graph~$\left\langle \cdpiN,\mathcal{E}\right\rangle $
and a labeling of each edge~$e\in\mathcal{E}$ with a poset~$\ressp_{e}$,
the \emph{design complexity~}$\text{DC}(\left\langle \cdpiN,\mathcal{E}\right\rangle )$
is defined as 
\begin{equation}
\text{DC}(\left\langle \cdpiN,\mathcal{E}\right\rangle )=\min_{F\text{ is an AFS}}\posetwidth(\prod_{e\in F}\ressp_{e}).\label{eq:complexity}
\end{equation}
\end{defn}
In general, width and height of posets are not additive with respect
to products; therefore, this problem does not reduce to any of the
known variants of the minimum arc feedback set problem, in which
each edge has a weight and the goal is to minimize the sum of the
weights.

\subsubsection*{3) Considering relations with infinite cardinality}

This analysis shows the limitations of the simple solution presented
so far: it is easy to produce examples for which $\posetwidth(\ressp_{0})$
is infinite, so that one needs to represent a continuum of solutions.

\begin{example}
Suppose that the platform to be designed must travel \F{a distance~$d$
{[}m{]}}, and we need to choose the \R{endurance~$T$~{[}s{]}}
and \R{the velocity~$v$~{[}m/s{]}}. The relation among the quantities
is~${\colF d}\leq{\colR T\,v}.$ This is a design problem described
by the map
\begin{eqnarray*}
\ftor:{\colF\Rcomp} & \rightarrow & {\colR\antichains\Rcomp\times\Rcomp,}\\
{\colF d} & \mapsto & \{\langle{\colR T},{\colR v}\rangle\in{\colR\Rcomp\times\Rcomp}:\,{\colF d}={\colR T}\,{\colR v}\}.
\end{eqnarray*}
For each value of ${\colF d}$, there is a continuum of solutions.
\end{example}
One approach to solving this problem would be to discretize the functionality~$\funsp$
and the resources~$\ressp$ by sampling and/or coarsening. However,
sampling and coarsening makes it hard to maintain completeness and
consistency. 

One effective approach, outside of the scope of this paper, that allows
to use finite computation is to \emph{approximate the design problem}
\emph{itself}, rather than the spaces~$\funsp,\ressp$, which are
left as possibly infinite. The basic idea is that an infinite antichain
can be bounded from above and above by two antichains that have a
finite number of points. This idea leads to an algorithm that, given
a prescribed computation budget, can compute an inner and outer approximation
to the solution antichain~\cite{mcdp_icra_uncertainty_arxiv}.

\section{Extended Numerical Examples\label{sec:Numerical-examples}}

This example considers the choice of different battery technologies
for a robot. The goals of this example are: 1) to show how design
problems can be composed; 2) to show how to define hard constraints
and precedence between resources to be minimized; 3) to show how even
relatively simple models can give very complex trade-offs surfaces;
and 4) to introduce MCDPL, a formal language for the description of
MCDPs.

\begin{figure}
\begin{centering}
\subfloat[\label{fig:battery}Interface of battery design problem.]{\centering{}\includegraphics[scale=0.3]{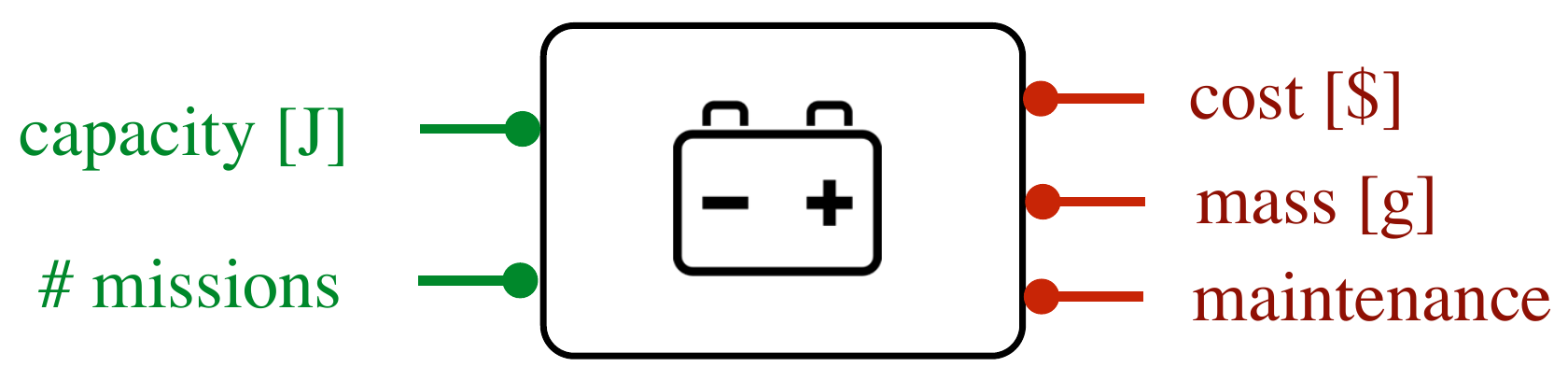}}
\par\end{centering}
\begin{centering}
\medskip{}
\subfloat[\label{fig:battery_nicad}MCDPL code equivalent to equations (\ref{eq:mass})\textendash (\ref{eq:cost}).]{\begin{centering}
\includegraphics[scale=0.55]{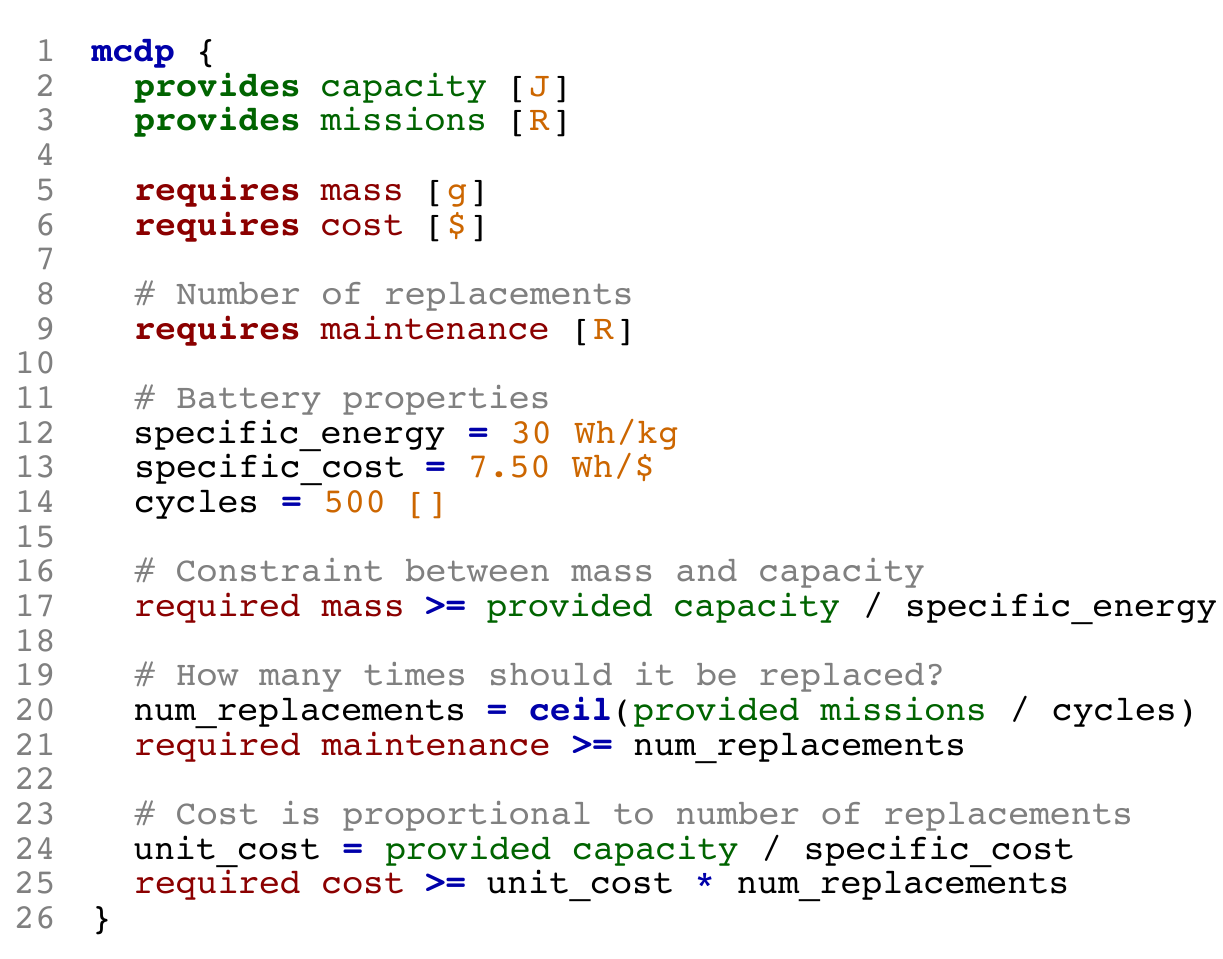}
\par\end{centering}
}
\par\end{centering}
\begin{centering}
\medskip{}
\subfloat[\label{fig:Co-design-diagram}Co-design diagram generated by PyMCDP
from code in panel (b).]{\begin{centering}
\hspace{2cm}\includegraphics[scale=0.35]{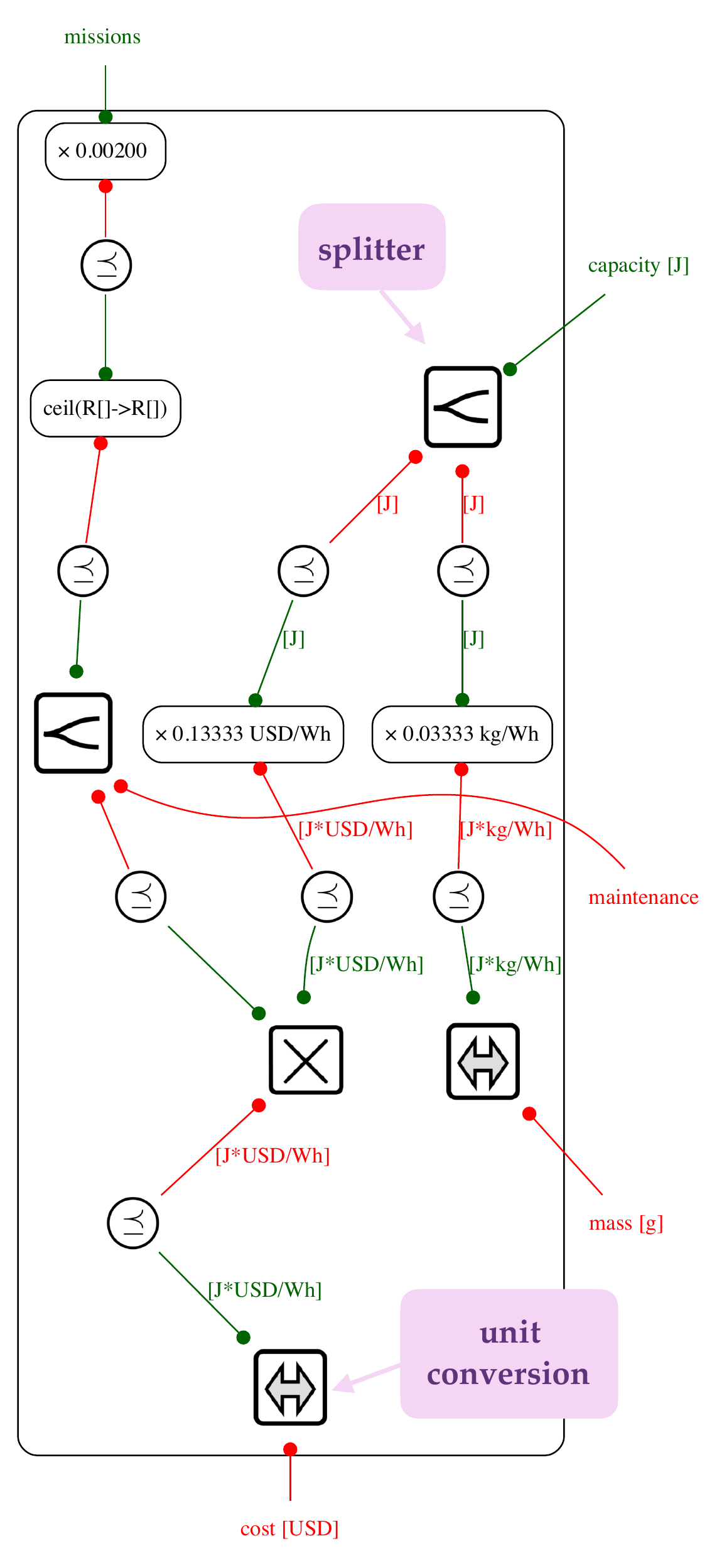}\hspace{2cm}
\par\end{centering}
}\medskip{}
\par\end{centering}
\begin{centering}
\subfloat[\label{fig:Tree-representation-of}Tree representation using $\dppar$/$\dpseries$
of diagram in panel (c).]{\begin{centering}
\includegraphics[width=7cm]{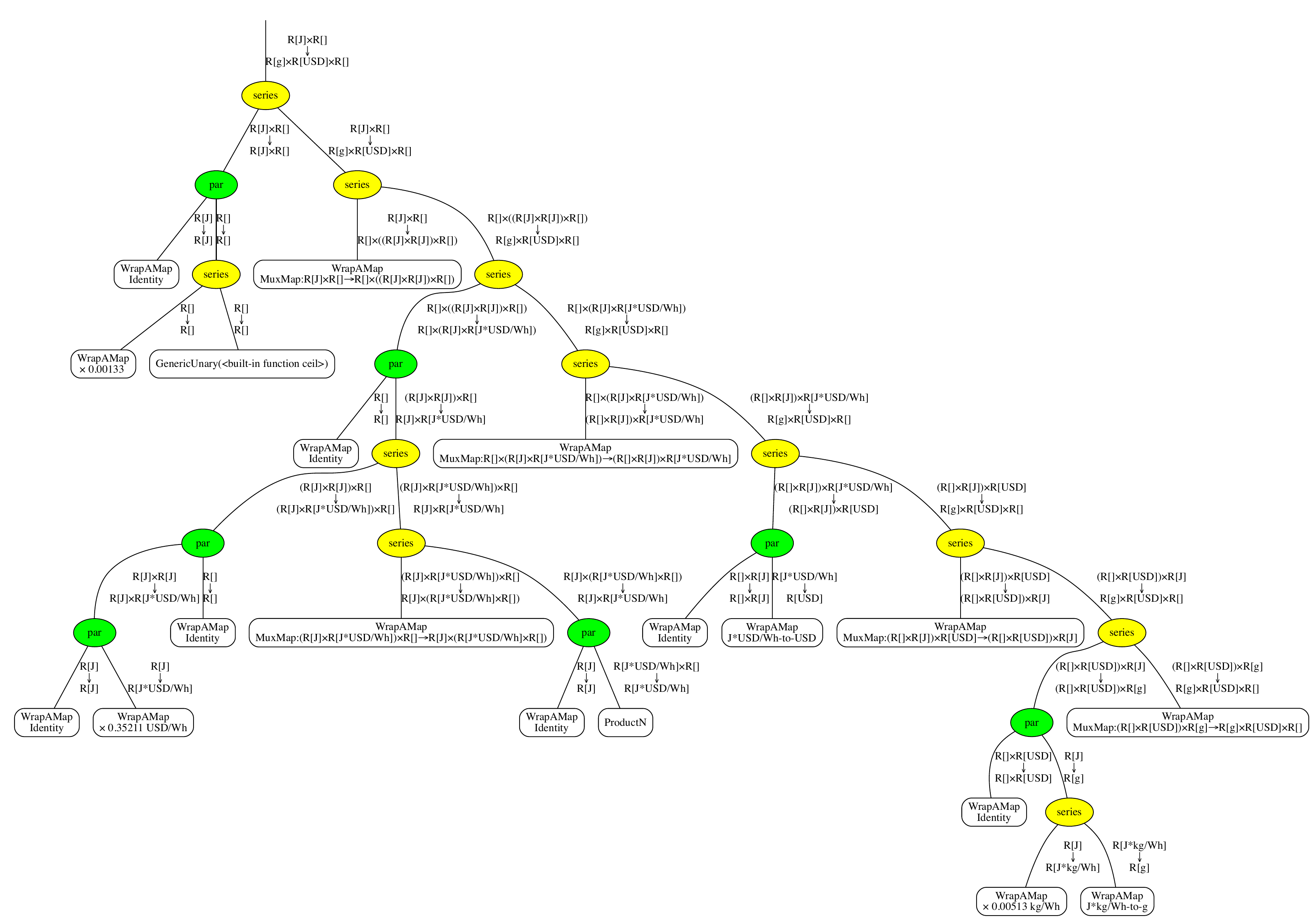}
\par\end{centering}
}
\par\end{centering}
\medskip{}

\medskip{}

\caption{Panel (c) shows the co-design diagram generated from the code in~(b).
Panel (d) shows a tree representation (series, parallel) for the diagram.
The edges show the types of functionality and resources. The leaves
are labeled with the Python class used internally by the interpreter
PyMCDP. }
\end{figure}

\subsubsection*{Language and interpreter/solver}

MCDPL is a modeling language to describe MCDPs and their compositions.
It is inspired by CVX and ``disciplined convex programming''~\cite{grant08graph}.
MCDPL is even more disciplined than CVX; for example, multiplying
by a negative number is a \emph{syntax} error. The figures are generated
by PyMCDP, an interpreter and solver for MCDPs, which implements the
techniques described in this paper.  The software and a manual are
available at \url{http://mcdp.mit.edu}. 

\subsubsection*{Model of a battery}

The choice of a battery can be modeled as a DPI (\figref{battery})
with functionalities \F{capacity {[}J{]}} and \F{number of missions}
and with resources \R{mass {[}kg{]}}, \R{cost {[}\${]}} and ``\R{maintenance}'',
defined as the number of times that the battery needs to be replaced
over the lifetime of the robot. 

Each battery technology is described by the three parameters specific
energy, specific cost, and lifetime (number of cycles):
\begin{align*}
\rho & \doteq\text{specific energy [Wh/kg]},\\
\alpha & \doteq\text{specific cost [Wh/\$]},\\
c & \doteq\text{battery lifetime [\# of cycles]}.
\end{align*}
The relation between functionality and resources is described by three
nonlinear monotone constraints: 
\begin{align}
\R{\text{mass}} & \geq\F{\text{capacity}}/\rho,\label{eq:mass}\\
\R{\text{maintenance}} & \geq\left\lceil \F{\text{missions}}/c\right\rceil ,\label{eq:maintenance}\\
\R{\text{cost}} & \geq\left\lceil \F{\text{missions}}/c\right\rceil (\F{\text{capacity}}/\alpha).\label{eq:cost}
\end{align}

\figref{battery_nicad} shows the MCDPL code that describes the
model corresponding to (\ref{eq:mass})\textendash (\ref{eq:cost}).
The diagram in \figref{Co-design-diagram} is automatically generated
from the code. \figref{Tree-representation-of}~shows a tree representation
of the diagram using the $\dpseries$/$\dppar$ operators. 

\subsubsection*{Competing battery technologies}

The parameters for the battery technologies used in this example are
shown in~~\tabref{batteries}. 

\begin{table}[H]
\begin{centering}
\caption{\label{tab:batteries}Specifications of common batteries technologies}
\par\end{centering}
\centering{}\setlength\extrarowheight{0.5pt}{\footnotesize{}}
\begin{tabular}{crrr}
\multirow{2}{*}{{\footnotesize{}\tableColors}\emph{\footnotesize{}technology}} & \emph{\footnotesize{}energy density} & \emph{\footnotesize{}specific cost} & \emph{\footnotesize{}operating life}\tabularnewline
 & {\footnotesize{}{[}Wh/kg{]}} & {\footnotesize{}{[}Wh/\${]}} & \# cycles\tabularnewline
{\footnotesize{}NiMH} & {\footnotesize{}100} & {\footnotesize{}3.41} & {\footnotesize{}500 }\tabularnewline
{\footnotesize{}NiH2} & {\footnotesize{}45} & {\footnotesize{}10.50} & {\footnotesize{}20000}\tabularnewline
{\footnotesize{}LCO} & {\footnotesize{}195} & {\footnotesize{}2.84} & {\footnotesize{}750}\tabularnewline
{\footnotesize{}LMO} & {\footnotesize{}150} & {\footnotesize{}2.84} & {\footnotesize{}500}\tabularnewline
{\footnotesize{}NiCad} & {\footnotesize{}30} & {\footnotesize{}7.50} & {\footnotesize{}500}\tabularnewline
{\footnotesize{}SLA} & {\footnotesize{}30} & {\footnotesize{}7.00} & {\footnotesize{}500}\tabularnewline
{\footnotesize{}LiPo} & {\footnotesize{}250} & {\footnotesize{}2.50} & {\footnotesize{}600}\tabularnewline
{\footnotesize{}LFP} & {\footnotesize{}90} & {\footnotesize{}1.50} & {\footnotesize{}1500}\tabularnewline
\end{tabular}{\footnotesize \par}
\end{table}

Each row of the table is used to describe a model as in \figref{battery_nicad}
by plugging in the specific values in lines 12\textendash 14.

Given the different models, we can defined their co-product (\figref{Co-product-of-battery})
using the MCDPL code in~\figref{batteries_code}. 

\begin{figure}[H]
\subfloat[Co-product of battery technologies\label{fig:Co-product-of-battery}]{\centering{}\includegraphics[scale=0.33]{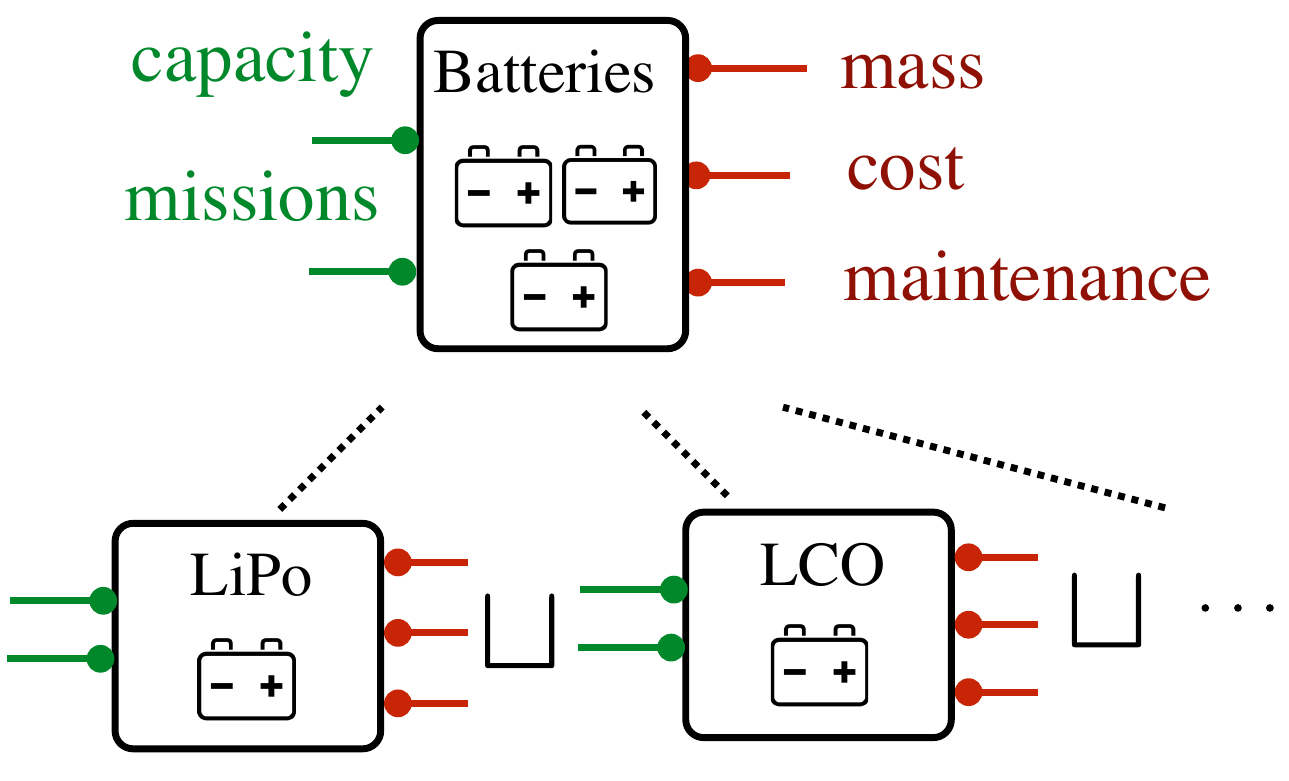}}\subfloat[\label{fig:batteries_code}Batteries.mcdp]{\begin{centering}
\includegraphics[scale=0.66]{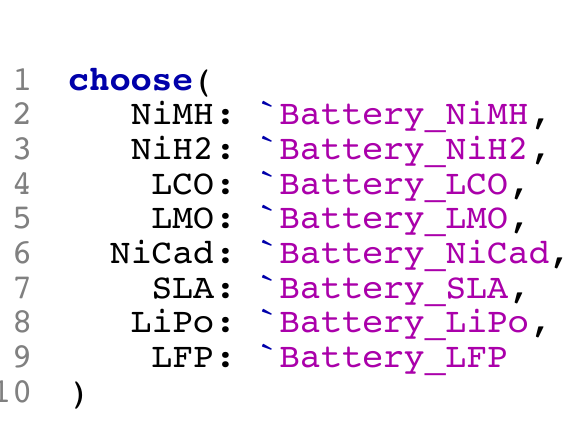}
\par\end{centering}
}

\smallskip{}

\caption{\label{fig:batteriesbig}The co-product of design problems describes
the choices among different technologies. The MCDPL keyword for the
co-product is ``choose''.}
\end{figure}

\subsubsection*{Introducing other variations or objectives}

The design problem for the battery has two functionalities (\F{capacity}
and \F{number of missions}) and three resources (\R{cost}, \R{mass},
and \R{maintenance}). Thus, it describes a family of multi-objective
optimization problems, of the type ``Given \F{capacity} and \F{missions},
minimize $\left\langle \R{\text{cost}},\R{\text{mass}},\R{\text{maintenance}}\right\rangle $''.
We can further extend the class of optimization problems by introducing
other hard constraints and by choosing which resource to prioritize.
This can be done by composition of design problems; that is, by creating
a larger DP that contains the original DP as a subproblem, and contains
some additional degenerate DPs that realize the desired semantics.

For example, suppose that we would like to find the optimal solution(s)
such that: 1) The mass does not exceed 3 kg; 2) The mass is minimized
as a primary objective, while cost/maintenance are secondary objectives.

This semantics can be described by the co-design diagram in~\figref{diagram},
which contains two new symbols. The DP labeled ``3 kg'' implements
the semantics of hard constraints. It has one functionality ($\funsp=\F{\Rcomp^{\text{kg}}}$)
and zero resources~($\ressp=\R{\One}$). The poset~$\One=\{\left\langle \right\rangle \}$
has exactly two antichains: $\emptyset$ and $\{\left\langle \right\rangle \}$.
These represent ``infeasible'' and ``feasible'', respectively.
The DP is described by the map

\quad\quad
\begin{minipage}[c]{5cm}
\begin{align*}
\ftor:\F{\mathbb{\overline{R}}_{+}^{\text{kg}}} & \rightarrow\R{\antichains\One},\\
\fun & \mapsto\begin{cases}
\R{\{\left\langle \right\rangle \}}, & \text{if }\fun\leq\text{3 kg},\\
\R{\emptyset}, & \text{if }\fun>\text{3 kg}.
\end{cases}
\end{align*}

\end{minipage}\quad\includegraphics[scale=0.45]{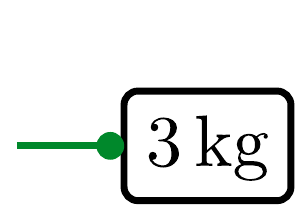}

\smallskip{}

\noindent The block labeled~``$\top$'' is similarly defined and
always returns ``feasible'', so it has the effect of ignoring \R{cost}
and \R{maintenance} as objectives. The only resource edge is the
one for \R{mass}, which is then the only objective.

The MCDPL code is shown in~\figref{diagram_code}. Note the intuitive
interface: the user can directly write ``mass required by battery
$\leq$ 3 kg'' and ``ignore maintenance required by battery'',
which is compiled to ``maintenance required by battery $\leq\top$''.

\begin{figure}
\begin{centering}
\subfloat[\label{fig:diagram}Co-design diagram that expresses hard constraints
for \R{mass}.]{\begin{centering}
\begin{minipage}[t]{8.6cm}
\begin{center}
\includegraphics[scale=0.4]{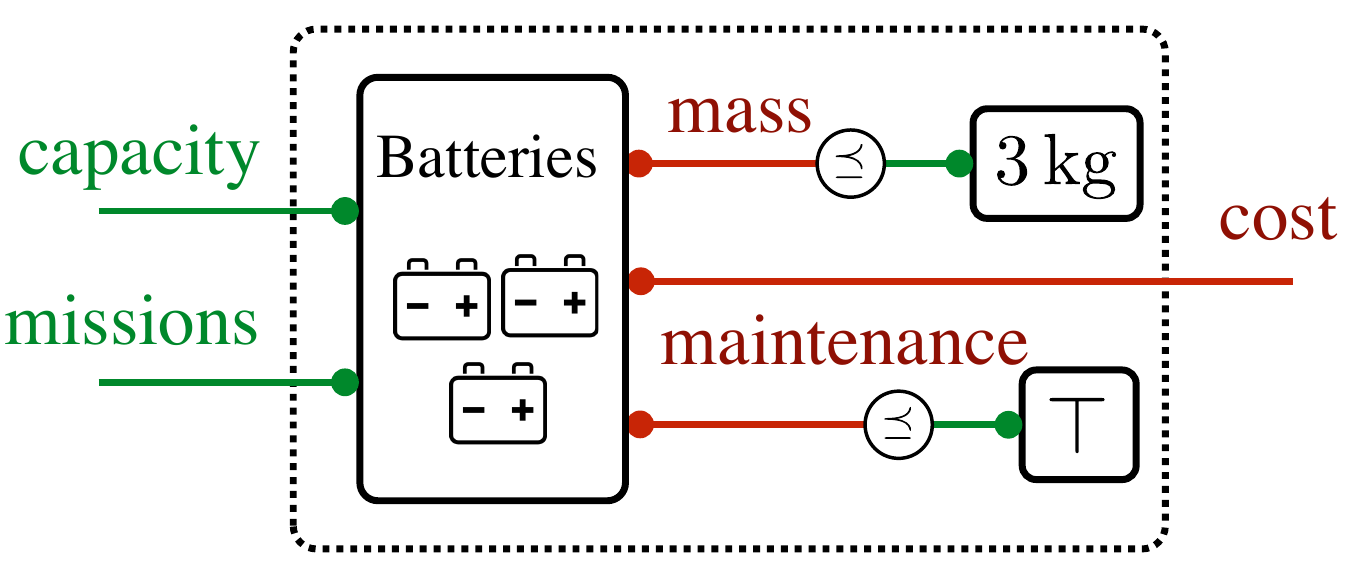}
\par\end{center}
\end{minipage}
\par\end{centering}
}
\par\end{centering}
\begin{centering}
\subfloat[\label{fig:diagram_code}MCDPL code equivalent to diagram in (a).]{\begin{centering}
\includegraphics[width=6cm]{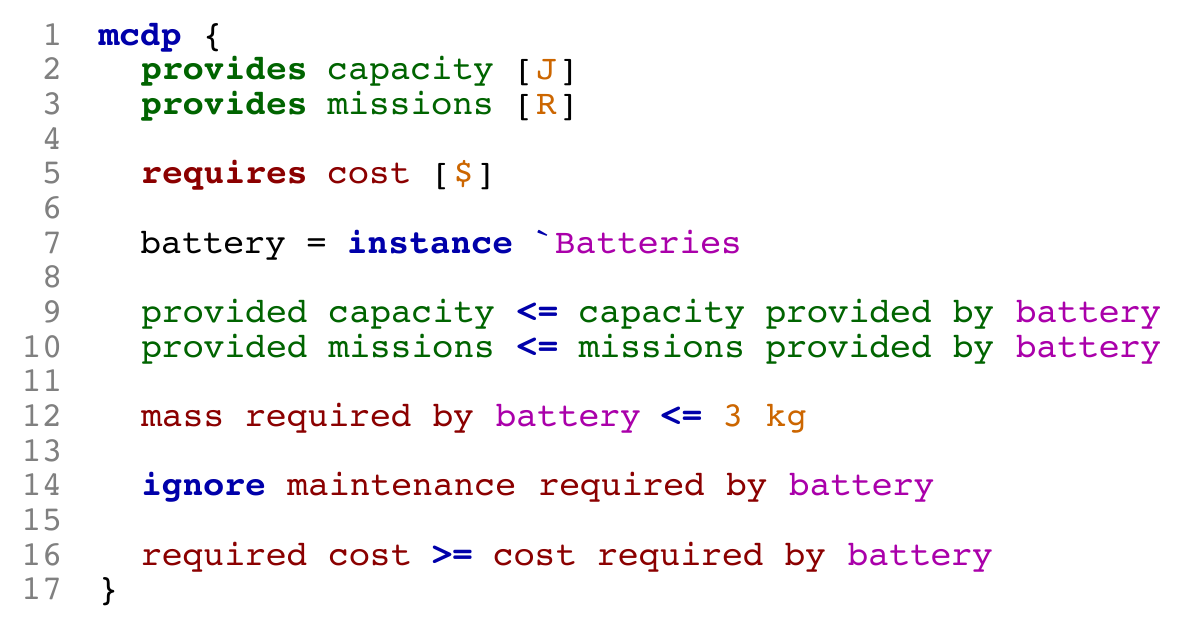}
\par\end{centering}
}
\par\end{centering}
\smallskip{}

\caption{Composition of MCDPs can express hard constraints and precedence of
objectives. In this case, there is a hard constraint on the \R{mass}.
Because there is only one outgoing edge for \R{mass}, and the \R{cost}
and \R{maintenance} are terminated by a dummy constraint ($x\posleq\top$),
the semantics of the diagram is that the objective is to minimize
the \R{mass} as primary objective.}
\end{figure}

This relatively simple model for energetics already shows the complexity
of MCDPs. \figref{mainbattery} shows the optimal choice of the battery
technology as a function of capacity and number of missions, for several
slight variations of the problem that differ in constraints and objectives.
For each battery technology, the figures show whether at each operating
point the technology is the optimal choice, and how many optimal choices
there are. Some of the results are intuitive. For example, \figref{min_mass}
shows that if the only objective is minimizing \R{mass}, then the
optimal choice is simply the technology with largest specific energy
(LiPo). The decision boundaries become complex when considering nonlinear
objectives. For example, \figref{min_cost} shows the case where the
objective is to minimize the \R{cost}, which, defined by~\eqref{cost},
is nonlinearly related to both \F{capacity} and \F{number of missions}.
When considering multi-objective problems, such as minimizing jointly
$\left\langle \R{\text{mass}},\R{\text{cost}}\right\rangle $~(\figref{min_mass_cost})
or~$\left\langle \R{\text{mass}},\R{\text{cost}},\R{\text{maintenance}}\right\rangle $
(\figref{min_mass_cost}), there are multiple non-dominated solutions.

\begin{figure}
\begin{centering}
\subfloat[\label{fig:drone}\label{fig:drone_dia}Co-design diagram corresponding
to (\ref{eq:drone_eq_first})\textendash (\ref{eq:drone_eq_last}).]{\centering{}\includegraphics[scale=0.38]{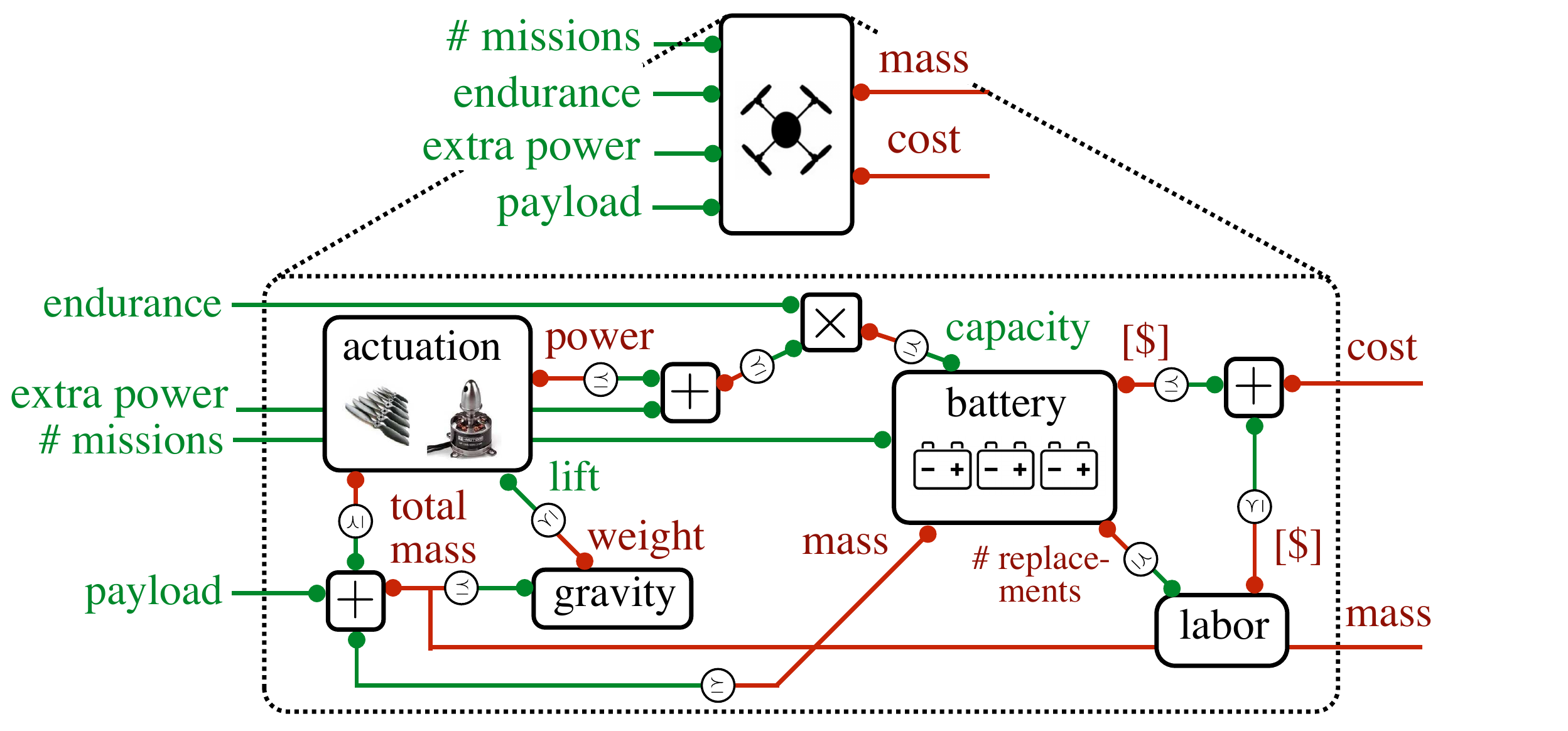}}
\par\end{centering}
\begin{centering}
\subfloat[\label{fig:drone_code}MCDPL code for (\ref{eq:drone_eq_first})\textendash (\ref{eq:drone_eq_last}).
The ``instance'' statements refer to previously defined models for
batteries (\figref{batteries_code}) and actuation (not shown). ]{\begin{centering}
\begin{minipage}[t]{8.6cm}
\begin{center}
\includegraphics[scale=0.55]{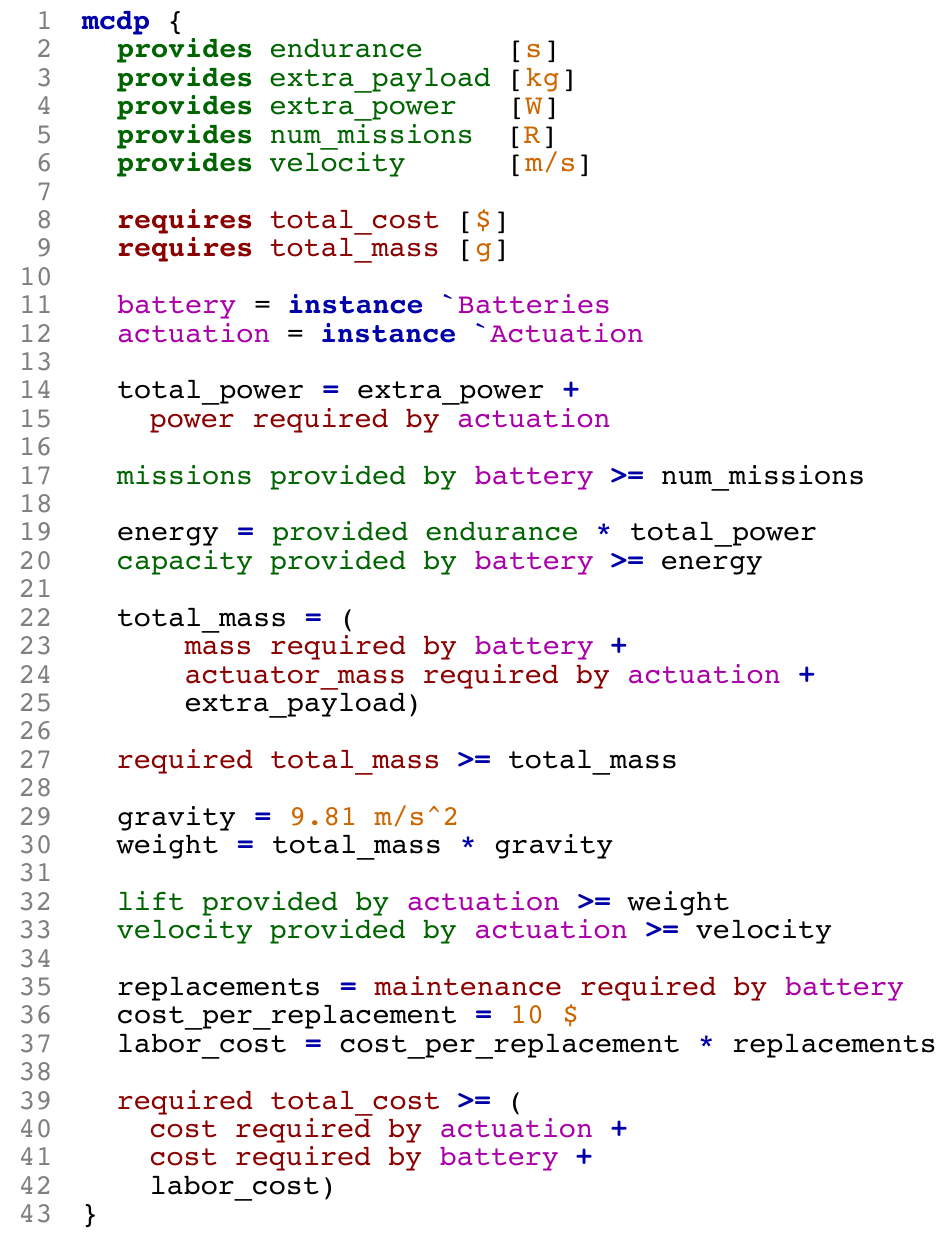}
\par\end{center}
\end{minipage}
\par\end{centering}
}
\par\end{centering}
\subfloat[\label{fig:drone_tree}Tree representation for the MCDP. Yellow/green
rounded ovals are $\dpseries$/$\dppar$ junctions. There is one co-product
junction, signifying the choice between different battery technologies,
and one~$\dploop$ junction, at the root of the tree.]{\begin{centering}
\includegraphics[width=8.6cm]{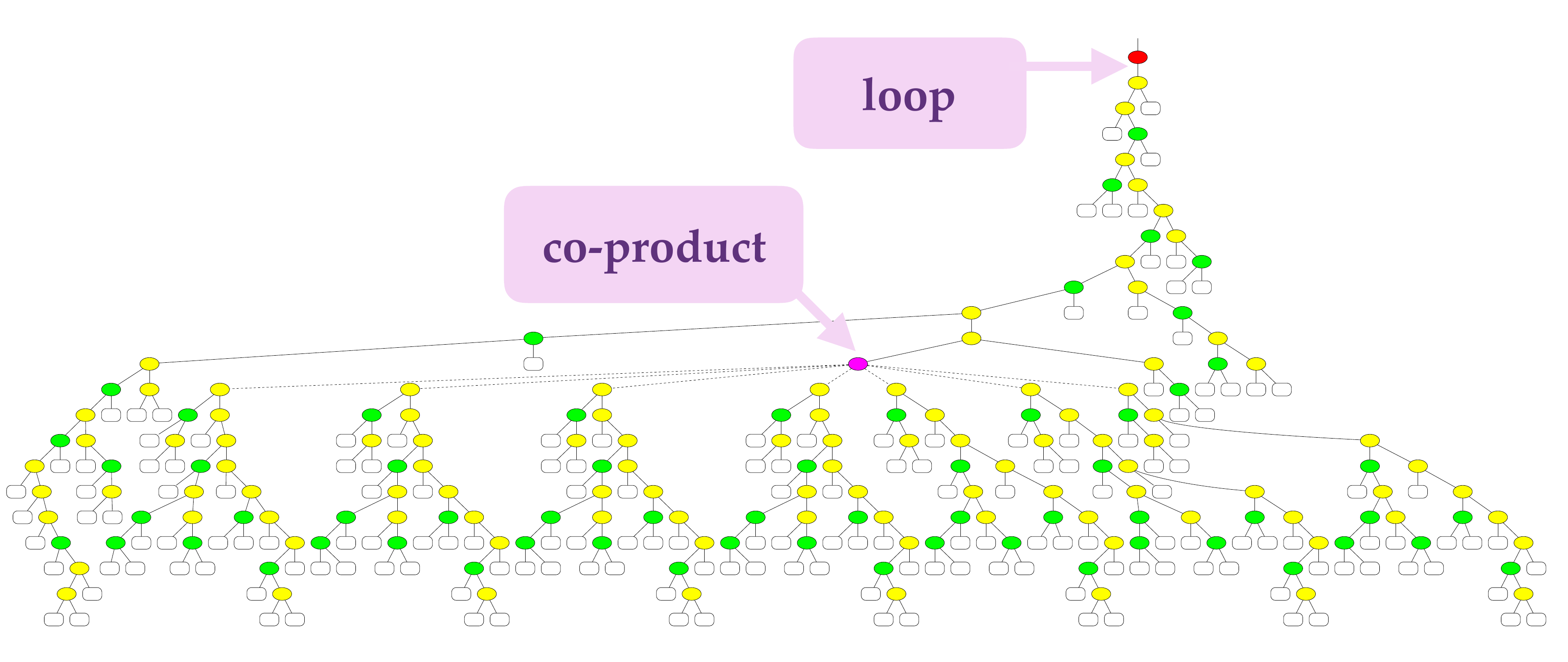}
\par\end{centering}
}

\subfloat[\label{fig:drone-endurance-missions}Relation between \F{endurance}
and \F{number of missions} and \R{cost} and \R{mass}. ]{\begin{centering}
\includegraphics[width=8.6cm]{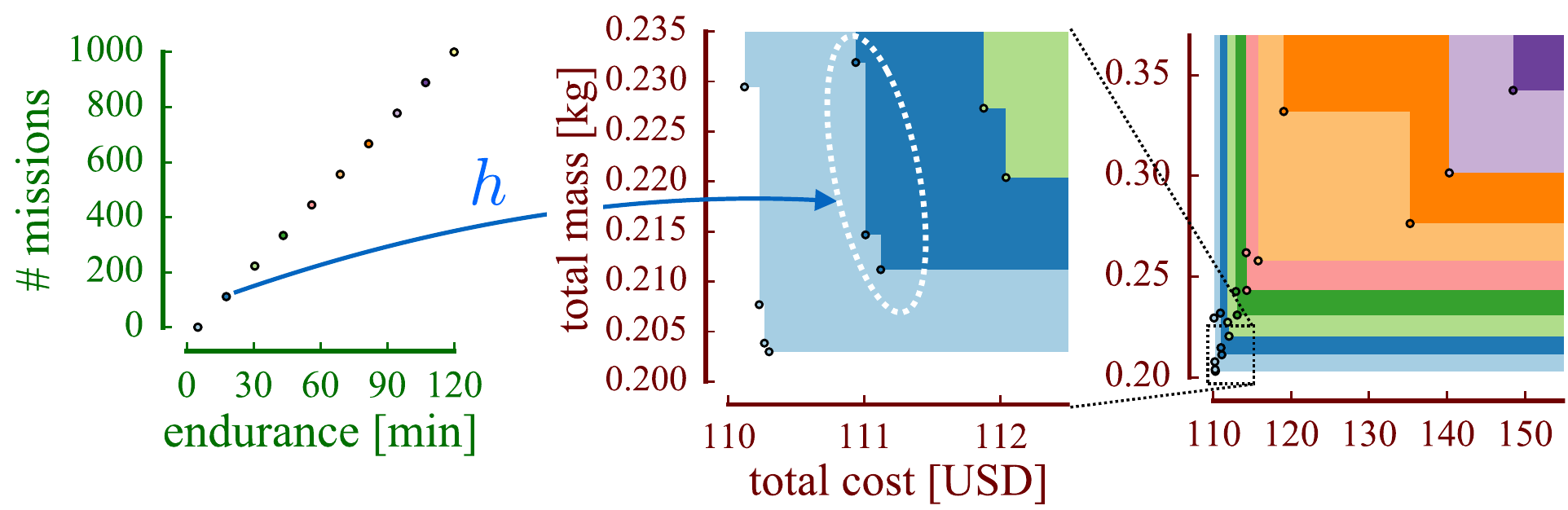}
\par\end{centering}
}

\subfloat[\label{fig:drone-endurance-payload}Relation between \F{endurance}
and \F{payload} and \R{cost} and \R{mass}. ]{\begin{centering}
\includegraphics[width=8.6cm]{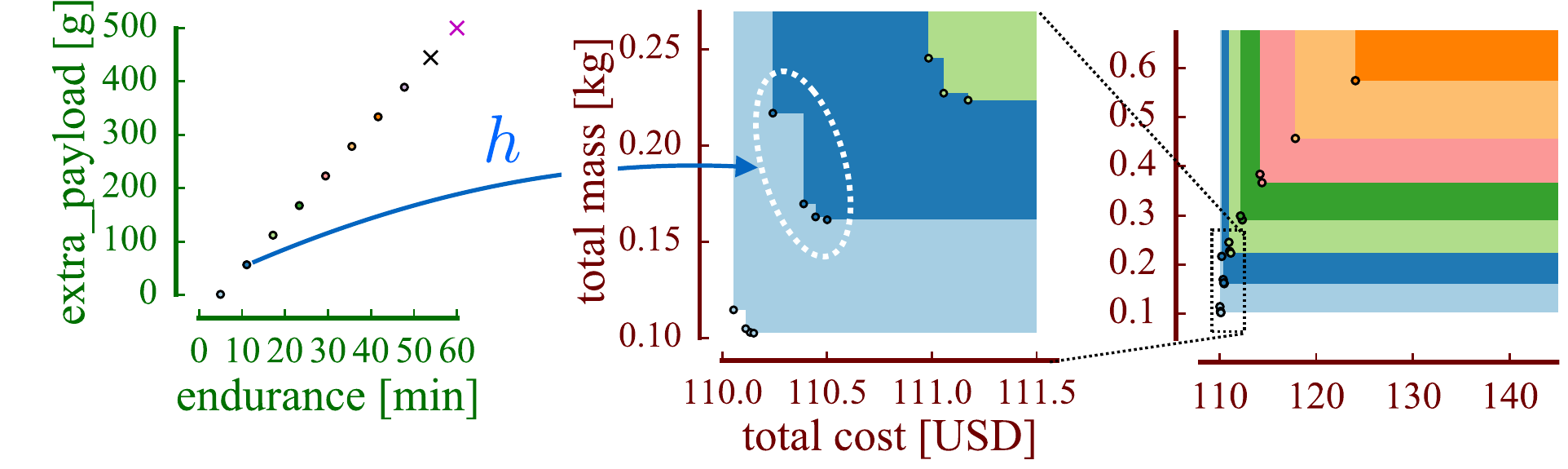}
\par\end{centering}
}

\caption{\label{fig:dronebigfig}In panel (c), the \F{payload} is fixed to
100 g and \F{extra power} is set to 1~W. In panel (d), the \F{number
of missions} is fixed to 400 and \F{extra power} is set to 1~W.
The last two values, marked with ``$\times$'', are not feasible.}
\end{figure}

\subsubsection*{From component to system co-design}

The rest of the section reuses the battery DP into a larger co-design
problem that considers the co-design of actuation together with energetics
for a drone~(\figref{drone_dia}). We will see that the decision
boundaries change dramatically, which shows that the optimal choices
for a component cannot be made in isolation from the system.

The functionality of the drone's subsystem considered~(\figref{drone})
are parametrized by \F{endurance}, \F{number of missions}, \F{extra
power} to be supplied, and \F{payload}. We model ``actuation''
as a design problem with functionality \F{lift {[}N{]}} and resources
\R{cost}, \R{mass} and \R{power}, and we assume that power is
a quadratic function of lift~(\figref{actuation}). Any other monotone
map could be used.

\captionsideleft{\label{fig:actuation}}{\includegraphics[scale=0.33]{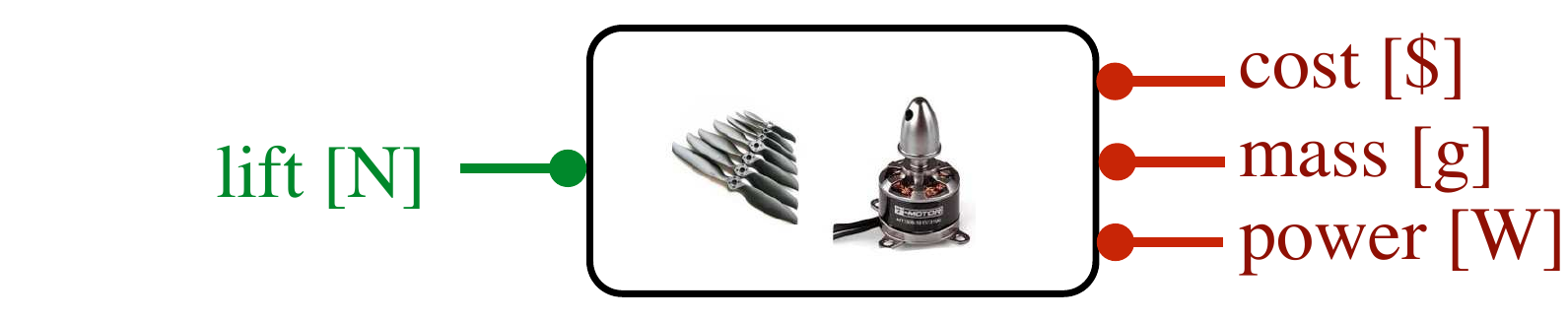}}

\noindent The co-design constraints that combine energetics and actuation
are{\small{}
\begin{align}
\text{battery }\F{\text{capacity}} & \geq\text{total power}\times\F{\text{endurance}},\label{eq:drone_eq_first}\\
\text{total power} & =\text{actuation }\R{\text{power}}+\F{\text{extra power}},\nonumber \\
\text{weight} & =\text{total mass}\times\text{gravity},\nonumber \\
\text{actuation}\,\F{\text{lift}} & \geq\text{weight},\nonumber \\
\text{labor cost} & =\text{cost per replacement}\times\text{battery }\R{\text{maintenance}},\nonumber \\
\R{\text{total cost}} & =\text{battery}\,\R{\text{cost}}+\text{actuation}\,\R{\text{cost}}+\text{labor cost},\nonumber \\
\R{\text{total mass}} & =\text{battery}\,\R{\text{mass}}+\text{actuation}\,\R{\text{mass}}+\F{\text{payload}}.\label{eq:drone_eq_last}
\end{align}
}The co-design graph contains recursive constraints: the power for
actuation depends on the total weight, which depends on the mass of
the battery, which depends on the capacity to be provided, which depends
on the power for actuation. The MCDPL code  for this model is shown
in \figref{drone_code}; it refers to the previously defined models
for ``batteries'' and ``actuation''.

The co-design problem is now complex enough that we can appreciate
the compositional properties of MCDPs to perform a qualitative analysis.
Looking at~\figref{drone}, we know that there is a monotone relation
between any pair of functionality and resources, such as \F{payload}
and \R{cost}, or \F{endurance} and \R{mass}, even without knowing
exactly what are the models for battery and actuation.

When fully expanded, the co-design graph (too large to display) contains
110 nodes and 110 edges. It is possible to remove all cycles by removing
only one edge (e.g., the $\R{\text{energy}}\leq\text{\F{\text{capacity}}}$
constraint), so the design complexity (\defref{design-complexity})
is equal to~$\posetwidth(\R{\overline{\mathbb{R}}_{+}})=1$. The
tree representation is shown in~\figref{drone_tree}. Because the
co-design diagram contains cycles, there is a~$\dploop$ operator
at the root of the tree, which implies we need to solve a least fixed
point problem. Because of the scale of the problem, it is not possible
to show the map~$\ftor$ explicitly, like we did in (\ref{eq:expression})
for the previous example. The least fixed point sequence converges
to 64 bits machine precision in 50-100 iterations.

To visualize the multidimensional relation
\[
\ftor\colon\F{\Rcomp\times\Rcomp^{\text{s}}\times\Rcomp^{\text{W}}\times\Rcomp^{\text{g}}}\rightarrow\R{\antichains(\Rcomp^{\text{kg}}\times\Rcomp^{\text{USD}})},
\]
we need to project across 2D slices. \figref{drone-endurance-missions}~shows
the relation when the functionality varies in a chain in the space
\F{endurance}/\F{missions}, and~\figref{drone-endurance-payload}
shows the results for a chain in the space \F{endurance}/\F{payload}.

Finally, \figref{drone_choice} shows the optimal choices of battery
technologies in the \F{endurance}/\F{missions} space, when one
wants to minimize \R{mass}, \R{cost}, or~$\left\langle \R{\text{mass}},\R{\text{cost}}\right\rangle $.
The decision boundaries are completely different from those in~\figref{mainbattery}.
This shows that it is not possible to optimize a component separately
from the rest of the system, if there are cycles in the co-design
diagram.

\begin{figure*}
\begin{centering}
\hfill\includegraphics[width=0.8\textwidth]{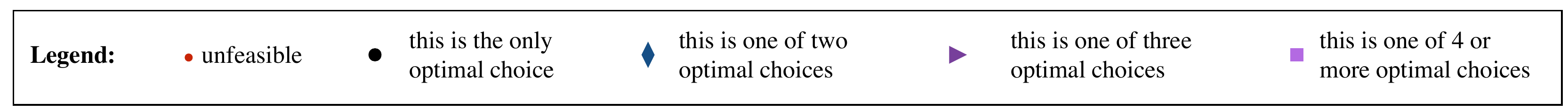}\rule{2cm}{0pt}
\par\end{centering}
\begin{centering}
\subfloat[\label{fig:min_joint_dia}]{\begin{centering}
\includegraphics[scale=0.33]{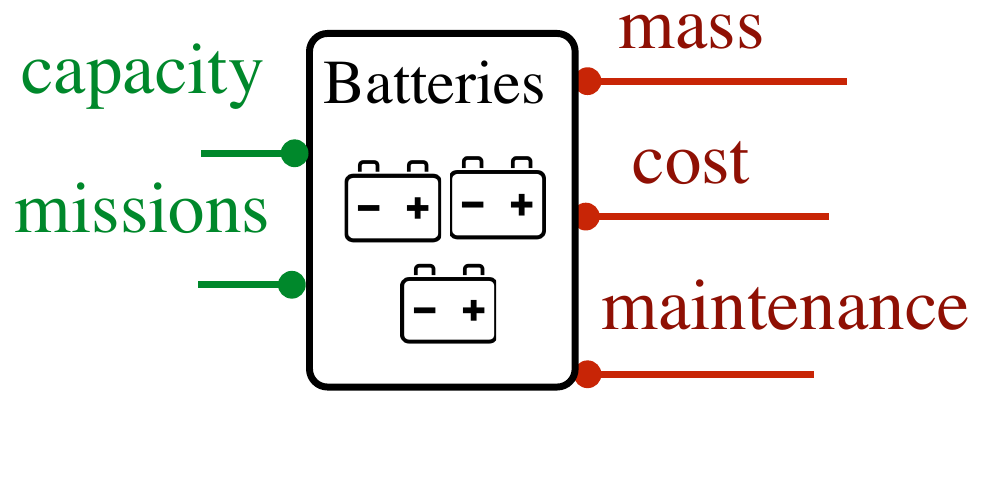}
\par\end{centering}
\vspace{5mm}

}\subfloat[\label{fig:min_joint}]{\begin{centering}
\includegraphics[scale=0.45]{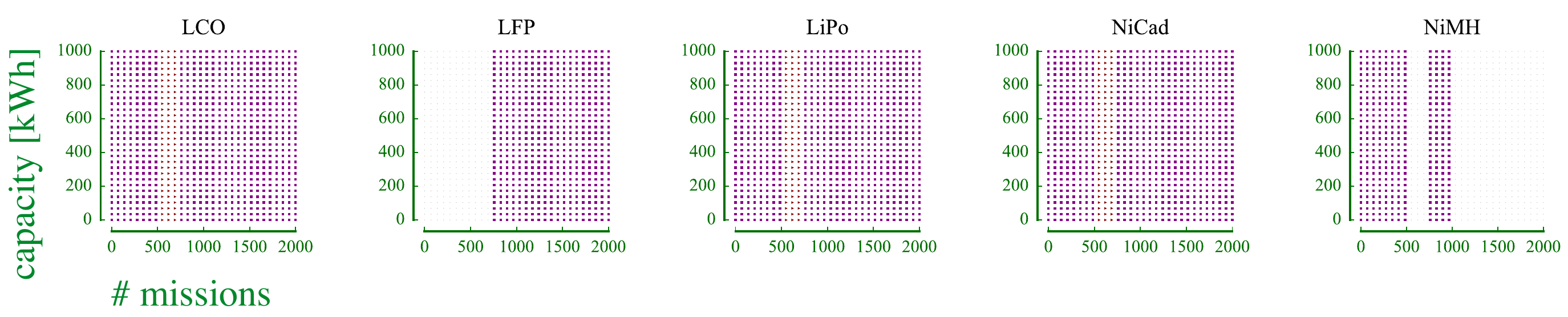}
\par\end{centering}
}
\par\end{centering}
\begin{centering}
\subfloat[\label{fig:min_cost_dia}]{\begin{centering}
\includegraphics[scale=0.33]{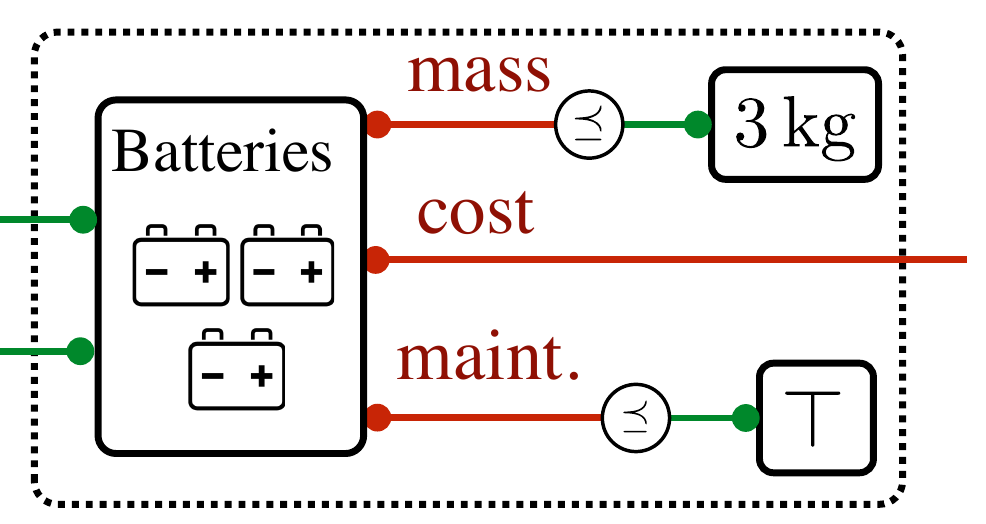}
\par\end{centering}
}\subfloat[\label{fig:min_cost}]{\begin{centering}
\includegraphics[scale=0.45]{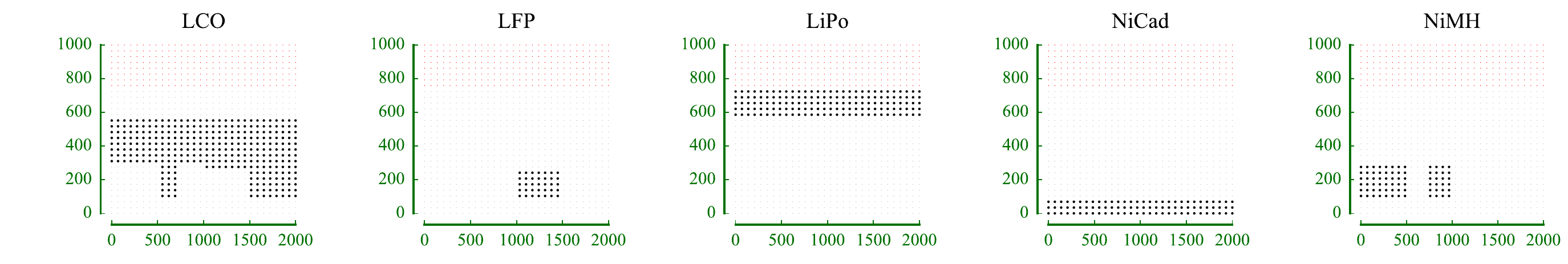}
\par\end{centering}
}
\par\end{centering}
\begin{centering}
\subfloat[]{\begin{centering}
\includegraphics[scale=0.33]{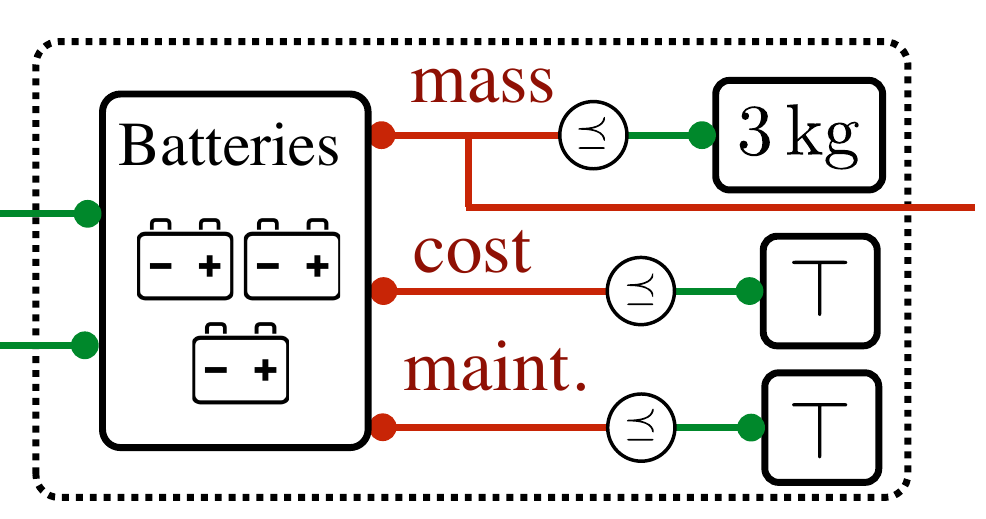}
\par\end{centering}
}\subfloat[\label{fig:min_mass}]{\begin{centering}
\includegraphics[scale=0.45]{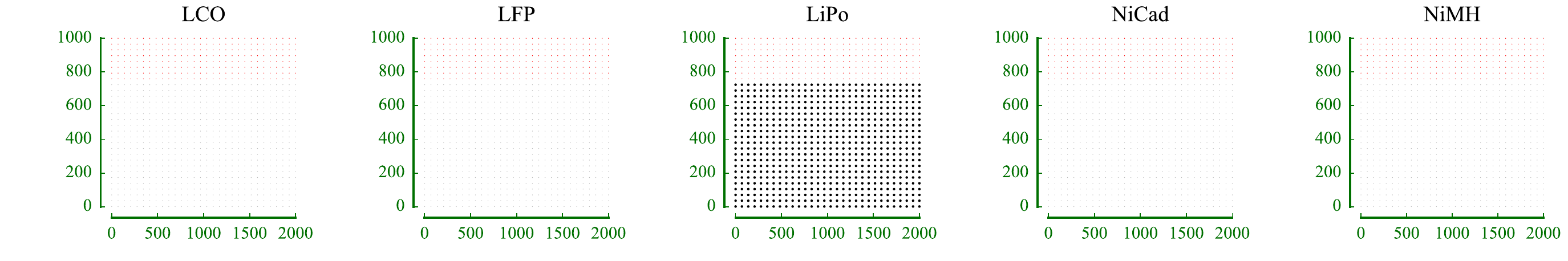}
\par\end{centering}
}
\par\end{centering}
\begin{centering}
\par\end{centering}
\begin{centering}
\subfloat[]{\begin{centering}
\includegraphics[scale=0.33]{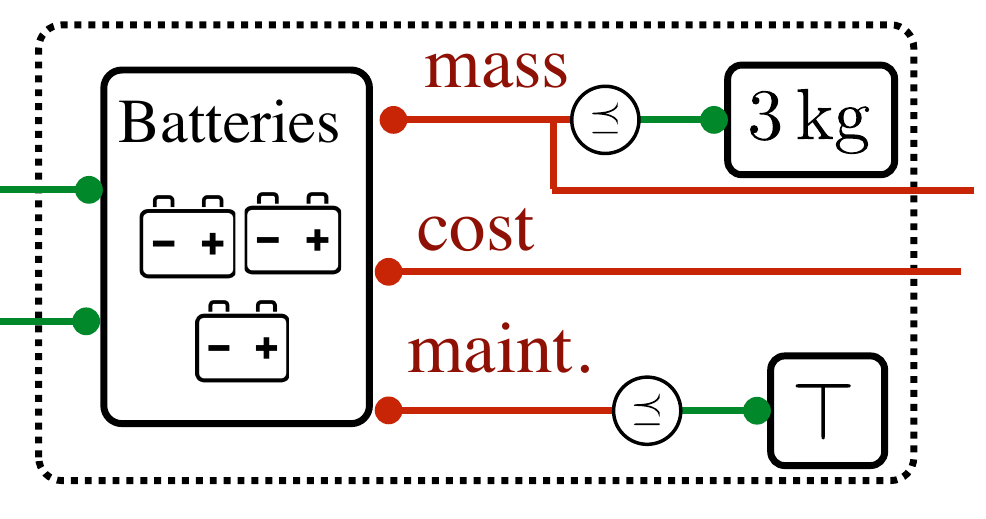}
\par\end{centering}
}\subfloat[\label{fig:min_mass_cost}]{\begin{centering}
\includegraphics[scale=0.45]{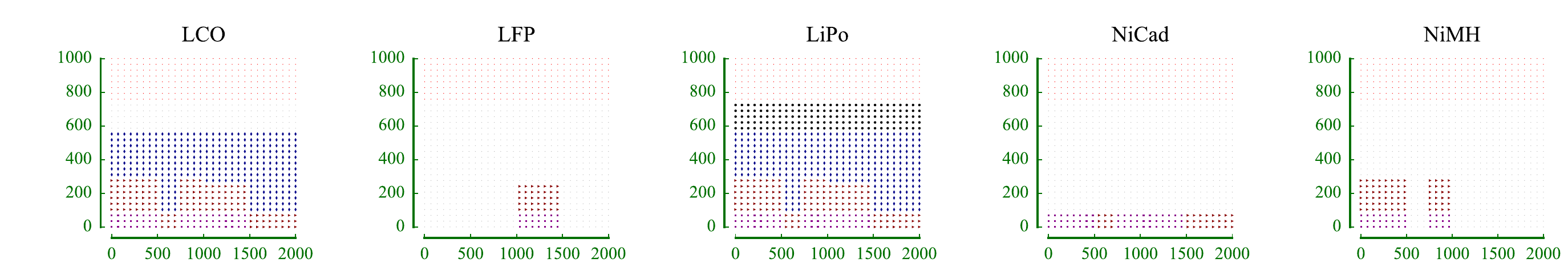}
\par\end{centering}
}
\par\end{centering}
\caption{\label{fig:mainbattery}This figure shows the optimal decision boundaries
for the different battery technologies for the design problem ``batteries'',
defined as the co-product of all battery technologies (\figref{batteriesbig}).
Each row shows a different variation of the problem. The first row
(panels \emph{a}\textendash \emph{b}) shows the case where the objective
function is the product of~$\left\langle \R{\text{mass}},\R{\text{cost}},\R{\text{maintenance}}\right\rangle $.
The shape of the symbols shows how many minimal solutions exists for
a particular value of the functionality~$\left\langle \F{\text{capacity}},\F{\text{missions}}\right\rangle $.
In this case, there are always three or more minimal solutions. The
second row (panels \emph{c}\textendash \emph{d}) shows the decision
boundaries when minimizing only the scalar objective~$\R{\text{cost}}$,
with a hard constraint on \R{mass}. The hard constraints makes some
combinations of the functionality infeasible. Note how the decision
boundaries are nonconvex, and how the formalisms allows to define
slight variations of the problem.}
\end{figure*}
\begin{figure*}
\begin{centering}
\subfloat[]{\begin{centering}
\includegraphics[scale=0.33]{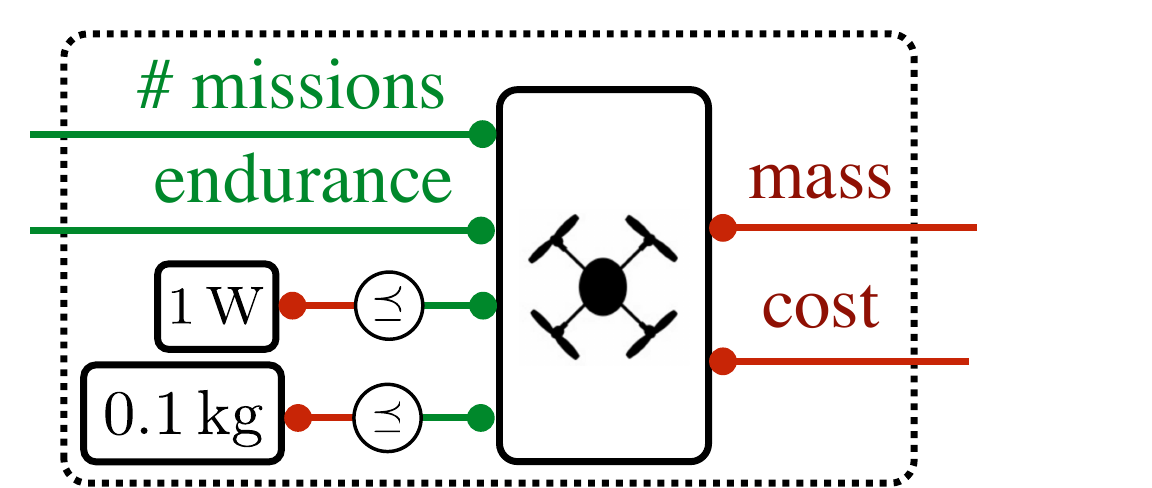}
\par\end{centering}
}\subfloat[]{\begin{centering}
\includegraphics[scale=0.45]{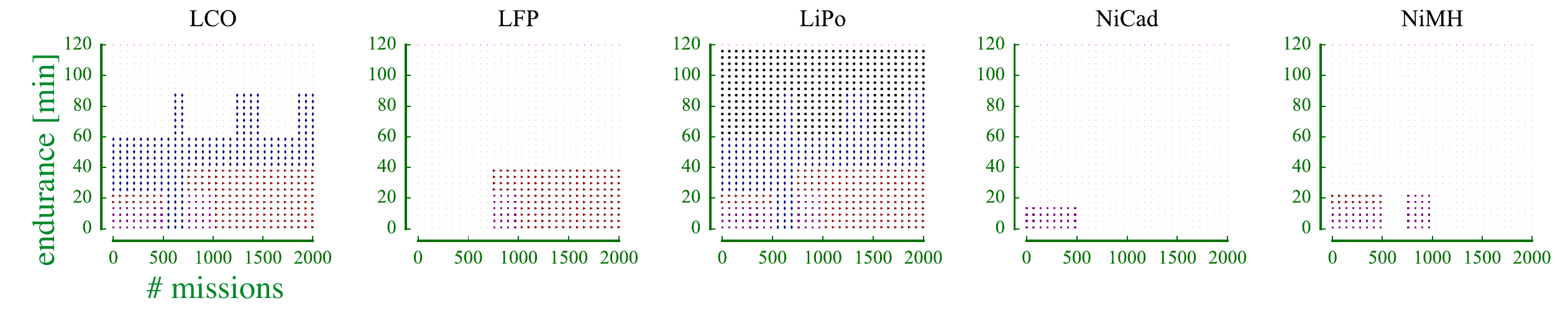}
\par\end{centering}
}
\par\end{centering}
\begin{centering}
\subfloat[]{\begin{centering}
\includegraphics[scale=0.33]{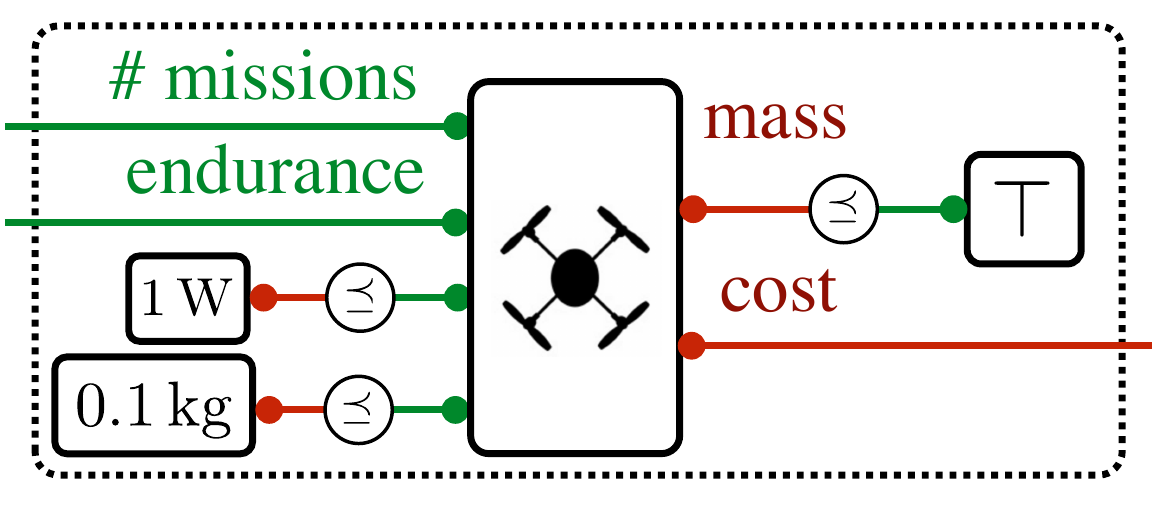}
\par\end{centering}
}\subfloat[]{\begin{centering}
\includegraphics[scale=0.45]{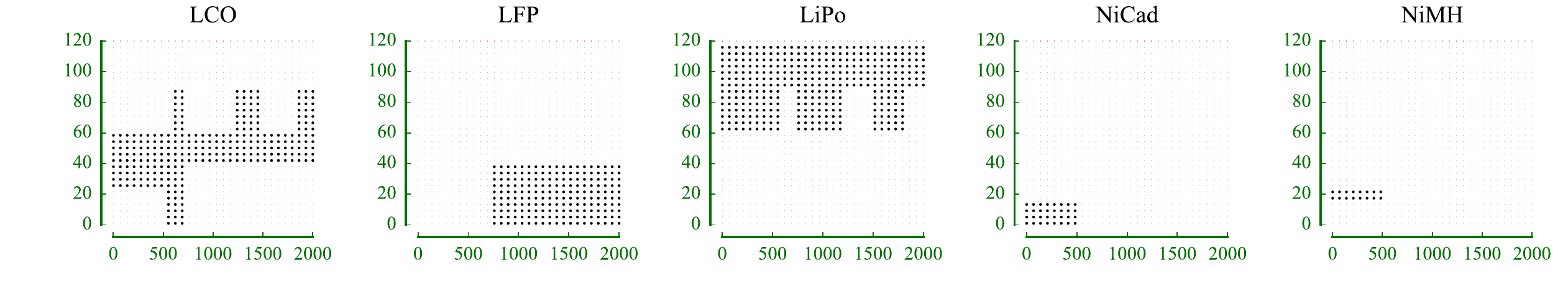}
\par\end{centering}
}
\par\end{centering}
\begin{centering}
\subfloat[]{\begin{centering}
\includegraphics[scale=0.33]{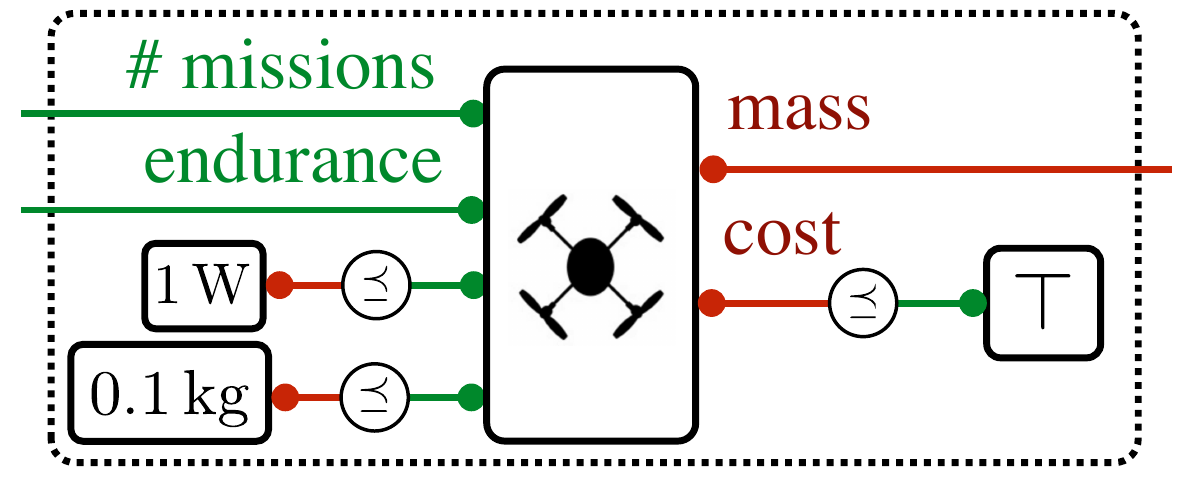}
\par\end{centering}
}\subfloat[]{\begin{centering}
\includegraphics[scale=0.45]{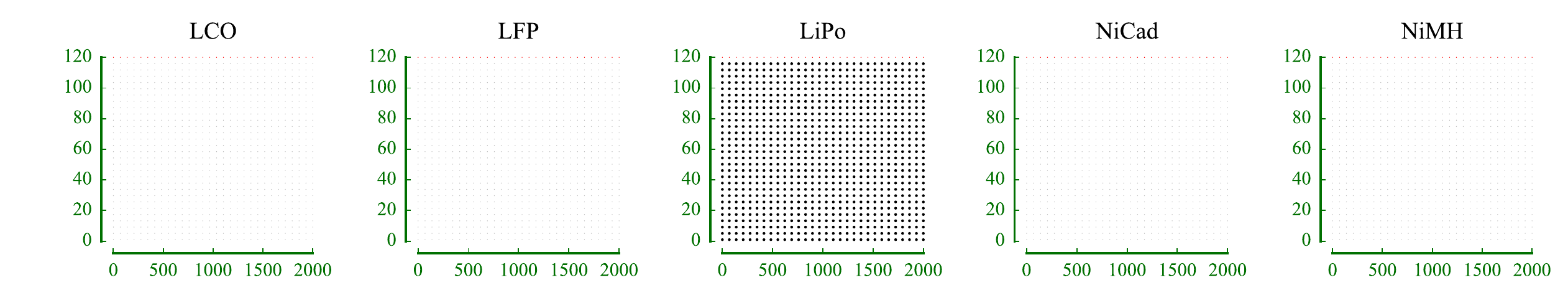}
\par\end{centering}
}
\par\end{centering}
\caption{\label{fig:drone_choice}This figure shows the decision boundaries
for the different values of battery technologies for the integrated
actuation-energetics model described in \figref{dronebigfig}. Please
see the caption of~\figref{mainbattery} for an explanation of the
symbols. Notice how in most cases the decision boundaries are different
than those in~\figref{mainbattery}: this is an example in which
one component cannot be optimized by itself without taking into account
the rest of the system.}
\end{figure*}

\section{Discussion of related work\label{sec:Discussion-of-related}}

\subsubsection*{Theory of design}

Modern engineering has long since recognized the two ideas of modularity
and hierarchical decomposition, yet there exists no general quantitative
theory of design that is applicable to different domains. Most of
the works in the theory of design literature study abstractions that
are meant to be useful for a human designer, rather than an automated
system. For example, a \emph{function structure }diagram~\cite[p. 32]{pahl07}
decomposes the function of a system in subsystems that exchange energy,
materials, and signals, but it is not a formal representation. From
the point of view of the theory of design, the contribution of this
work is that the \emph{design problem }abstraction developed, where
one takes functionality and resources as the interfaces for the subsystems,
is at the same time (1) mathematically precise; (2)~intuitive to
understand; and (3)~leads to tractable optimization problems.

This work also provides a clear answer to one long-standing issue
in the theory of design: the inter-dependence between subsystems,
(i.e., cycles in the co-design graph). Consider, as an example, Suh's
theory of \emph{axiomatic design~}\cite{suh01}, in which the first
``axiom'' is to keep the design requirements orthogonal (i.e., do
not introduce cycles). This work shows that it is possible to deal
elegantly with recursive constraints.

\subsubsection*{Partial Order Programming}

In ``Partial Order Programming''~\cite{parkerjr89partial} Parker
studies a hierarchy of optimization problems that can be represented
as a set of partial order constraints. The main interest is to study
partial order constraints as the means to define the semantics of
programming languages and for declarative approaches to knowledge
representation.

In Parker's hierarchy, MCDPs are most related to the class of problems
called \emph{continuous monotone partial order program} (CMPOP). CMPOPs
are the least specific class of problems studied by Parker for which
it is possible to obtain existence results and a systematic solution
procedure. MCDPs subsume CMPOPs. A CMPOP is an MCDP where: 1)~All
functionality and resources belong to the same poset~$\posA$ ($\funsp_{v}=\ressp_{v}=\posA$);
2)~Each functionality/resource relation is a simple map, rather than
a multi-valued relation; 3)~There are no dangling functionality edges
in the co-design diagram ($\funsp=\One$).

In a MCDP, each DP is described by a \scottcontinuous map~$\ftor:\funsp\rightarrow\Aressp$
which maps one functionality to a minimal set of resources. By contrast,
in a CMPOP an operator corresponds to a \scottcontinuous map~$\ftor:\funsp\rightarrow\ressp$.
The consequence is that a CMPOP has a unique solution~\cite[Theorem 8]{parkerjr89partial},
while an MCDP can have multiple minimal solutions (or none at all).

\subsubsection*{Abstract interpretation}

The methods used from order theory are the same used in the field
of \emph{abstract interpretation}~\cite{cousot14abstract}. In
that field, the least fixed point semantics arises from problems such
as computing the sets of reachable states. Given a starting state,
one is interested to find a subset of states that is closed under
the dynamics (i.e. a fixed point), and that is the smallest that contains
the given initial state (i.e. a \emph{least} fixed point). Reachability
and other properties lead to considering systems of equation of the
form
\begin{align}
x_{i} & =\varphi_{i}(x_{1},\,\dots\,,x_{i},\,\dots\,,x_{n}),\quad i=1,\dots,n,\label{eq:ai}
\end{align}
where each value of the index $i$ is for a control point of the program,
and~$\varphi_{i}$ are \scottcontinuous functions on the abstract
lattice that represents the properties of the program. In the simplest
case, each~$x_{i}$ could represent intervals that a variable could
assume at step~$i$. By applying the iterations, one finds which
properties can be inferred to be valid at each step.

We can repeat the same considerations we did for Parker's CMPOPs vs
MCDPs. In particular, in MCDP we deal with multi-valued maps, and
there is more than one solution.

In the field of abstract interpretation much work has been done towards
optimizing the rate of convergence. The order of evaluation in~\eqref{ai}
does not matter. Asynchronous and ``chaotic'' iterations were proposed
early~\cite{cousot77asynchronous} and are still object of investigation~\cite{bourdoncleefficient}.
To speed up convergence, the so called ``widening'' and ``narrowing''
operators are used~\cite{cortesi11widening}. The ideas of chaotic
iteration, widening, narrowing, are not immediately applicable to
MCDPs, but it is a promising research direction.

\section{Conclusions}

This paper described a mathematical theory of co-design, in which
the primitive objects are design problems, defined axiomatically as
relations between functionality, resources, and implementation. Monotone
Co-Design Problems (MCDPs) are the interconnection of design problems
whose functionality and resources are complete partial orders and
the relation is \scottcontinuous. These were shown to be non-convex,
non-differentiable, and not even defined on continuous spaces. Yet,
MCDPs have a systematic solution procedure in the form of a least
fixed point iteration, whose complexity depends on a measure of interdependence
between design problems\textemdash it is easier to design a system
composed of subsystems that are only loosely coupled. Based on this
theory, it is possible to create modeling languages and optimization
tools that allow the user to quickly define and solve multi-objective
design problems in heterogeneous domains. 

\textbf{\small{}}{\small \par}

\footnotesize

\setcounter{page}{1}

\printbibliography

\end{document}